\newif\ifcomments
\newif\ifchanges
\begin{document}

\newcommand  {\myclass} [1]  {\ensuremath{\textsc{#1}}}

\newcommand{\StaClass}[1]{\myclass{#1}\xspace}

\newcommand{\DynClass}[1]{\myclass{Dyn#1}\xspace}
\newcommand{\dDynClass}[1]{\myclass{$\Delta$-Dyn#1}\xspace}

\newcommand  {\myproblem} [1] {\textsc{#1}}

\newcommand  {\problemdescr} [3] {
    \vspace{3mm}
    \def\Name{#1}
    \def\Input{#2}
    \def\Question{#3}
     \hspace{5mm}\begin{tabular}{r p{\columnWidth}r}      \textit{Query:} & \myproblem{\Name} \\
      \textit{Input:} & \Input \\
      \textit{Question:} & \Question
     \end{tabular}
    \vspace{3mm}
    }

\newcommand  {\dynproblemdescr} [4] {
    \vspace{3mm}
    \def\Name{#1}
    \def\Input{#2}
    \def\Updates{#3}  
    \def\Question{#4}
    \hspace{5mm}\begin{tabular}{r p{\columnWidth}r}      \textit{Query:} & \myproblem{\Name} \\
      \textit{Input:} & \Input \\
      \textit{Question:} & \Question
    \end{tabular}
    \vspace{3mm}
}
\newcommand{\dynProbDescr}[4]{\dynproblemdescr{#1}{#2}{#3}{#4}}

\newcommand  {\problem}[1] {\myproblem{#1}}

\newcommand{\dynProb}[1] {\myproblem{Dyn(#1)}}

\newcommand{\class}{\calC}

\newcommand  {\TIME}    {\myclass{TIME}}
\newcommand  {\DTIME}   {\myclass{DTIME}}
\newcommand  {\NTIME}   {\myclass{NTIME}}
\newcommand  {\ATIME}   {\myclass{ATIME}}
\newcommand  {\SPACE}   {\myclass{SPACE}}
\newcommand  {\DSPACE}   {\myclass{DSPACE}}
\newcommand  {\NSPACE}  {\myclass{NSPACE}}
\newcommand  {\coNSPACE}        {\myclass{coNSPACE}}

\newcommand     {\LOGCFL}     {\myclass{LOGCFL}}
\newcommand     {\LOGDCFL}     {\myclass{LOGDCFL}}
\newcommand     {\LOGSPACE}     {\myclass{LOGSPACE}}
\newcommand     {\NLOGSPACE}     {\myclass{NLOGSPACE}}
\newcommand     {\classL}   {\myclass{L}}
\newcommand     {\NL}   {\myclass{NL}}
\newcommand     {\coNL}   {\myclass{coNL}}
\renewcommand   {\P}    {\myclass{P}}
\newcommand     {\myP}    {\myclass{P}}
\newcommand     {\PTIME}    {\myclass{PTIME}}
\newcommand     {\NP}   {\myclass{NP}}
\newcommand     {\NPC}   {\myclass{NPC}}
\newcommand     {\PH}   {\myclass{PH}}
\newcommand     {\coNP} {\myclass{coNP}}
\newcommand     {\NPSPACE}      {\myclass{NPSPACE}}
\newcommand     {\PSPACE}       {\myclass{PSPACE}}
\newcommand     {\IP}   {\myclass{IP}}
\newcommand     {\POLYLOGSPACE} {\myclass{POLYLOGSPACE}}
\newcommand     {\DET}  {\myclass{DET}}
\newcommand     {\EXP}  {\myclass{EXP}}
\newcommand     {\NEXP}  {\myclass{NEXP}}
\newcommand     {\EXPTIME}  {\myclass{EXPTIME}}
\newcommand     {\TWOEXPTIME}  {\myclass{2-EXPTIME}}
\newcommand     {\TWOEXP}  {\myclass{2-EXP}}
\newcommand     {\NEXPTIME}  {\myclass{NEXPTIME}}
\newcommand     {\coNEXPTIME}  {\myclass{coNEXPTIME}}
\newcommand     {\EXPSPACE}  {\myclass{EXPSPACE}}
\newcommand     {\RP}   {\myclass{RP}}
\newcommand     {\RL}   {\myclass{RL}}
\newcommand     {\coRP} {\myclass{coRP}}
\newcommand     {\ZPP}  {\myclass{ZPP}}
\newcommand     {\BPP}  {\myclass{BPP}}
\newcommand     {\PP}   {\myclass{PP}}
\newcommand     {\NC}   {\myclass{NC}}
\newcommand     {\SAC}   {\myclass{SAC}}
\newcommand     {\ACC}   {\myclass{ACC}}
\newcommand     {\tc}   {\myclass{TC}}   \newcommand     {\PPoly}{\myclass{\mbox{P}/\mbox{Poly}}} 
\newcommand     {\FOarb}   {\myclass{FO(arb)}}

\newcommand     {\NLIN}   {\myclass{NLIN}}
\newcommand     {\DLIN}   {\myclass{DLIN}}

\newcommand  {\APTIME}   {\myclass{APTIME}}
\newcommand  {\ALOGSPACE}   {\myclass{ALOGSPACE}}

\newcommand{\FO}{\StaClass{FO}}
\newcommand{\MSO}[1][\quant]{\StaClass{MSO}}
\newcommand{\EMSO}{\StaClass{$\exists$MSO}}
\newcommand{\QFO}[1][\quant]{\StaClass{\ensuremath{#1}FO}}
\newcommand{\cQFO}[1][\quant]{\StaClass{\ensuremath{\overline{#1}}FO}}
\newcommand{\EFO}{\QFO[\exists^*]}
\newcommand{\AFO}{\QFO[\forall^*]}
\newcommand{\AEFO}{\StaClass{$\forall/\exists$FO}}
\newcommand{\CQ}[1][]{\StaClass{CQ}}
\newcommand{\UCQ}[1][]{\StaClass{UCQ}}
\newcommand{\CQneg}[1][]{\StaClass{CQ\ensuremath{^{\mneg}}}}
\newcommand{\UCQneg}[1][]{\StaClass{UCQ\ensuremath{^{\mneg}}}}
\newcommand{\Prop}{\StaClass{Prop}}
\newcommand{\QF}{\StaClass{QF}}
\newcommand{\PropCQ}{\StaClass{PropCQ}}
\newcommand{\PropUCQ}{\StaClass{PropUCQ}}
\newcommand{\PropCQneg}{\StaClass{PropCQ{\ensuremath{^{\mneg}}}}}
\newcommand{\PropUCQneg}{\StaClass{PropUCQ{\ensuremath{^{\mneg}}}}}

\newcommand{\mneg}{\neg} 
\newcommand{\DynTC}{\DynClass{TC}}

\newcommand{\DynProp}{\DynClass{Prop}}
\newcommand{\DynProj}{\DynClass{Projections}}
\newcommand{\DynQF}{\DynClass{QF}}
\newcommand{\DynFO}{\DynClass{FO}}
\newcommand{\DynFOpos}{\DynClass{FO$^{\wedge, \vee}$}}
\newcommand{\DynFOand}{\DynClass{FO$^{\wedge}$}}

\newcommand{\DynC}{\DynClass{$\class$}}
\newcommand{\DynUCQ}{\DynClass{UCQ}}
\newcommand{\DynCQ}{\DynClass{CQ}}
\newcommand{\DynUCQneg}{\DynClass{UCQ$^\mneg$}}
\newcommand{\DynCQneg}{\DynClass{CQ$^\mneg$}}
\newcommand{\DynCQPM}{\DynCQneg}
\newcommand{\DyncQFO}{\DynClass{$\cquant$FO}}

\newcommand{\DynQFO}[1][\quant]{\DynClass{\QFO[#1]}}
\newcommand{\DynEFO}{\DynQFO[\exists^*]}
\newcommand{\DynAFO}{\DynQFO[\forall^*]}

\newcommand{\DynAEFO}{\DynClass{$\forall/\exists$FO}}
\newcommand{\DynAND}{\DynClass{PropCQ}}
\newcommand{\DynAnd}{\DynAND}
\newcommand{\DynPropCQ}{\DynAND}

\newcommand{\DynPropPos}{\DynClass{PropUCQ}}
\newcommand{\DynPropAO}{\DynPropPos}
\newcommand{\DynPropUCQ}{\DynPropPos}

\newcommand{\DynAndNeg}{\DynClass{PropCQ{\ensuremath{^{\mneg}}}}}
\newcommand{\DynPropCQneg}{\DynAndNeg}
\newcommand{\DynPropUCQneg}{\DynClass{PropUCQ{\ensuremath{^{\mneg}}}}}

\newcommand{\DynOrNeg}{\DynClass{Or{\ensuremath{^{\mneg}}}}}

\newcommand{\dDynProp}{\dDynClass{Prop}}
\newcommand{\dDynPropPos}{\dDynClass{PropUCQ}}
\newcommand{\dDynAndOr}{\dDynPropPos}
\newcommand{\dDynQF}{\dDynClass{QF}}
\newcommand{\dDynFO}{\dDynClass{FO}}
\newcommand{\dDynFOpos}{\dDynClass{FO$^{\wedge, \vee}$}}
\newcommand{\dDynFOand}{\dDynClass{FO$^{\wedge}$}}
\newcommand{\dDynC}{\dDynClass{$\class$}}
\newcommand{\dDynUCQ}{\dDynClass{UCQ}}
\newcommand{\dDynCQ}{\dDynClass{CQ}}
\newcommand{\dDynUCQneg}{\dDynClass{UCQ$^\mneg$}}
\newcommand{\dDynCQneg}{\dDynClass{CQ$^\mneg$}}
\newcommand{\dDynCQPM}{\dDynCQneg}

\newcommand{\dDynQFO}[1][\quant]{\dDynClass{\QFO[#1]}}
\newcommand{\dDynEFO}{\dDynQFO[\exists^*]}
\newcommand{\dDynAFO}{\dDynQFO[\forall^*]}

\newcommand{\dDynAEFO}{\dDynClass{$\forall/\exists$FO}}
\newcommand{\dDynAND}{\dDynPropCQ}
\newcommand{\dDynAnd}{\dDynAND}
\newcommand{\dDynConj}{\dDynClass{Conj}}
\newcommand{\dDynPropAO}{\dDynClass{Prop$^{\wedge, \vee}$}}
\newcommand{\dDyncQFO}{\dDynClass{$\cquant$FO}}
\newcommand{\dDynPropUCQneg}{\dDynClass{PropUCQ{\ensuremath{^{\mneg}}}}}
\newcommand{\dDynPropUCQ}{\dDynClass{PropUCQ}}
\newcommand{\dDynPropCQneg}{\dDynClass{PropCQ{\ensuremath{^{\mneg}}}}}
\newcommand{\dDynPropCQ}{\dDynClass{PropCQ}}

\newcommand{\equalcardinality}{\textsc{EqualCardinality}\xspace}\newcommand{\reach}{\textsc{Reach}\xspace}\newcommand{\altreach}{\textsc{Alt-Reach}\xspace}
\newcommand{\stgraph}{$s$-$t$-graph\xspace}
\newcommand{\stgraphs}{$s$-$t$-graphs\xspace}
\newcommand{\reachQ}{\textsc{Reach}\xspace}
\newcommand{\streachQ}{\textsc{$s$-$t$-Reach}\xspace}
\newcommand{\streachabilityquery}{$s$-$t$-reachability query\xspace}
\newcommand{\stTwoPath}{\problem{$s$-$t$-Two\-Path}\xspace}
\newcommand{\sTwoPath}{\problem{$s$-Two\-Path}\xspace}
\newcommand{\clique}[1]{\problem{$#1$-Clique}\xspace}
\newcommand{\colorability}[1]{\problem{$#1$-Col}\xspace}
\newcommand{\streach}{$s$-$t$-Reach}
\newcommand{\streachp}{\problem{\streach}\xspace}
\newcommand{\layeredstreach}[1]{#1-Layered-$s$-$t$-Reach}
\newcommand{\layeredstreachp}[1]{\problem{\layeredstreach{#1}}\xspace}

\newcommand{\dynClique}[1]{\dynProb{$#1$-Clique}\xspace}
\newcommand{\dynColorability}[1]{\dynProb{$#1$-Col}\xspace}

\newcommand{\probEqualCardinalityText}{EqualCardinality}
\newcommand{\EqualCardinality}{\problem{\probEqualCardinalityText}\xspace}
\newcommand{\EqualCardinalityDescr}{\problemdescr{\probEqualCardinalityText}{Unary relations $A$ and $B$}{Do $A$ and $B$ have the same cardinality?\xspace}}

\newcommand{\dynEqualCardinality}{\dynProb{\probEqualCardinalityText}\xspace}
\newcommand{\dynEqualCardinalityDescr}{\dynProbDescr{\probEqualCardinalityText}{Unary relations $A$ and $B$}{Element insertions and deletions}{Do $A$ and $B$ have the same cardinality?\xspace}}

\newcommand{\dynReachQ}{\dynProb{\textsc{Reach}}\xspace}
\newcommand{\dynstReachQ}{\dynProb{\textsc{$s$-$t$-Reach}}\xspace}

\newcommand{\dynstTwoPath}{\dynProb{\stTwoPath}\xspace}
\newcommand{\dynsTwoPath}{\dynProb{\sTwoPath}\xspace}

\newcommand{\dynlayeredstreach}[1]{Dyn-#1-Layered-$s$-$t$-Reach}
\newcommand{\dynlayeredstreachp}[1]{\problem{\dynlayeredstreach{#1}}\xspace}

\newcommand{\mtext}[1]{\textsc{#1}}

\providecommand {\calA}      {{\mathcal A}\xspace}
\providecommand {\calB}      {{\mathcal B}\xspace}
\providecommand {\calC}      {{\mathcal C}\xspace}
\providecommand {\calD}      {{\mathcal D}\xspace}
\providecommand {\calE}      {{\mathcal E}\xspace}
\providecommand {\calF}      {{\mathcal F}\xspace}
\providecommand {\calG}      {{\mathcal G}\xspace}
\providecommand {\calH}      {{\mathcal H}\xspace}
\providecommand {\calK}      {{\mathcal K}\xspace}
\providecommand {\calI}      {{\mathcal I}\xspace}
\providecommand {\calL}      {{\mathcal L}\xspace}
\providecommand {\calM}      {{\mathcal M}\xspace}
\providecommand {\calN}      {{\mathcal N}\xspace}
\providecommand {\calO}      {{\mathcal O}\xspace}
\providecommand {\calP}      {{\mathcal P}\xspace}
\providecommand {\calQ}      {{\mathcal Q}\xspace}
\providecommand {\calR}      {{\mathcal R}\xspace}
\providecommand {\calS}      {{\mathcal S}\xspace}
\providecommand {\calT}      {{\mathcal T}\xspace}
\providecommand {\calU}      {{\mathcal U}\xspace}
\providecommand {\calV}      {{\mathcal V}\xspace}
\providecommand {\calX}      {{\mathcal X}\xspace}
\providecommand {\calZ}      {{\mathcal Z}\xspace}

\renewcommand{\hat}[1]{\widehat{#1}}

\newcommand{\Ah}{\hat{A}}
\newcommand{\Bh}{\hat{B}}
\newcommand{\Ch}{\hat{C}}
\newcommand{\Dh}{\hat{D}}
\newcommand{\Eh}{\hat{E}}
\newcommand{\Fh}{\hat{F}}
\newcommand{\Gh}{\hat{G}}
\newcommand{\Hh}{\hat{H}}
\newcommand{\Ih}{\hat{I}}
\newcommand{\Jh}{\hat{J}}
\newcommand{\Kh}{\hat{K}}
\newcommand{\Lh}{\hat{L}}
\newcommand{\Mh}{\hat{M}}
\newcommand{\Nh}{\hat{N}}
\newcommand{\Oh}{\hat{O}}
\newcommand{\Ph}{\hat{P}}
\newcommand{\Qh}{\hat{Q}}
\newcommand{\Rh}{\hat{R}}
\newcommand{\Sh}{\hat{S}}
\newcommand{\Th}{\hat{T}}
\newcommand{\Uh}{\hat{U}}
\newcommand{\Vh}{\hat{V}}
\newcommand{\Wh}{\hat{W}}
\newcommand{\Xh}{\hat{X}}
\newcommand{\Yh}{\hat{Y}}
\newcommand{\Zh}{\hat{Z}}

\newcommand{\Psih}{\hat{\Psi}}
\newcommand{\psih}{\hat{\psi}}
\newcommand{\Phih}{\hat{\Phi}}
\newcommand{\phih}{\hat{\phi}}
\newcommand{\varphih}{\hat{\varphi}}

\newcommand{\N}{\ensuremath{\mathbb{N}}}

\newcommand{\Q}{\ensuremath{\mathbb{Q}}}

\newcommand{\R}{\ensuremath{\mathbb{R}}}

\newcommand{\perm}{\ensuremath{\pi}}

\newcommand{\bigO}{\ensuremath{O}}
\newcommand{\smallO}{\ensuremath{o}}

\newcommand{\allsubsets}[2]{[#1]^{#2}}

\newcommand{\pvec}[1]{\vec{#1}\mkern2mu\vphantom{#1}}

\newcommand{\kexp}[2]{\ensuremath{\exp^{#1}\hspace{-0.5mm}(#2)}}

\newcommand{\tower}[2]{\ensuremath{\text{tow}_{#1}\hspace{-0.5mm}(#2)}}

\newcommand{\klog}[2]{\ensuremath{\log^{#1}{\hspace{-0.5mm}(#2)}}}

\newcommand{\subs}{\sqsubseteq}

\newcommand{\disjointunion}{\uplus}

\providecommand{\power}[1]{\ensuremath{\calP(#1)}\xspace}

\newcommand{\restrict}[2]{#1\mspace{-3mu}\upharpoonright \mspace{-3mu}#2}

\newcommand{\isomorph}{\simeq}
\newcommand{\isomorphVia}[1]{\isomorph_{#1}}
\newcommand{\swap}[2]{id{[#1, #2]}}

\newcommand{\df}{\ensuremath{\mathrel{\smash{\stackrel{\scriptscriptstyle{
    \text{def}}}{=}}}} \;}

\newcommand{\refeq}[1]{\ensuremath{{\stackrel{\scriptstyle{
    \text{#1}}}{=}}}}

\newcommand{\longlongeq}{=\joinrel=\joinrel=\joinrel=}
\newcommand{\longeq}{=\joinrel=\joinrel=}
\newcommand{\reflongeq}[1]{\ensuremath{{\stackrel{\scriptstyle{
    \text{#1}}}{\longeq}}}}

\newcommand{\ramseyw}[1]{\ensuremath{R_{#1}}}

\makeatletter \newcommand{\auxramsey}[4]{
  \@ifmtarg{#1}{
    \@ifmtarg{#4}{
      \ensuremath{R(#2; #3)}
    }{
      \ensuremath{R^#4(#2; #3)}
    }
   }{
    \@ifmtarg{#4}{
      \ensuremath{R_{#1}(#2; #3)}
    }{
      \ensuremath{R^#4_{#1}(#2; #3)}
    }
  }
}

\newcommand{\ramsey}[3]{\auxramsey{#1}{#2}{#3}{}}
\newcommand{\homramsey}[2]{\auxramsey{}{#2}{#1}{\text{hom}}}
\newcommand{\mfoldramsey}[3]{\auxramsey{}{#2}{#1}{#3}}

\newcommand{\norder}{\prec}

\newcommand{\col}{col}

\newcommand{\property}{($\ast$)}

\newcommand{\subseq}{\sqsubseteq}

\newcommand{\derive}{\Rightarrow}
\newcommand{\rmapsto}{\rightarrow}

\newcommand{\lpath}[1][]{\ensuremath{\mathrel{\smash{\stackrel{\scriptscriptstyle{
    #1}}{\rightsquigarrow}}}}}

  \newtheorem{theorem}{Theorem}[section]
   \newtheorem{lemma}[theorem]{Lemma}
   \newtheorem{corollary}[theorem]{Corollary}
   \newtheorem{proposition}[theorem]{Proposition}
   \newtheorem{claim}{Claim}

    \newtheorem{goal}{Goal}
    \theoremstyle{definition}
    \newtheorem{definition}{Definition}
    \newtheorem{example}{Example}
    
    \newtheorem {openquestion}{Open question}
    \newtheorem {question}{Question}
    \newtheorem {mainquestion}{Main question}

    \newenvironment{proofsketch}{\noindent\textsc{Proof sketch.}\enspace}{\qed \\}
    \newenvironment{proofidea}{\noindent\textsc{Proof idea.}\enspace}{\qed \\}

    \newenvironment{proofof}[1]{\vspace{2mm}\noindent\emph{Proof (of #1).}\enspace}{\qed\vspace{2mm}}
   \newenvironment{proofsketchof}[1]{\vspace{2mm}\noindent\emph{Proof sketch (of #1).}\enspace}{\qed\vspace{2mm}}

\newcommand{\eval}[3]{#1(#2/#3)}

\newcommand{\assignment}{\theta}

\newcommand{\arity}{\ensuremath{\text{Ar}}}
\newcommand{\arityFun}{\ensuremath{Ar_{\text{fun}}}}

\newcommand{\schema}{\tau}
\newcommand{\schemah}{\hat{\schema}}
\newcommand{\relSchema}{\schema_{\text{rel}}}
\newcommand{\relSchemah}{\schemah_{\text{rel}}}
\newcommand{\conSchema}{\schema_{\text{const}}}
\newcommand{\conSchemah}{\schemah_{\text{const}}}
\newcommand{\funSchema}{\schema_{\text{fun}}}
\newcommand{\funSchemah}{\schemah_{\text{fun}}}
\newcommand{\Terms}[2]{\textsc{Terms}^{#2}_{#1}}

\newcommand{\unaryTypes}[1]{\mathcal{UN}_{#1}}
\newcommand{\binaryTypes}[1]{\mathcal{BIN}_{#1}}
\newcommand{\naryTypes}[2]{\mathfrak{T}_{#1,#2}}

\newcommand{\nb}[3]{\calN_{#2}^{#3}(#1)}
\newcommand{\nbv}[3]{\vec \calN_{#2}^{#3}(#1)}

\newcommand{\mthen}{\rightarrow}
\newcommand{\mand}{\wedge}
\newcommand{\mor}{\vee}
\newcommand{\munion}{\cup}
\newcommand{\mintersect}{\cap}
\newcommand{\mdisjunion}{\biguplus}
\newcommand{\sem}[2]{\ensuremath{\llbracket #1\rrbracket_{#2}}} 
\newcommand{\arb}{\ensuremath{\star}}\newcommand{\generic}{\textsc{generic}}
\newcommand{\quant}{\mathbb{Q}}
\newcommand{\cquant}{\overline{\mathbb{Q}}}

\newcommand{\nd}{d}
\newcommand{\formulas}{\calC}
\newcommand{\symneg}[1]{\widehat{#1}}

\newcommand{\type}[2]{\ensuremath{\langle #1, #2 \rangle}}
\newcommand{\stype}[3]{\ensuremath{\langle #1, #2 \rangle_{#3}}}

\newcommand{\behaveEqual}[1]{\approx_{#1}}

\newcommand{\types}[2]{types_{#1}(#2)}
\newcommand{\numTypes}[2]{|\types{{#1}}{#2}|}

\newcommand{\eqtype}{\epsilon}

\newcommand{\struc}{\calS}
\newcommand{\db}{\calD}
\newcommand{\inp}{\calI}
\newcommand{\aux}{\calA}
\newcommand{\builtin}{\calB}
\newcommand{\domain}{D}

\newcommand{\query}{\calQ}
\newcommand{\cq}{\calC}

\newcommand{\querys}{Q}

\newcommand{\ans}[2]{\mtext{ans}(#1, #2)}

\newcommand{\updates}{\ensuremath{\Delta}}
\newcommand{\abstrDel}{\ensuremath{\updates_{Del}}}
\newcommand{\abstrIns}{\ensuremath{\updates_{Ins}}}
\newcommand{\abstrUpd}{\ensuremath{\updates}}

\newcommand{\init}{\mtext{Init}\xspace}

\newcommand{\ins}{\mtext{ins}}
\newcommand{\del}{\mtext{del}}

\newcommand{\insertdescr}[2]{\textbf{Insertion of \ensuremath{#2} into \ensuremath{#1}.}}
\newcommand{\deletedescr}[2]{\textbf{Deletion of \ensuremath{#2} from \ensuremath{#1}.}}

\newcommand{\state}{\struc}

\newcommand{\inpSchema}{\schema_{\text{in}}}
\newcommand{\auxSchema}{\schema_{\text{aux}}}
\newcommand{\eqSchema}{\schema_{=}}
\newcommand{\builtinSchema}{\schema_{\text{bi}}}

\newcommand{\auxInit}{\init_{\text{aux}}}
\newcommand{\builtinInit}{\init_{\text{bi}}}

\providecommand{\prog}{\ensuremath{\calP}\xspace}
\newcommand{\progb}{\ensuremath{Q}\xspace}

\newcommand{\updateDB}[2]{\ensuremath{#1(#2)}}
\newcommand{\updateState}[3]{\ensuremath{#1_{#2}(#3)}}
\newcommand{\updateRelation}[4]{\restrict{\ensuremath{{#1}_{#2}(#3)}}{#4}}

\newcommand{\transition}[3]{\ensuremath{{#1} \xrightarrow{#2}{#3}}}

\makeatletter \newcommand{\uf}[4]{
  \@ifmtarg{#4}{
    \ensuremath{\phi^{#1}_{#2}(#3)}
   }{
    \ensuremath{\phi^{#1}_{#2}(#3; #4)}
  }
}
\newcommand{\huf}[4]{
  \@ifmtarg{#4}{
    \ensuremath{\widehat{\phi}^{#1}_{#2}(#3)}
   }{
    \ensuremath{\widehat{\phi}^{#1}_{#2}(#3; #4)}
  }
}

\newcommand{\ufb}[4]{
  \@ifmtarg{#4}{
    \ensuremath{\psi^{#1}_{#2}(#3)}
   }{
    \ensuremath{\psi^{#1}_{#2}(#3; #4)}
  }
}

\newcommand{\ufbwa}[2]{
  \ensuremath{\psi^{#1}_{#2}}
}

\newcommand{\ufwa}[2]{
  \ensuremath{\phi^{#1}_{#2}}
}

    \makeatletter   \newcommand{\ufsubstitute}[5]{
    \@ifmtarg{#5}{
      \ensuremath{\phi^{#2}_{#3}[#1](#4)}
    }{
      \ensuremath{\phi^{#2}_{#3}[#1](#4; #5)}
    }
  }

    \makeatletter   \newcommand{\ufsubstitutewa}[3]{
      \ensuremath{\phi^{#2}_{#3}[#1]}
  }
  \makeatletter   \newcommand{\substitutewa}[2]{
      \ensuremath{#1[#2]}
  }

\newcommand{\ut}[4]{
  \ensuremath{t^{#1}_{#2}(#3; #4)}
}

\newcommand{\utw}[3]{
  \ensuremath{t^{#1}_{#2}(#3)}
}

\newcommand{\utwa}[2]{\ensuremath{t^{#1}_{#2}}}
\newcommand{\ite}[3]{
  \@ifmtarg{#1}{
    \ensuremath{\mtext{ite}}
   }{
    \mtext{ite}(#1,#2,#3)  
  }
}

\newcommand{\itewa}{
    \ensuremath{\mtext{ite}}
}

\providecommand{\nc}{\newcommand}
\providecommand{\rnc}{\renewcommand}
\providecommand{\pc}{\providecommand}

\renewcommand{\labelenumi}{(\alph{enumi})}

\newcommand{\Erdos}{Erd\H{o}s}

\ifcomments
\nc{\commentbox}[1]{\noindent\framebox{\parbox{\linewidth}{#1}}}
\nc{\todo}[1]{\ \\ {\color{red} \fbox{\parbox{\linewidth}{{\sc
          ToDo}:\\  #1}}}}

\setlength{\marginparwidth}{2.5cm}
\setlength{\marginparsep}{3pt}

\newcounter{CommentCounter}
\newcommand{\acomment}[2]{\ \\ \fbox{\parbox{\linewidth}{{\sc #1}: #2}}}
\newcommand{\mcomment}[2]{{\color{blue}(#1)}\footnote{#1: #2}}                                 \else
\nc{\commentbox}[1]{}
\newcommand{\mcomment}[2]{}
\newcommand{\acomment}[2]{}
\fi

\ifchanges

\newcommand{\loldnew}[3]{\commentbox{{\textcolor{blue}{\setlength{\fboxsep}{1pt}\fbox{\small
          #1}}} \textcolor{red}{\footnotesize #2}}
  \textcolor{blue}{#3}}
\setul{}{0.2mm}
\setstcolor{red}
\newcommand{\oldnew}[3]{{\textcolor{blue}{\setlength{\fboxsep}{1pt}\fbox{\small
        #1}}} \st{\footnotesize #2} \textcolor{blue}{#3}}

\else
\newcommand{\loldnew}[3]{#3}
\newcommand{\oldnew}[3]{#3}
\fi

\nc{\tzm}[1]{\mcomment{TZ}{#1}}
\nc{\tsm}[1]{\mcomment{TS}{#1}}
\nc{\tz}[1]{\acomment{TZ}{#1}}
\nc{\thz}[1]{\acomment{TZ}{#1}}
\nc{\ts}[1]{\acomment{TS}{#1}}

\nc{\tzon}[2][]{\oldnew{TZ}{#1}{#2}} 
\nc{\tson}[2][]{\oldnew{TS}{#1}{#2}}

\nc{\tzlon}[2][]{\loldnew{TZ}{#1}{#2}} 
\nc{\tslon}[2][]{\loldnew{TS}{#1}{#2}}

\renewcommand{\hat}[1]{\widehat{#1}}
\newcommand{\eqh}{\hat{=}}
\newcommand{\quotes}[1]{``#1''}

\newcommand{\columnWidth}{11cm}

\newcommand{\substruclemma}{Substructure Lemma\xspace}
\newcommand{\First}{\mtext{First}}
\newcommand{\List}{\mtext{List}}
\newcommand{\Last}{\mtext{Last}}
\newcommand{\In}{\mtext{In}}
\newcommand{\Out}{\mtext{Out}}
\newcommand{\Empty}{\mtext{Empty}}

\newcommand{\Odd}{\mtext{Odd}}
\newcommand{\odd}{\text{odd}}
\newcommand{\even}{\text{even}}

\newcommand{\Counter}{\mtext{Counter}}
\newcommand{\isEmpty}{\mtext{Empty}}
\newcommand{\Zero}{\mtext{Zero}}

\newcommand{\Succ}{\mtext{Succ}}
\newcommand{\Pred}{\mtext{Pred}}
\newcommand{\Max}{\mtext{Max}}
\newcommand{\numEdges}{\#\mtext{edges}}
\newcommand{\numNodes}{\#\mtext{nodes}}

\newcommand{\congruent}[2]{\sim_{#1, #2}}

\newcommand{\shortVersion}[1]{#1}
\newcommand{\longVersion}[2]{}

\newcommand{\apptheoremtitlefont}[1]{\textsc{#1}}
\newcommand{\apptheoremcontentfont}{\itshape}

\newcommand{\apponlystartmarker}{ $\blacktriangleright\blacktriangleright\blacktriangleright$ }
\newcommand{\apponlyendmarker}{ $\blacktriangleleft\blacktriangleleft\blacktriangleleft$ }
\newcommand{\apprepetitionstartmarker}{ $\blacktriangleright\blacktriangleright\blacktriangleright$ }
\newcommand{\apprepetitionendmarker}{ $\blacktriangleleft\blacktriangleleft\blacktriangleleft$ }

\newcommand{\initialAppendix}{
  \section*{Appendix}

  \setcounter{section}{1}
   \renewcommand{\thesection}{\Alph{section}}
   \counterwithin{theorem}{section}
   \counterwithin{definition}{section}
   \counterwithin{example}{section}

  In the appendix we give the proofs that have been ommitted in the main text. For proofs that are partially present in the main article, we repeat the full proof. Parts that are only repeated are marked by \apprepetitionstartmarker and \apprepetitionendmarker.  
  
}
  
\newcommand{\writeAppendix}{\initialAppendix}
\newcommand{\toAppendix}[1]{
  \makeatletter
   \g@addto@macro\writeAppendix{#1}
  \makeatother
}

\newcommand{\toMainAndAppendix}[1]{
      #1  \toAppendix{      \apprepetition{#1} \par
  }
}

\newcommand{\atheorem}[2]{
  \begin{theorem}\label{#1}    #2  \end{theorem}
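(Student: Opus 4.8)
The plan is to show that \dynEqualCardinality is maintainable by quantifier-free update formulas, i.e.\ that it lies in \DynProp. The idea is to keep, as auxiliary data, a partial bijection $M$ between $A$ and $B$ that is always a matching of maximum size (so $M$ pairs up a $\min(|A|,|B|)$-element subset of $A$ with an equally large subset of $B$), together with the two sets $U_A \subseteq A$ and $U_B \subseteq B$ of currently unmatched elements and a $0$-ary query bit. The maintained invariant is that at most one of $U_A,U_B$ is nonempty, and the query bit answers the \EqualCardinality question by holding exactly when both $U_A$ and $U_B$ are empty, which can be tracked by two further $0$-ary emptiness flags.

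The steps, in order, are: (i) in \init, fix the built-in linear order and initialise $M$, $U_A$, $U_B$ and the flags for the starting instance; (ii) on an insertion into $A$, either pair the new element with a designated unmatched $B$-element (shrinking the $B$-surplus) or, if $U_B$ is empty, add it to $U_A$; (iii) handle insertions into $B$ symmetrically; (iv) on a deletion, unmatch the partner of the removed element and return that partner to the appropriate unmatched set, then repair $M$; and (v) recompute the emptiness flags and the query bit locally. Since update formulas recompute each auxiliary relation wholesale, the only real constraint is that every updated relation be \emph{definable} from the old auxiliary relations and the inserted or deleted element without quantifiers.

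The main obstacle is precisely that quantifier-free formulas cannot \emph{search} for a matching partner or for the neighbour of an element: selecting "the unmatched $B$-element to pair with the new $A$-element'', or recomputing who becomes the new front of $U_B$ after one is consumed, is naively an existential operation forbidden in \DynProp. To circumvent this I would maintain $U_A$ and $U_B$ not as bare unary relations but as doubly linked lists, i.e.\ with auxiliary successor/predecessor relations and explicit first/last pointer relations, so that the partner to be matched, or the neighbour to be re-linked, is always named \emph{directly} by a pointer relation rather than discovered by a quantifier; each update then rewires only a constant number of links and pointers, all expressible quantifier-free over the current auxiliary relations and the update parameter (for instance $M'(u,v) := M(u,v) \lor (u = x \land \mathrm{First}_B(v))$ for an insertion of $x$ into $A$). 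I expect the bulk of the technical work, and the place most likely to break, to be verifying that all list-maintenance invariants (each unmatched set being a well-formed list with correctly pointed endpoints) survive \emph{simultaneously} under every case, especially deletions that fall in the middle of a matched pair and force a re-link that must not corrupt the list endpoints. Once these invariants are established by induction over the update sequence, membership of \dynEqualCardinality in \DynProp follows at once; if the endpoint-recovery step cannot be made quantifier-free with relations alone, the same construction with auxiliary pointer \emph{functions} places the problem in \DynQF instead.
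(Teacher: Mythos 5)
Your construction breaks at the step you yourself flagged as risky, and the break is not repairable within \DynProp. Splicing a \emph{deleted} element $a$ out of a doubly linked list is unproblematic, because $a$ is an update parameter: formulas such as $\List'(x,y) \df (\List(x,y) \land x \neq a \land y \neq a) \lor (\List(x,a) \land \List(a,y))$ and $\Empty' \df \First(a) \land \Last(a)$ are quantifier-free, and this is exactly why the non-empty-set query $\exists x\, U(x)$ is maintainable in \DynProp. In your program, however, elements also leave $U_A$ and $U_B$ as a \emph{side effect} of a modification whose parameter is a different element: when $x$ is inserted into $A$ and paired via $M'(u,v) \df M(u,v) \lor \big(u = x \land \First_B(v)\big)$, the consumed element is named only by the pointer relation, not by a parameter. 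Advancing the pointer then needs $\First_B'(v) \df \exists z \big(\First_B(z) \land \List_B(z,v)\big)$, and the new emptiness flag needs $\exists z \big(\First_B(z) \land \Last_B(z)\big)$; both are genuinely existential, and keeping an extra ``second-element'' pointer only shifts the problem, since that pointer must then be advanced to the third element, and so on without end. The same failure occurs in your deletion repair, where the front of $U_A$ is consumed to re-match the partner named by $M(a,\cdot)$.

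This gap cannot be closed, because the statement you are proving is false for \DynProp in this paper's setting; only your fallback claim is correct. Since arbitrary initialization lets a program on bare sets assume a built-in linear order, a \DynProp program for \EqualCardinality would maintain, on strings, the language of all words with equally many $a$'s and $b$'s (take $A$ and $B$ to be the sets of $a$-labelled and $b$-labelled positions; the query ignores the order). That language is not regular, contradicting the characterization of quantifier-free update programs on strings by the regular (i.e., \MSO-definable) languages from \cite{GeladeMS12}, which is cited in Section \ref{section:intro} and which underlies the separation $\DynProp \subsetneq \DynQF$ that the paper imports for Theorem \ref{theorem:separations}. What is true is \dynEqualCardinality $\in \DynQF$: with auxiliary functions, pointer advancement is just an update term such as $\Succ(\First_B)$, or more simply one maintains $|A|$ and $|B|$ as constants walked along a built-in successor function and compares them by a single atomic formula; by Theorem \ref{theorem:DynCQcontainsDynQF} the query then also lies in \DynCQ. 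Indeed, this query is essentially the standard witness separating \DynProp from \DynQF. Note finally that the paper itself contains no proof of this statement --- the \EqualCardinality query occurs only in unused macro definitions --- so the comparison above is necessarily against the results the paper imports from \cite{GeladeMS12} rather than against a proof in the text.
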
  \toAppendix{    \begin{apptheorem}{\ref{#1}}{}      #2    \end{apptheorem}  }
}

\newcommand{\alemma}[2]{
  \begin{lemma}\label{#1}    #2  \end{lemma}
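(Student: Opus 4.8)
The plan is to give a quantifier-free dynamic program witnessing \EqualCardinality $\in$ \DynPropUCQ. The central idea is to maintain, as auxiliary data, a canonical partial matching between the elements currently in $A$ and those in $B$, arranged so that at most one of the two relations ever has unmatched elements; then $|A| = |B|$ holds exactly when neither side has an unmatched element. Throughout I would rely on a built-in linear order $\prec$ on the domain to make every choice canonical, so that no existential search --- and hence no quantifier --- is ever required to locate the element a step must act on.

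Concretely, I would maintain a binary auxiliary relation $M(x,y)$, meaning ``$x \in A$ is matched to $y \in B$'', together with a threaded successor structure on the unmatched elements of each side: unary relations marking the $\prec$-least unmatched element of $A$ and of $B$, binary relations giving for each unmatched element its $\prec$-successor among the unmatched elements on the same side, and a pair of nullary (Boolean) relations recording whether $A$, respectively $B$, currently has an unmatched element. The invariant is that $M$ is a bijection between the matched part of $A$ and the matched part of $B$ and that the unmatched elements of each side are exactly those threaded by the successor relations. Under this invariant the answer bit is definable by a single conjunction asserting that both ``has-unmatched'' Booleans are false, which already lies in the positive quantifier-free fragment once those Booleans are available.

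The substance of the argument is the update formulas, and here the union in \DynPropUCQ does the real work. For the insertion of an element $a$ into $A$ I would case-split: if the Boolean for $B$ says an unmatched element exists, add the pair $(a,b)$ to $M$, where $b$ is the stored $\prec$-least unmatched element of $B$, and splice $b$ out of $B$'s unmatched list; otherwise append $a$ to $A$'s unmatched list. Each branch is a conjunction of atoms over the update parameter $a$ and the maintained pointers, and the two branches are combined by a disjunction --- exactly a quantifier-free UCQ. The nullary ``has-unmatched'' Booleans are precisely what reduces the ``is there an unmatched element'' test to a positive nullary atom rather than an existential quantifier, and keeping them correct is a self-contained piece of bookkeeping (after a match, the Boolean for $B$ becomes true iff the old least-unmatched had a recorded successor, which is again read off positively).

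The hard part will be keeping the threaded successor structure and these Booleans correct under \emph{deletion} without quantifying. Deleting a matched element $a$ (matched to some $b$) frees $b$, which must be inserted into $B$'s unmatched list at its canonical position and possibly immediately rematched to $A$'s current least unmatched element; deleting an unmatched element requires splicing a node out of the middle of a list and updating the ``least-unmatched'' marker when the head is removed. Every such operation must be expressed using only the deleted element together with its stored predecessor and successor, so I would make the auxiliary schema rich enough that each splice refers to a constant number of neighbours, giving update formulas that are again disjunctions over constantly many cases, each a conjunction of atoms about the parameter and its recorded neighbours. The genuinely delicate point --- and the one that pins down the exact fragment --- is whether the ``otherwise'' branches and emptiness tests can be carried entirely by the maintained positive Booleans, keeping the program inside \DynPropUCQ, or whether a controlled use of negation is forced; I would argue the former suffices, and then close by verifying case by case that each update formula is quantifier-free, positive, and preserves the invariant.
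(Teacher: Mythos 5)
Your proposal proves a statement that does not occur in this paper: the lemma slot you were given is never instantiated with any claim about \EqualCardinality, and no result of this form appears here. Setting that aside and judging the argument on its own terms, the step you yourself flag as ``genuinely delicate'' is a fatal gap, not a detail. In a quantifier-free update formula the new content of an auxiliary relation at a tuple $\vec y$ may only depend on atoms over the modification parameter and $\vec y$ itself. Your construction, however, must on insertion of $a$ into $A$ splice the current $\prec$-least unmatched element $b$ of $B$ out of $B$'s free list, and on deletion of a matched element re-insert its partner into that list; in both cases $b$ is neither the update parameter nor part of the tuple being redefined. Maintaining a pointer relation ``$y$ is the least unmatched element of $B$'' does not help: once its witness is consumed, the new least element is the \emph{successor of the old one}, which is only expressible as $\exists z\,\bigl(\mathit{Least}(z)\wedge\mathit{Succ}(z,y)\bigr)$ --- an existential quantifier --- or by materializing ``second element of the list'', ``third element'', and so on, an unbounded family of auxiliary relations. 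This is precisely why the list-splicing trick works for the non-empty-set query (there the spliced element \emph{is} the update parameter; cf.\ Theorem \ref{theorem:separations}(a)) but cannot be transferred to a matching between two relations. (A smaller instance of the same carelessness: testing that your ``has-unmatched'' Booleans are \emph{false} is a negation, so it is not available in \DynPropUCQ either; you would need to maintain the complementary flags.)

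Worse, the statement itself is false, so no amount of extra bookkeeping can close the gap. Syntactically $\DynPropUCQ\subseteq\DynProp$ (and by Theorem \ref{theorem:dynpropequivalences}, cf.\ Figure \ref{figure:collapse}, the two classes even coincide in this paper's setting), so your lemma would place \EqualCardinality in \DynProp. But \EqualCardinality is essentially the word problem for the non-regular language of strings with equally many $a$'s and $b$'s, and quantifier-free update programs maintain exactly the regular languages \cite{GeladeMS12}; the lower-bound direction of that characterization survives arbitrary initialization and built-in orders via the Ramsey/substructure-lemma technique of \cite{ZeumeS15reach}. Concretely: extract by Ramsey a subset $U$ of the domain on which all $\le r$-tuples ($r$ the maximal auxiliary arity) have the same initial atomic type; under insertions of fresh elements of $U$, the atomic types of untouched tuples and the nullary relations evolve as a function of a bounded amount of type information, i.e.\ like a finite automaton, so the answer after inserting $p$ elements into $A$ and then $q$ into $B$ is eventually periodic in $p$ --- contradicting correctness for $p=q$ versus $p\neq q$. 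Your matching invariant would be perfectly adequate in a fragment with existential quantification (indeed it is the natural \DynFO- or \DynCQneg-style program), but it cannot be carried out in any quantifier-free fragment, positive or not.
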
  \toAppendix{    \begin{applemma}{\ref{#1}}{}      #2    \end{applemma}  }
}

\newcommand{\aproof}[3]{  \@ifmtarg{#2}{}{    \begin{proof}      #1      #2    \end{proof}  }
  \toAppendix{    \begin{proof}      \@ifmtarg{#1}{}{\apprepetition{#1}} \par
      #3
    \end{proof}  }
}

\newcommand{\shortOrLong}[2]{
  \shortVersion{#1}
  \longVersion{#2}
}

\makeatletter
\newcommand{\theoremcont}[3]{
   \def\Type{#1}
   \def\Number{#2}
   \def\Label{#3}
  \@ifmtarg{#3}{
     \apptheoremtitlefont{\Type\ \Number.} \apptheoremcontentfont
   }{
    \apptheoremtitlefont{\Type\ \Number}\ \apptheoremcontentfont(\Label).
  }
}

\newenvironment{applemma}[2]{\vspace{2mm}\par\theoremcont{Lemma}{#1}{#2}}{\vspace{0mm}\par}
\newenvironment{apptheorem}[2]{\vspace{2mm}\par \theoremcont{Theorem}{#1}{#2}}{\vspace{2mm} \par }
\newenvironment{appcorollary}[2]{\theoremcont{Corollary}{#1}{#2}}{\vspace{2mm}}
\newenvironment{appproposition}[2]{\theoremcont{Proposition}{#1}{#2}}{\vspace{2mm}}
\newenvironment{appdefinition}[2]{\theoremcont{Definition}{#1}{#2}}{\vspace{2mm}}
\newenvironment{appexample}[1]{\vspace{2mm}\textit{Example #1.}}{\vspace{2mm}}

\newcommand{\apponlystart}{
    \apponlystartmarker
    }
\newcommand{\apponlyend}{
    \apponlyendmarker
    }

\newcommand{\apprepetition}[1]{
  \apprepetitionstartmarker #1 \apprepetitionendmarker
}

    \pgfdeclarelayer{background}
\pgfdeclarelayer{substructure}
\pgfdeclarelayer{edges}
\pgfdeclarelayer{foreground}
\pgfsetlayers{background,substructure,edges,main,foreground}

\tikzstyle{background rectangle}=[
  fill=black!5,
  draw=black!20,
  inner sep=0.5cm,
  rounded corners=5pt
]

\tikzstyle{class rectangle}=[
  draw=black,
  inner sep=0.2cm,
  rounded corners=5pt,
  thick
]

\tikzstyle{semantic rectangle}=[
  fill=blue!10, 
  draw=blue!30,
  inner sep=0.2cm,
  rounded corners=5pt
]

\tikzstyle{invisibleEdge}=[
  transparent
]
  
\tikzstyle{relnode}=[
   font=\normalsize,
   opaque
]

\tikzstyle{equal}=[
  blue!20,
  line width=1.8,
  preaction={line width=2.4,black,draw,shorten >=0.2,shorten <=0.2}
]

\tikzstyle{collapsematrix} = [
  matrix of nodes, 
  ampersand replacement=\&, 
  row sep=10, column sep=8,
]

\newcommand{\piccollapse}{
  \begin{tikzpicture}[
       xscale=1.2,
   ]
    \begin{scope}[shift={(0,0)}] \picabscollapse \end{scope}
    \begin{scope}[shift={(5.7,0)}] \picdeltacollapse \end{scope}

    \draw [class rectangle] (-3.1, 1) rectangle (9.1,-9.4);
    \draw [class rectangle, dashed] (-3.0, -0.6) rectangle (9.0,-9.3);
    \draw [class rectangle, dashed] (-2.9, -3.4) rectangle (8.9,-9.2);
    \draw [class rectangle, dashed] (-2.8, -4.6) rectangle (8.8,-9.1);
    \draw [class rectangle] (-2.7, -5.8) rectangle (8.7,-9.0);
    \draw [class rectangle] (-2.6, -7.8) rectangle (3.1,-8.9);
    \draw [class rectangle, dashed] (2.1, -7.7) rectangle (8.6,-8.8);

    \draw[invisibleEdge] (2.3,0) -- node[relnode] {\reflongeq{\ref{lemma:absdeltaequivalence}}}(3.0,0);
    \draw[invisibleEdge] (2.3,-2) -- node[relnode] {\reflongeq{\ref{theorem:dynefoandddynefoequivalence}}}(3.0,-2);
    \draw[invisibleEdge] (2.3,-6.5) -- node[relnode] {\reflongeq{\ref{lemma:absdeltaequivalence}}}(3.0,-6.5);

    \node[] (tmp) at (0, -9.7) {\textbf{Absolute Semantics}};
    \node[] (tmp) at (5.2, -9.7) {\textbf{$\Delta$-Semantics}};
    
    \begin{pgfonlayer}{substructure}
      \draw [semantic rectangle] (-2.5, 0.7) rectangle (2.5,-10.2);    
      \draw [semantic rectangle] (2.8, 0.7) rectangle (8.5,-10.2);    
    \end{pgfonlayer}
  \end{tikzpicture}

}

\newcommand{\picabscollapse}{
    \begin{scope}[shift={(0,0)}] \picabscollapsedynfo \end{scope}
    \begin{scope}[shift={(0,-2)}] \picabscollapsedynefo \end{scope}
    \begin{scope}[shift={(0,-4)}] \picabscollapsedynucq \end{scope}
    \begin{scope}[shift={(0,-5.2)}] \picabscollapsedynqf \end{scope}
    \begin{scope}[shift={(0,-6.7)}] \picabscollapsedynprop \end{scope}
    \begin{scope}[shift={(0,-8.4)}] \picabscollapsedynand \end{scope}

}

\newcommand{\picdeltacollapse}{
    \begin{scope}[shift={(0,0)}] \picdeltacollapsedynfo \end{scope}
    \begin{scope}[shift={(0,-2)}] \picdeltacollapsedynefo \end{scope}
     \begin{scope}[shift={(0,-6.6)}] \picdeltacollapsedynprop \end{scope}
     \begin{scope}[shift={(0,-8.2)}] \picdeltacollapsedynand \end{scope}
}

\newcommand{\picabscollapsedynfo}{
  \node[collapsematrix] (dynfo) at (0,0)
  {
    $\DynFO$ \& $\DynFOand$ \\
  };  
  \draw[invisibleEdge] (dynfo-1-1) -- node[relnode] {\refeq{\ref{theorem:dynfoequivalences}}}(dynfo-1-2);

}

\newcommand{\picabscollapsedynefo}{
  \node[collapsematrix] (dynefo) at (0,0)
  {
    $\DynEFO$ \& $\DynAFO$ \\
    $\textbf{\DynCQneg}$ \& $\DynUCQneg$ \\
  };  
    \draw[invisibleEdge] (dynefo-1-1) -- node[relnode] {\refeq{\ref{theorem:dynefoequivalences}}}(dynefo-1-2);
    \draw[invisibleEdge] (dynefo-2-1) -- node[relnode] {\refeq{\ref{theorem:dynefoequivalences}}}(dynefo-2-2);
    \draw[invisibleEdge] (dynefo-2-2) -- node[relnode, rotate=90] {$=$}(dynefo-1-2);
    \draw[invisibleEdge] (dynefo-1-1) -- node[relnode, rotate=90] {$=$}(dynefo-2-1);

}

\newcommand{\picabscollapsedynucq}{
  \matrix (dynucq) [collapsematrix]
  {
    $\textbf{\DynCQ}$ \& $\DynUCQ$ \\
  }; 
    \draw[invisibleEdge] (dynucq-1-1) -- node[relnode] {\refeq{\ref{theorem:dynucqequivalences}}}(dynucq-1-2);
}

\newcommand{\picabscollapsedynqf}{
  \matrix (dynucq) [collapsematrix]
  {
    $\DynQF$ \\
  };

}

\newcommand{\picabscollapsedynprop}{
  \matrix (dynprop) [collapsematrix]
  {
    $\DynProp$ \&  \DynPropUCQneg\\
    $\textbf{\DynPropCQneg}$ \& \DynPropUCQ \\
  };  
    \draw[invisibleEdge] (dynprop-1-1) -- node[relnode] {$=$}(dynprop-1-2);
    \draw[invisibleEdge] (dynprop-1-1) -- node[relnode, rotate=90] {\refeq{\ref{theorem:dynpropequivalences}}}(dynprop-2-1);
    \draw[invisibleEdge] (dynprop-2-2) -- node[relnode, rotate=90] {\refeq{\ref{theorem:dynpropequivalences}}}(dynprop-1-2);
    \draw[invisibleEdge] (dynprop-2-2) -- node[relnode] {$=$}(dynprop-2-1);

}

\newcommand{\picabscollapsedynand}{
  \matrix (dynand) [collapsematrix]
  {
    \textbf{$\DynAnd$}\\
  };  
}

\newcommand{\picdeltacollapsedynfo}{
  \node[collapsematrix] (ddynfo) at (0,0)
  {
    $\dDynFO$ \& $\dDynFOpos$ \\
  };  
    \draw[invisibleEdge] (ddynfo-1-1) -- node[relnode] {\refeq{\ref{lemma:deltanegationfree}}}(ddynfo-1-2);
}

\newcommand{\picdeltacollapsedynefo}{
  \node[collapsematrix] (ddynefo) at (0,0)
  {
    $\dDynEFO$ \& $\dDynAFO$ \\
    $\dDynCQneg$ \& $\dDynUCQneg$ \\
    \textbf{$\dDynCQ$} \& $\dDynUCQ$ \\
  };  
    \draw[invisibleEdge] (ddynefo-1-1) -- node[relnode] {\refeq{\ref{theorem:ddynefoequivalences}}}(ddynefo-1-2);
    \draw[invisibleEdge] (ddynefo-2-1) -- node[relnode] {\refeq{\ref{theorem:ddynefoequivalences}}}(ddynefo-2-2);
    \draw[invisibleEdge] (ddynefo-2-2) -- node[relnode, rotate=90] {$=$}(ddynefo-1-2);
    \draw[invisibleEdge] (ddynefo-1-1) -- node[relnode, rotate=90] {$=$}(ddynefo-2-1);
    \draw[invisibleEdge] (ddynefo-3-1) -- node[relnode] {\refeq{\ref{theorem:ddynefoequivalences}}}(ddynefo-3-2);
    \draw[invisibleEdge] (ddynefo-3-1) -- node[relnode, rotate=90] {\refeq{\ref{lemma:deltanegationfree}}}(ddynefo-2-1);
    \draw[invisibleEdge] (ddynefo-3-2) -- node[relnode, rotate=90] {\refeq{\ref{lemma:deltanegationfree}}}(ddynefo-2-2);
}

\newcommand{\picdeltacollapsedynprop}{
  \matrix (ddynprop) [collapsematrix]
  {
    $\dDynProp$ \& \dDynPropUCQneg  \\
    $ $ \& $ $ \\
  };  
    \node[] (tmp) at (-0.5, -0.6) {\textbf{\dDynPropUCQ}};
    \draw[invisibleEdge] (ddynprop-1-1) -- node[relnode] {$=$}(ddynprop-1-2);
    \draw[invisibleEdge] (tmp) -- node[relnode, rotate=90] {\refeq{\ref{lemma:deltanegationfree}}}(ddynprop-1-1);
    \draw[invisibleEdge] (tmp) -- node[relnode, rotate=90] {\refeq{\ref{lemma:deltanegationfree}}}(ddynprop-1-2);

}

\newcommand{\picdeltacollapsedynand}{
  \matrix (ddynprop) [collapsematrix]
  {
    $\textbf{\dDynPropCQ}$ \& \dDynPropCQneg  \\
  };  
    \draw[invisibleEdge] (ddynprop-1-1) -- node[relnode] {\refeq{\ref{lemma:deltanegationfree}}}(ddynprop-1-2);

}

   \author{Thomas Zeume\thanks{\texttt{thomas.zeume@cs.tu-dortmund.de}}$\;$}      
   \author{Thomas Schwentick\thanks{\texttt{thomas.schwentick@tu-dortmund.de}}}
    
   \affil{TU Dortmund University}
   \title{Dynamic Conjunctive Queries\footnote{An extended abstract of this work appeared in the proceedings of the conference International Conference on Database Theory 2014 (ICDT 2014) \cite{ZeumeS14icdt}. Some preliminary results appeared in the proceedings of the conference Mathematical Foundations of Computer Science 2013 (MFCS 2013)\cite{ZeumeS13}. Both authors acknowledge the financial support by DFG grant SCHW 678/6-1.}}

  \maketitle
  \begin{abstract}

    The article investigates classes of queries maintainable by 
conjunctive queries ($\CQ$s) and
their extensions and restrictions in the dynamic complexity framework of
Patnaik and Immerman.  Starting from the basic language of quantifier-free conjunctions of
positive atoms, it studies the impact of additional operators and
features --- such as union, atomic negation
and quantification --- on the dynamic expressiveness, for the
standard semantics as well as for $\Delta$-semantics. 

Although  many different combinations of these
features are possible,  they basically yield five important fragments for the
standard semantics,
characterized by the addition of  (1) arbitrary quantification and atomic negation, (2) existential
quantification and atomic negation, (3) existential quantification,
(4) atomic negation (but no quantification)), and by (5) no addition to the basic language at all. While fragments
(3), (4) and (5) can be separated, it remains open whether fragments (1), (2) and (3) 
are actually different. The fragments arising from
$\Delta$-semantics are also subsumed by the standard fragments (1),
(2) and (4). The main fragments of \DynFO that had been studied in
previous work, \DynQF and \DynProp, characterized by quantifier-free
update programs with or without auxiliary functions, respectively,
also fit into this hierarchy: \DynProp coincides with fragment (4) and
\DynQF is strictly above fragment (4) and within fragment
(3).

As a further result, all (statically) \FO-definable queries are
captured by fragment (2) and a complete characterization of these
queries in terms of non-recursive dynamic  $\QFO[\exists^1]$-programs is given.
\end{abstract}

\tableofcontents

\sloppy

  \section{Introduction}\label{section:intro}
    The re-evaluation of a fixed query after a modification to a huge database can be a time-consuming process; in particular when it is performed from scratch. For this reason previously computed information such as the old query result and (possibly) other auxiliary information is often reused in order to speed up the process.
This maintenance of query results has attracted lots of attention over the last decades of database related research, in particular in the related field of view maintenance. For relational databases algorithmic (see e.g.~\cite{ShmueliI84, GuptaMS1993}) and declarative approaches (see e.g.~\cite{DongT92, DongS93, PatnaikI94}) have been studied. 

Here, we continue the study of the declarative approach where query results are maintained by queries from some query language. More precisely, for a relational data\-base  subject to change, auxiliary relations are maintained with the intention to help answering a query $\query$. When a modification to the database, i.e.~an insertion or deletion of a tuple, occurs, every auxiliary relation is updated through a first-order query that can refer to both, the database and the auxiliary relations. 

One possible formalization of this approach is the descriptive dynamic complexity framework (short: dynamic complexity) by Patnaik and Immerman \cite{PatnaikI94}. In their framework, the class $\DynFO$ contains of all queries maintainable through first-order updates (and thus also in the core of SQL). This is the formalization used in this work.

Shortly before the work of Patnaik and Immerman, the declarative approach was independently formalized in a very similar way by Dong, Su and Topor \cite{DongT92, DongS93}. For a discussion of the differences of both formalizations we refer to the later discussion of the  choice of the precise setting. 

The class $\DynFO$ is quite powerful. Many queries inexpressible in (static) first-order logic, such as the transitive closure query on undirected graphs \cite{PatnaikI94} and the word problem for context-free languages \cite{GeladeMS12}, can be maintained in \DynFO. 
There are no general inexpressibility results for $\DynFO$ at all\footnote{Except for the trivial ones due to the fact that queries maintainable in \DynFO can be computed in polynomial time.}. 

Towards a deeper understanding of the dynamic maintainability of queries, two main restrictions of $\DynFO$ have been explored in the literature. Dong and Su started the study of restricted auxiliary relations \cite{DongS98}  and obtained inexpressibility results for unary auxiliary relations. On the other hand, Hesse started the exploration of syntactic fragments of $\DynFO$, such as the one obtained by disallowing quantification in update formulas \cite{Hesse03}. Inexpressibility results for this particular fragment have been obtained \mbox{in \cite{GeladeMS12}}.

In this work, we investigate classes of queries maintainable by conjunctive queries and extensions thereof, thus we are following the approach of Hesse.

Conjunctive queries ($\CQ$s), that is, in terms of logic, existential first-order queries whose quantifier-free part is a conjunction of atoms, are one of the most investigated query languages. Starting with Chandra and Merlin \cite{ChandraM1977}, who analyzed conjunctive queries for relational databases, those queries have been studied for almost every emerging new database model. Usually also the extension by unions ($\UCQ$s), by negations ($\CQneg$s) as well as by both unions and negations ($\UCQneg$s or, equivalently, $\EFO$) have been studied. It is folklore that all those classes are distinct for relational databases.

In this work we aim at the following goals. 

\begin{goal}
  Understand the relative expressiveness of different extensions and restrictions of dynamic conjunctive queries in the dynamic setting, as well as their ability  to maintain queries from (stronger) static query classes.
\end{goal}

For extensions, we add negation and/or disjunction to conjunctive queries, and for restrictions we disallow quantification. We further also consider universal quantification in place of existential quantification.

As for the relationship to static classes, it is interesting to understand whether larger static classes $\calC$ can be captured by  dynamic classes $\DynC'$, for weaker $\calC'$. Up to now only two such results were known, namely, that $\MSO$ can be characterized by quantifier-free $\DynFO$ on strings and that, on general structures, $\EFO$ is captured by \DynQF,  the quantifier-free fragment of $\DynFO$ with auxiliary relations \emph{and} auxiliary functions \cite{GeladeMS12}.

In both dynamic complexity frameworks auxiliary relations are always explicitly defined as a whole after a modification. 
However, in the context of query re-evaluation, it is often convenient to express the new state of an auxiliary relation $R$ in terms of the current relation and some ``Delta'', that is,  by specifying a tuple set $R^+$ to be added to $R$ and a tuple set $R^-$ to be removed from $R$. We refer to the former semantics as \emph{absolute semantics} and to the latter as \emph{$\Delta$-semantics}. Obviously, the choice of the semantics does not affect the expressiveness of an update language that is closed under Boolean operations. However, most of the update languages in this paper lack some Boolean closure properties.

\begin{goal}
  Understand the relationship between absolute semantics and $\Delta$-semantics for conjunctive queries and their variants.
\end{goal}

In this work we contribute to achieve those two goals as follows.

\paragraph{Contributions.}
 For an overview of the relationship of the various dynamic classes of conjunctive queries we refer to \mbox{Figure \ref{figure:collapse}}.

  \begin{figure*}
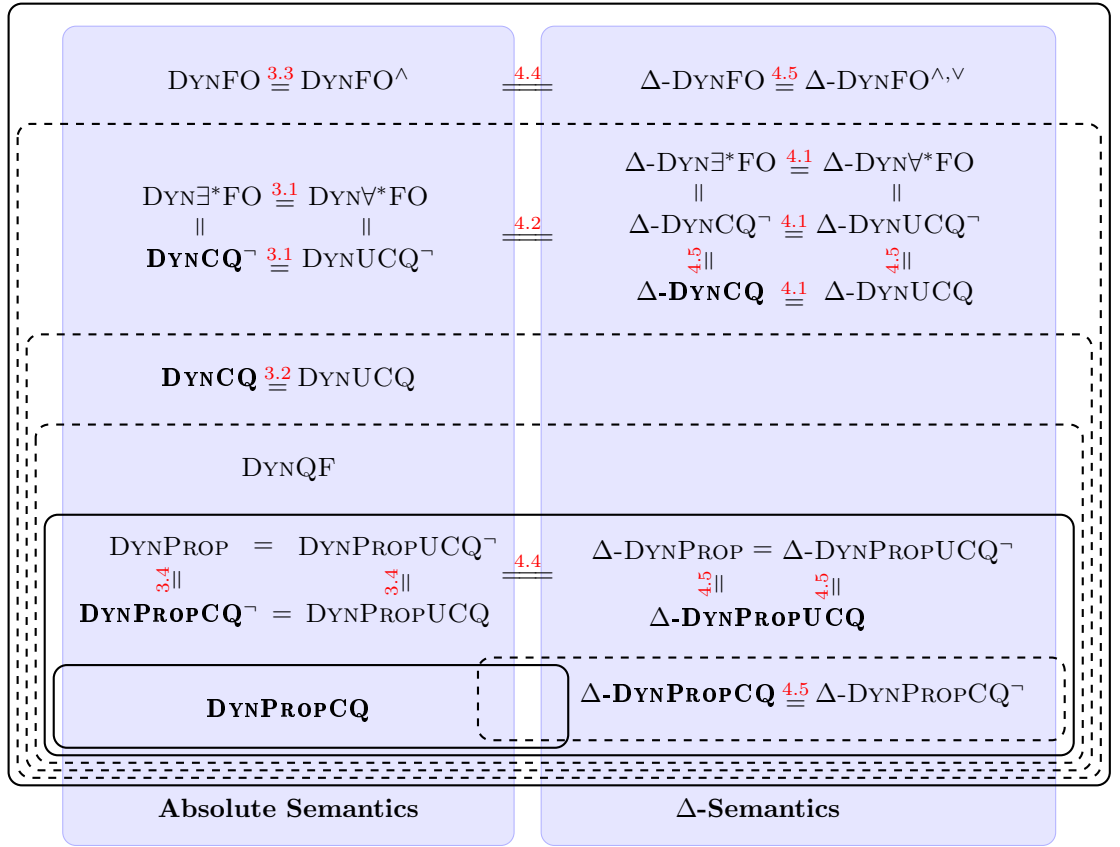

    \begin{center}
       \scalebox{1}{
  \piccollapse
       }
      \caption{Hierarchy of fragments of \DynFO. Solid lines are strict separations. \label{figure:collapse} }
    \end{center}
  \end{figure*}

The distinctness of the underlying static query classes does not translate into the dynamic setting:\footnote{The notation for classes will be formally introduced in Sections \ref{section:dynamicsetting} and \ref{section:deltasemantics}. In general, $\Delta$ indicates $\Delta$-semantics, the absence of $\Delta$ indicates absolute semantics.}
\begin{itemize}
  \item  We show that, in many cases, the addition of the union-operator does not yield additional expressive power in the dynamic setting, for example, $\DynUCQneg = \DynCQneg$,  $\DynUCQ = \DynCQ$, and $\DynPropUCQneg=\DynPropCQneg$, where $\Prop$ indicates classes without quantifiers.
  \item Furthermore, negation often does not increase the expressive power of an update language, e.g.~we have \mbox{$\DynPropUCQneg=\DynPropUCQ$} and $\dDynCQneg = \dDynCQ$.  
  \item Finally, often quantifiers can be replaced by their dual quantifiers, e.g.~$\DynEFO (= \DynUCQneg) = \DynAFO$.
\end{itemize}

Whether $\DynCQneg = \DynCQ$ remains open. However, first-steps towards the separation of the remaining fragments have been taken. Already  in \cite{ZeumeS15reach} the authors proved that dynamic conjunctive queries without negations and quantifiers are strictly weaker than the quantifier-free fragment of $\DynFO$. Here we continue this work:
\begin{itemize}
  \item By showing that $\DynQF$, the extension of $\DynProp$ that allows auxiliary functions, is contained in $\DynCQ$, we can separate the classes  $\DynProp$ and $\DynCQ$.
\end{itemize}

Furthermore, we show that dynamic conjunctive queries extended by negations capture all first-order queries:
\begin{itemize}
 \item We characterize the class of first-order queries as the class maintainable by non-recursive dynamic $\DynEFO$-programs with a single existential quantifier per update formula. This implies that dynamic conjunctive queries extended by negations can maintain all first-order queries.
\end{itemize}

For the second goal, the main finding is that the difference between absolute and $\Delta$-semantics is much smaller than we had expected.
\begin{itemize}
\item The dynamic classes corresponding to $\FO$, $\CQneg$ and $\Prop$ yield the same expressive power with respect to absolute and $\Delta$-semantics.
\item It turns out that conjunctive queries and conjunctive queries with negation coincide with respect to $\Delta$-semantics, that is, in particular, $\dDynCQ=\dDynCQneg$ and thus, also $\dDynCQ=\DynUCQneg$ . 
\end{itemize}

\paragraph{Choice of setting.}
The concrete settings under which dynamic complexity has been studied in the literature slightly differ in several aspects. We shortly discuss the most important aspects, what choice we took for this work and why we made this choice.

An important aspect is whether to use a finite and fixed domain, an active domain or an infinite domain. In this work, we follow the framework of Patnaik and Immerman in which the domain is finite and fixed \cite{PatnaikI94}. To maintain a query, a dynamic program has to work uniformly for all domains. This fixed domain framework for a dynamic setting might appear counterintuitive at first sight. However, it allows to study the underlying dynamic mechanisms of dynamic programs, in particular when one is interested to develop  lower bound methods. Fixed domains also offer a strong connection to logics and circuit complexity. In incremental evaluation systems (IES), a framework proposed by Dong and Topor \cite{DongT92}, active domains are used. First-order incremental evaluation systems (FOIES), introduced by Dong and Su in subsequent work \cite{DongS93}, have a very close connection to $\DynFO$.   This setting is a little closer to real database systems but most results in dynamic complexity hold equally in 
both frameworks.

Another parameter to choose is how the auxiliary data is initialized. In the setting of Patnaik and Immerman, dynamic programs start from empty databases and the auxiliary data is either initialized by a polynomial time computation or by a formula from the same class as the update formulas. Later this was generalized by Weber and the second author by proposing that dynamic programs start from an arbitrary initial database and auxiliary data initialized by a mapping computable in some given complexity \mbox{class \cite{WeberS07}}.

In the present work, we allow for arbitrary initialization mappings. This is motivated by our long term goal to develop lower bound techniques for dynamic programs. While lower bounds in settings with restricted initialization might depend on this restriction, an inexpressibility result in the setting with arbitrary initialization, on the other hand, really shows that a query cannot be \emph{maintained}. A result like $\DynUCQ = \DynCQ$ is helpful for the development of lower bound techniques, as it shows that for proving lower bounds for $\DynUCQ$ it is sufficient to consider $\DynCQ$ programs --- but also that one has to be aware that lower bounds for  $\DynCQ$  are as hard as lower bounds for  $\DynUCQ$.
However, though all our results are stated for arbitrary initialization mappings, they also hold in the setting with empty initial database and first-order initialization for the auxiliary data. On the other hand, some proofs do not carry over to the strict setting of Patnaik and Immerman where, in a dynamic class $\DynC$, only $\class$ initializations are allowed.

\paragraph{Related work.}
We next discuss some further related work, beyond what we already mentioned above. The expressivity of first-order logic in the dynamic complexity frameworks discussed above has been studied a lot (see e.g.~\cite{PatnaikI94, DongS98, Etessami98, Hesse03b, Hesse03, WeberS07, GraedelS12, GeladeMS12}). Most results focus on showing that a problem from some static complexity class can be dynamically maintained by programs of a weaker query class. Some lower bounds have been achieved as well (see e.g.~\cite{DongLW95, DongLW03, DongS98, GeladeMS12, GraedelS12, ZeumeS15reach}). Many other aspects such as the arity of auxiliary relations (see e.g.~\cite{DongS98, Hesse03}), whether the auxiliary relations are determined by the current structure (see e.g.~\cite{PatnaikI94, DongS97, GraedelS12}),  and the presence of an order (see e.g.~\cite{GraedelS12})   have been studied. 

An algebraic perspective of incremental view maintenance under $\Delta$-semantics has been studied in \cite{Koch10}. Parts of the latter work have also been implemented, see e.g.~\cite{KochAKNNLS14}.

\paragraph{Outline.}
In Section \ref{section:dynamicsetting} we define our dynamic setting more precisely. In \mbox{Section \ref{section:fragments}} the concrete dynamic complexity classes under consideration are introduced. Results for the collapse of classes are obtained in Section \ref{section:collapse} and results for the separation classes are proved in Section \ref{section:separations}. The alternative $\Delta$-semantics is introduced and studied in Section \ref{section:deltasemantics}. In \mbox{Section \ref{section:dyncharact}} we give the dynamic characterization of first-order logic. We conclude with a discussion and a first step toward separations in Section \ref{section:conclusion}. 

\paragraph{Acknowledgement.}
We thank Nils Vortmeier for careful proofreading. Further we are grateful to the anonymous reviewers of this as well as preceeding work for several very helpful comments.

  \section{Dynamic setting}\label{section:dynamicsetting}
    In this section, we introduce the basic concepts and fix our notation. We mainly borrow it from our previous \mbox{work \cite{ZeumeS15reach}}. 

A \emph{dynamic instance} of a query $\query$ is a pair $(\db, \alpha)$, where $\db$ is a database over a finite domain $\domain$ and $\alpha$ is a sequence of modifications to $\db$, i.e.~a sequence of insertions and deletions of tuples over $\domain$. The dynamic query $\dynProb{$\query$}$ yields as result the relation that is obtained by first applying the modifications \mbox{from $\alpha$} to $\db$ and then evaluating the query $\query$ on the resulting  database. 

The database resulting from applying a modification $\delta$ to a database $\db$ is denoted by $\delta(\db)$. The result $\updateDB{\alpha}{\db}$ of applying a sequence of modifications $\alpha = \delta_1 \ldots \delta_m$ to a database $\db$ is defined by $\updateDB{\alpha}{\db} \df \updateDB{\delta_m}{\ldots (\updateDB{\delta_1}{\db})\ldots}$.

Dynamic programs, to be defined next, consist of an initialization mechanism and an update program.  The former  yields, for every (input) database $\db$,  an initial state with initial auxiliary  data. The latter defines the new state of the dynamic program for each possible modification $\delta$.

A \emph{dynamic schema} is a tuple\footnote{In \cite{ZeumeS15reach} a dynamic schema had an additional schema for an extra database with built-in relations. As here we do not restrict auxiliary relations in any way and allow arbitrary initialization,  we do not need built-in relations.} \mbox{$(\inpSchema, \auxSchema)$} where $\inpSchema$ and $\auxSchema$ are the schemas of the input database and the auxiliary database, respectively. In this work all schemata are purely relational, although all results also hold for input schemas with constants. We always let $\tau\df\inpSchema\cup\auxSchema$. 

\begin{definition}(Update program)\label{def:updateprog}
  An \emph{update program} \prog over dynamic schema \mbox{$(\inpSchema, \auxSchema)$} 
  is a set of first-order formulas (called \textit{update formulas} in the following) that contains,  for every $R \in \auxSchema$ and every
  $\delta \in \{\ins_S, \del_S\}$ with $S \in \inpSchema$, an update formula  $\uf{R}{\delta}{\vec x}{\vec y}$ over the schema $\schema$  where $\vec x$ and $\vec y$ have the same arity as $S$ and $R$, respectively.
\end{definition}

A \emph{program state} $\state$ over dynamic schema \mbox{$(\inpSchema, \auxSchema)$} is a structure $(\domain, \inp,  \aux)$ where $D$ is a finite domain, $\inp$ is a database over the input schema (the \emph{current database}) and $\aux$ is a database over the auxiliary schema (the \emph{auxiliary database}).
The \emph{semantics of update programs} is as follows. For a modification $\delta(\vec a)$, where $\vec a$ is a tuple over $\domain$,  and program state $\state=(\domain, \inp,\aux)$ we denote by $P_\delta(\state)$ the state $(\domain, \delta(\inp), \aux')$, where $\aux'$ consists of relations \mbox{$R'\df\{\vec b \mid \state \models \uf{R}{\delta}{\vec a}{\vec b}\}$}. The effect $P_\alpha(\state)$ of a modification sequence $\alpha = \delta_1 \ldots \delta_m$ to a state $\state$ is the state $\updateState{P}{\delta_m}{\ldots (\updateState{\prog}{\delta_1}{\state})\ldots}$. 

\begin{definition}(Dynamic program) \label{definition:dynprog}
  A \emph{dynamic program} is a triple $(P,\init,Q)$, where
  \begin{compactitem}
   \item  $P$ is an update program over some dynamic schema
  \mbox{$(\inpSchema, \auxSchema)$}, 
    \item \init is a mapping that maps $\inpSchema$-databases to $\auxSchema$-databases, and 
    \item $Q\in\auxSchema$ is a designated \emph{query symbol}.
  \end{compactitem}
\end{definition}

A dynamic program $\calP=(P,\init,Q)$ \emph{maintains}  a dynamic query  $\dynProb{$\query$}$ if, for every dynamic instance $(\db,\alpha)$, the relation $\query(\alpha(\db))$ coincides with the query relation $Q^\state$ in the state \mbox{$\state=P_\alpha(\state_\init(\db))$}, where $\state_\init(\db)$ is the initial state, i.e.~\mbox{$\state_\init(\db) \df (\domain, \db,  \auxInit(\db))$}.

Several dynamic settings and restrictions of dynamic programs have been studied in the literature (see e.g.~\cite{PatnaikI94, Etessami98, GraedelS12, GeladeMS12}). Possible parameters are, for instance
\begin{compactitem}
\item  the logic in which update formulas are expressed;
\item whether, in dynamic instances $(\db,\alpha)$, the initial data\-base $\db$ is always empty; and
\item whether the initialization mapping $\auxInit$ is \emph{permu\-tation-invariant} (short: \textit{invariant}), that is, whether $\pi(\auxInit(\db))=\auxInit(\pi(\db))$ holds, for every data\-base $\db$ and permutation $\pi$ of the domain.
\end{compactitem}
We refer to the introduction for a discussion of the choices made in the following definition.
\begin{definition}(\DynC) \label{definition:dync}
  For a class $\calC$ of formulas, let $\DynC$ be the class of all dynamic queries that  can be maintained by dynamic
  programs with formulas from $\calC$ and arbitrary initialization mapping. 
\end{definition}
In particular $\DynFO$ is the class of all dynamic queries that  can be maintained by first-order update formulas. $\DynProp$ is the subclass of $\DynFO$, where update formulas are not allowed to use quantifiers.

We note that arbitrary, (possibly) non-uniform initialization mappings permit to maintain undecidable queries, even when the logic for expressing update formulas is very weak. 

Allowing arbitrary initialization mappings in Definition \ref{definition:dync} helps us to concentrate on the \emph{maintenance} aspect of dynamic complexity and it helps keeping proofs short. All our results also hold for $\FO$-definable initialization mappings on ordered domains. 
\begin{example}\label{example:acyclictc}
  The transitive closure of an acyclic graph can be maintained in $\DynFO$.  We follow the argument from \cite{PatnaikI94} and construct a dynamic $\DynFO$-program with one binary auxiliary relation $T$  which is intended to store the transitive closure.
  
  Insertions can be handled straightforwardly. After inserting an edge $(u,v)$ there is a path from $x$ to $y$ if, before the insertion, there has been a path from $x$ to $y$ or there have been paths from $x$ to $u$ and from $v$ to $y$. There is a path $p$ from $x$ to $y$ after deleting an edge $(u,v)$ if there was a path from $x$ to $y$ before the deletion (1) but there was no such path via $(u,v)$, or (2) there is an edge $(z,z')$ on $p$ such that $u$ can be reached from $z$ but not from $z'$. If there is still a path $p$ from $x$ to $y$, such an edge $(z,z')$ must exist on $p$, as otherwise $u$ would be reachable from $y$ contradicting acyclicity. All conditions  can be checked using the transitive closure of the graph before the deletion of $(u, v)$. The update formulas for $T$ are as follows:

 \begin{align*}
    \uf{T}{\ins E}{u,v}{x,y} \df & \; T(x,y) \vee \big(T(x,u) \wedge T(v,y)\big) \\
    \uf{T}{\del E}{u,v}{x,y} \df & \; T(x,y) \wedge \Big( \big(\neg T(x,u) \vee \neg T(v,y)\big) \\ 
     & \quad \vee \exists z \exists z' \big(T(x,z) \wedge E(z,z') \wedge (z \neq u \vee z' \neq v) \\ 
     & \quad \quad \quad \quad \quad \wedge T(z',y) \wedge T(z,u) \wedge \neg T(z',u)\big)\Big)
 \end{align*}
 \qed
\end{example}

The following notion will be useful at several occasions. The \emph{dependency graph} of a dynamic program $\prog$ with auxiliary schema $\schema$ has vertex set $V = \schema$ and an edge $(R, R')$ if the relation symbol $R'$ occurs in one of the update formulas for $R$. 
The \emph{deletion dependency graph} is defined as the dependency graph except that only update formulas for delete operations are taken into account.

  \section{Dynamic conjunctive queries}\label{section:fragments}
    
In this section we study the relationship between the variants of dynamic conjunctive queries. 

We first give formal definitions of the classes of queries we are interested in:
\begin{itemize}
  \item $\CQ$: the class of conjunctive queries, that is, queries
  expressible by first-order formulas of the form $\varphi(\vec x) =
  \exists \vec y \psi$, where $\psi$ is a conjunction of atomic formulas.
  \item $\UCQ$: the class of all unions of conjunctive
  queries, that is, queries expressible by formulas of the form
  $\bigvee_i \exists \vec x \psi_i$, where each $\psi_i$ is a conjunction of atomic formulas.
\end{itemize}
We note that safety of queries is not an issue in this paper: we use queries as update formulas only and we can always assume that, for each required arity,  there is an auxiliary ``universal'' relation containing all tuples of this arity over the active domain which could be used to make queries syntactically safe. 

The classes $\CQ$ and $\UCQ$ can be extended by additionally allowing negated atoms, resulting in $\CQneg$ and $\UCQneg$; or they can be restricted by disallowing quantification, resulting in $\PropCQ$, $\PropUCQ$, $\PropCQneg$ and $\PropUCQneg$.  It is well known that $\UCQneg$ and $\EFO$, the class of queries expressible by existential first-order formulas, coincide, but otherwise,  all these classes are distinct. Furthermore, other quantification patterns than $\exists^*$ can be considered, like $\forall^*$ or arbitrary quantification. 

The dynamic program maintaining the transitive closure for acyclic graphs in Example \ref{example:acyclictc} is actually a $\DynUCQneg$-program.

The main goal of this section is to show that the relationship of all these classes in the dynamic setting is much simpler than in the static setting. In the first part of the section we prove that many dynamic classes collapse, as indicated in the left part of Figure~\ref{figure:collapse}. In the second part we show that the dynamic classes $\DynPropCQ$, $\DynProp$ and $\DynCQ$ can be separated. Whether $\DynCQ$, $\DynUCQ$ and $\DynFO$ can be separated remains open.

    \subsection{Collapse results}\label{section:collapse}
      In this section we prove that dynamic classes collapse as indicated in the left part of Figure~\ref{figure:collapse}. More precisely, we show the following theorems. The main results of this section are the following two theorems regarding the second and the third fragment in the left part of Figure~\ref{figure:collapse}.

\begin{theorem}\label{theorem:dynefoequivalences}
  Let $\query$ be a query. Then the following statements are equivalent:
  \begin{enumerate}
    \item $\query$ can be maintained in $\DynUCQneg$.
    \item $\query$ can be maintained in $\DynCQneg$.
    \item $\query$ can be maintained in $\DynEFO$.
    \item $\query$ can be maintained in $\DynAFO$.
  \end{enumerate}
\end{theorem}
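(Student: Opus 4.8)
The plan is to reduce the four-way equivalence to two genuinely dynamic phenomena --- the elimination of unions and the interchangeability of the two quantifier types --- after disposing of the purely syntactic parts. Since every $\CQneg$-formula is a $\UCQneg$-formula with a single disjunct, every $\DynCQneg$-program is a $\DynUCQneg$-program, so $(2)\Rightarrow(1)$ is immediate. Moreover $\UCQneg$ and $\EFO$ denote the same class of queries: an $\EFO$-formula $\exists\vec z\,\chi$ with quantifier-free matrix $\chi$ becomes a $\UCQneg$-formula by putting $\chi$ into disjunctive normal form and distributing the existential quantifiers over the disjuncts, and conversely every $\UCQneg$-formula is prenex-existential after renaming variables. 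As the formula classes coincide, so do the dynamic classes, giving $\DynUCQneg=\DynEFO$ and hence $(1)\Leftrightarrow(3)$ for free. It therefore remains to prove $(1)\Rightarrow(2)$ and $(3)\Leftrightarrow(4)$.

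For $(1)\Rightarrow(2)$, i.e.\ $\DynUCQneg\subseteq\DynCQneg$, I would remove the top-level disjunctions by splitting auxiliary relations into \emph{conjunctive pieces}. An update formula $\bigvee_i \exists\vec z\,\chi_i$ (each $\chi_i$ a conjunction of literals) is handled by introducing, for the relation $R$, one new relation per disjunct, each maintained by a single $\CQneg$-formula $\exists\vec z\,\chi_i$, so that $R$ is the union of its pieces. Wherever $R$ then occurs in another update formula, the atom $R(\vec t)$ is replaced by the disjunction of its pieces; distributing this disjunction over the surrounding conjunction turns that formula back into a union of $\CQneg$-formulas, which can be split again.

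The delicate point here --- and the first obstacle --- is that the dependency graph may contain cycles, so this rewriting must be shown to terminate. The key observation is that substituting an atom by others of the same arity introduces no new variables, so every conjunctive piece that arises ranges over a fixed finite set of variables. Over the fixed schema there are only finitely many conjunctions of literals in these variables up to equivalence, so keeping one representative per equivalence class closes the process after finitely many steps and yields an honest $\DynCQneg$-program. This finiteness is exactly the kind of statement that the \substruclemma is meant to supply, and I would lean on it for the rigorous version.

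For $(3)\Leftrightarrow(4)$, i.e.\ $\DynEFO=\DynAFO$, the basic device is complementation by De~Morgan: if $R$ is updated by $\exists\vec z\,\chi$, then after substituting every auxiliary atom $R'(\vec t)$ by $\neg\bar{R'}(\vec t)$ and applying De~Morgan, the complement $\bar R$ is maintained by the universal formula $\forall\vec z\,\neg\chi$, an $\AFO$-update (input atoms stay and may appear negated, which is allowed). Running this on a $\DynEFO$-program produces an $\AFO$-program maintaining the \emph{complement} of the query, and symmetrically a $\DynAFO$-program yields an $\EFO$-program for the complement. The hard part will be precisely that this naive construction maintains $\overline{\query}$ rather than $\query$: the designated query relation must hold the answer positively, yet its natural update stays existential, so $(3)\Leftrightarrow(4)$ is really a closure-under-complement statement. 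I would resolve it through a normal-form lemma asserting that the query relation can be assumed to be updated by a quantifier-free formula (hence simultaneously existential and universal); granting this, one maintains all auxiliary complements universally and computes the query relation by its quantifier-free update with the substitution $R'\mapsto\neg\bar{R'}$, obtaining an all-universal program for $\query$ itself, the converse inclusion being symmetric. I expect this reconciliation of the query symbol with the complementation trick to be the main obstacle of the theorem.
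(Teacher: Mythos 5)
Your treatment of $(2)\Rightarrow(1)$ and $(1)\Leftrightarrow(3)$ is correct and matches the paper, and your plan for $(3)\Leftrightarrow(4)$ is in fact the paper's own route: maintain the complement $\hat R$ of every auxiliary relation, push negations through by De Morgan (turning $\exists^*$ updates into $\forall^*$ updates), and repair the problem at the designated query symbol by first putting the program into a normal form in which the query relation is updated by a quantifier-free (indeed atomic) formula. That normal form is exactly Lemma \ref{lemma:shifting}, proved by the squirrel technique, and the complementation argument is Lemma \ref{lemma:quantifierswitch}; so this half of your proposal is sound in outline, though you assume the normal-form lemma rather than prove it.

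The genuine gap is in $(1)\Rightarrow(2)$, i.e.\ $\DynUCQneg\subseteq\DynCQneg$. Your rewriting --- split each relation into one piece per disjunct, replace positive occurrences of a relation by the disjunction of its pieces, distribute, and split again --- does not terminate, and the finiteness argument offered to close it is unsound: after one round of rewriting, the conjunctive pieces are conjunctions of literals over the \emph{freshly introduced piece symbols}, not over a fixed schema, so the bound ``finitely many conjunctions of literals over the fixed schema up to equivalence'' never applies; the schema grows with every round. Concretely, suppose some relation $R$ has update formula $C_1\lor C_2$ where $C_1$ contains a positive occurrence of $R$ itself (cycles in the dependency graph are allowed). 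If $R$ is maintained as a union of pieces $R_1,\dots,R_p$, the intended new value of $R$ is $\bigvee_{w=1}^{p} C_1[R\mapsto R_w]\lor C_2[\cdots]$, a union of $p+1$ conjunctive queries over the old pieces; each piece can be updated by only one $\CQneg$-formula, and $\CQneg$ is not closed under union, so in general $p$ pieces cannot cover $p+1$ pairwise incomparable disjuncts --- the required number of pieces grows at every step, and merging pieces ``up to equivalence'' would require semantic coincidences that nothing in the construction guarantees. (Your appeal to the \substruclemma is also misplaced: that lemma, from the authors' earlier work, is a lower-bound tool for quantifier-free programs and says nothing about such a closure property.) The paper's proof (Lemma \ref{lemma:disjunctionfree}) avoids splitting entirely: disjunction is simulated by \emph{existential quantification}, which guesses the satisfied disjunct at evaluation time. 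Each disjunct $C_i$ is evaluated on auxiliary copies $\hat R_1,\hat R_2$ of the relations, one padded so as to be provably non-empty, and a precomputed, never-changing ``blueprint'' relation $T_{R,\delta}$ forces the guessed witnesses of at least one disjunct to coincide with the tuple actually being defined; one-element domains, where the padding fails, are handled by a separate program for $0$-ary queries (Lemma \ref{lemma:nullary}) merged in by a case distinction. Some device of this kind, letting a quantifier do the choosing instead of multiplying relations, appears unavoidable; the purely syntactic splitting cannot be rescued by a finiteness argument.
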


\begin{theorem}\label{theorem:dynucqequivalences}
  Let $\query$ be a query. Then the following statements are equivalent:
  \begin{enumerate}
    \item $\query$ can be maintained in $\DynUCQ$.
    \item $\query$ can be maintained in $\DynCQ$.
  \end{enumerate}
\end{theorem}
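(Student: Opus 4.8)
The inclusion $\DynCQ \subseteq \DynUCQ$ is immediate, since every $\CQ$ is a $\UCQ$ with a single disjunct; the whole content of the theorem is therefore the reverse inclusion $\DynUCQ \subseteq \DynCQ$. The plan is to transform a given $\DynUCQ$-program $\prog$ into an equivalent $\DynCQ$-program that maintains the same query, by eliminating the top-level disjunctions from every update formula. As in the union-elimination of Theorem~\ref{theorem:dynefoequivalences}, the guiding idea is to maintain each disjunct of a $\UCQ$ update in a relation of its own and to recover the union separately. The essential difference to that theorem is that here, lacking atomic negation, the simulation cannot use negated atoms to distinguish or combine disjuncts and must instead rely entirely on existential quantification, i.e.\ projection --- precisely the feature that separates $\CQ$ from $\PropCQ$.

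Concretely, for each auxiliary relation $R$ of arity $a$ whose update formula $\uf{R}{\delta}{\vec x}{\vec y}=\bigvee_{i=1}^{k} C_i$ is a $\UCQ$, I would introduce a relation $\Rh$ of arity $a+1$ together with the intended invariant $R(\vec y)\equiv \exists c\,\big(\mtext{Idx}(c)\wedge \Rh(c,\vec y)\big)$, where $\mtext{Idx}$ holds a fixed set of marker elements provided by the (arbitrary) initialization mapping, one marker per disjunct, and the slice $\Rh(c_i,\cdot)$ is meant to track the $i$-th disjunct $C_i$. The payoff of this encoding is in how references are handled: an occurrence of $R$ inside the conjunction of some other update formula is replaced by $\exists c\,(\mtext{Idx}(c)\wedge \Rh(c,\cdot))$, and since an existential quantifier pulls out of a surrounding conjunction, the host formula stays conjunctive. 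This is exactly the mechanism that prevents the disjunctions created by substituting a union into a conjunction from nesting and blowing up, and it is where the extra power of quantification over $\PropCQ$ is used. With this rewriting each individual disjunct $C_i$ becomes a single $\CQ$ over the merged relations, and the designated query relation can be read off by one further projection, so correctness reduces to maintaining the invariant for the relations $\Rh$.

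The hard part will be the maintenance of the merged relations themselves. The naive update for $\Rh$ assigns genuinely different conjunctive bodies to different slices, i.e.\ it is itself a disjunction $\bigvee_i\big(c=c_i\wedge\tilde C_i\big)$ of differently-shaped $\CQ$s, which is not a single $\CQ$; and because the dependency graph of $\prog$ is typically cyclic (relations, including the query relation, refer to themselves and to each other) one cannot simply stratify the relations and expand the substitutions bottom-up. Resolving this is the crux of the proof. My plan is to keep the per-disjunct slices as separate $\CQ$-maintained relations and to engineer the marker set together with a pre-initialized navigation relation so that one \emph{uniform} conjunctive update, parameterized by the index $c$, reproduces every slice at once; should a fully uniform formula prove impossible, the fallback is an induction over the modification sequence establishing that only a bounded, program-dependent amount of disjunctive information ever has to be represented, so that the construction closes with finitely many auxiliary relations. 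Proving this boundedness --- rather than the projection bookkeeping, which is routine once the encoding is fixed --- is where I expect the real difficulty to lie.
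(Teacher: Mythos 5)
Your overall plan --- reduce the theorem to eliminating disjunctions from update formulas, keep per-disjunct information, and recover the union by an existential projection over markers supplied by the arbitrary initialization --- points in the same general direction as the paper's proof (Lemma \ref{lemma:disjunctionfree}(b), prefigured by Example \ref{example:disjunctionfree}). However, your proposal stops precisely at the step that constitutes the proof, and you say so yourself: maintaining the sliced relation $\Rh$ requires an update behaving like $\bigvee_i\bigl(c=c_i\wedge\tilde C_i\bigr)$, which is not a single $\CQ$. Neither of your two exits closes this gap. Plan A (``engineer the marker set together with a pre-initialized navigation relation so that one uniform conjunctive update reproduces every slice at once'') is a restatement of the problem, not a construction: the bodies $\tilde C_i$ are conjunctions of different atoms over different variable patterns, so a single conjunctive formula evaluated at $c=c_i$ versus $c=c_j$ imposes the same atomic constraints up to the constant plugged in, and you provide no mechanism by which a conjunct can be switched off for one index value and on for another. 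Plan B rests on an unargued boundedness claim; since the disjunctions recur at every modification step, nothing is bounded, and this is not how the difficulty is resolved. You also never address small domains, which turn out to matter.

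The idea you are missing is the paper's padding trick, which supplies exactly the on/off mechanism your plan A needs. The paper does not slice $R$ by disjunct; instead, next to each $R$ it maintains a companion $\Rh_1$ of arity $\arity(R)+1$ containing a marked copy of $R$ (tuples $(c,\vec a)$ with $\vec a\in R$, for a fixed domain element $c$) \emph{together with all tuples whose first component differs from $c$}. Each disjunct $C_i$ is rewritten into $\Ch_i(z'_i,\cdot)$ by giving every atom a fresh guard variable $z'_i$ as first argument and replacing $R$ by $\Rh_1$; the new update formula existentially quantifies all guards and conjoins \emph{all} rewritten disjuncts $\Ch_1\wedge\ldots\wedge\Ch_k$ together with one atom of a precomputed, never-changing blueprint relation $T_{R,\delta}$ that holds exactly when some guard $z'_i$ equals $c$ and its variable block equals the genuine tuple $(\vec u,\vec x,\vec y)$. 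A disjunct that is not selected is satisfied vacuously by choosing $z'_i\neq c$ (here $|\domain|\ge 2$ is needed), while $T_{R,\delta}$ forces at least one disjunct to be evaluated on the real data; thus disjunction becomes conjunction plus existential guessing inside a single uniform $\CQ$, and every auxiliary relation, including the query relation, keeps a conjunctive update. Because this padding fails on one-element domains, the paper additionally runs a $\DynPropCQ$-program for that case (Lemma \ref{lemma:nullary}) and combines the two programs via $\ufwa{Q'''}{\delta}=\ufwa{Q'}{\delta}\wedge\ufwa{Q''}{\delta}$, delegating the case distinction to the initialization mapping, which keeps the relations of the irrelevant program universal so that the conjunction is transparent. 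Without the padding device, or a worked-out substitute for it, your proposal does not yet constitute a proof.
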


Using the same technique as is used for removing unions from dynamic unions of conjunctive queries, a normal form for $\DynFO$ can be obtained. The class $\DynFOand$ contains all queries maintainable by a program whose update formulas are in prenex normal form where the quantifier-free part is a conjunction of atoms.

\begin{theorem}\label{theorem:dynfoequivalences}
  Let $\query$ be a query. Then the following statements are equivalent:
  \begin{enumerate}
    \item $\query$ can be maintained in $\DynFO$.
    \item $\query$ can be maintained in $\DynFOand$.
  \end{enumerate}
\end{theorem}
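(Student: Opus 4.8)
The inclusion $\DynFOand \subseteq \DynFO$ is immediate, since every $\DynFOand$-program is in particular a $\DynFO$-program. The plan for the converse is to transform an arbitrary $\DynFO$-program into an equivalent one whose update formulas are prenex with a conjunction of atoms as matrix, in three successive steps: eliminate negation, put the formulas into prenex disjunctive normal form, and finally eliminate the remaining disjunctions. Throughout, the domain, the input schema, and the maintained query stay fixed; I only enlarge the auxiliary schema and rewrite update formulas, while preserving the value of the designated query relation after every modification.

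\emph{Eliminating negation.} First I would, for every relation $S \in \schema$ (input or auxiliary), add a fresh auxiliary relation $\bar S$ of the same arity together with the invariant $\bar S = \domain^{\arity(S)} \setminus S$; the equality relation and its complement $\neq$ are treated as built-in atoms. For an input relation the complement is trivial to maintain, since an insertion into $S$ is a deletion from $\bar S$ and vice versa, expressible using $=$ and $\neq$. For an auxiliary relation $R$, the update formula for $\bar R$ is $\neg\uf{R}{\delta}{\vec x}{\vec y}$. Bringing both $\uf{R}{\delta}{\vec x}{\vec y}$ and its negation into negation normal form and replacing every negated atom $\neg S(\vec z)$ by the positive atom $\bar S(\vec z)$ turns all update formulas into \emph{positive} first-order formulas, yielding an equivalent $\DynFOpos$-program. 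Each such formula is then rewritten into prenex normal form $Q_1 x_1 \cdots Q_k x_k\,\psi$ and its quantifier-free matrix $\psi$ into disjunctive normal form $\bigvee_i C_i$, where each $C_i$ is a conjunction of positive atoms; these are purely logical equivalences.

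\emph{Eliminating disjunction.} This is the heart of the argument and reuses the union-removal construction behind Theorem~\ref{theorem:dynucqequivalences}. Disjunctions lying below only existential quantifiers can be distributed outward, but a disjunction trapped under a universal quantifier cannot, as $\forall x(P(x) \vee S(x))$ is not equivalent to $\forall x P(x) \vee \forall x S(x)$; such a disjunction must instead be precomputed. For a disjunctive subformula $\bigvee_i C_i$ with free variables $\vec z$ I would introduce a fresh auxiliary relation $M$ maintaining $\{\vec z \mid \bigvee_i C_i\}$ and replace the subformula by the single atom $M(\vec z)$, so that the surrounding update formula acquires a conjunction of atoms as matrix. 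Since $M$ is itself defined by a union, its own update formulas are unions of conjunctive queries, and the construction of Theorem~\ref{theorem:dynucqequivalences} converts its maintenance into $\DynCQ$-form, i.e.\ into prenex formulas with a conjunctive-atom matrix. Iterating this over all disjunctive subformulas of all update formulas produces an equivalent $\DynFOand$-program, which establishes $\DynFO = \DynFOand$.

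\emph{Main obstacle.} I expect the genuine difficulty to lie entirely in the last step. The precomputed relations $M$ have update formulas that again refer to the (disjunctively defined) relations of the previous program state, so one cannot treat a single update formula in isolation: the whole system of auxiliary relations must be set up simultaneously as a finite closed family --- morally, one relation per relevant conjunctive component of a subformula --- and one must verify that the union-removal construction applies uniformly across this family and that the resulting program still maintains the query correctly after every modification. Establishing the finiteness (boundedness) of this family and the correctness of the combined construction is precisely the technical content that is imported from Theorem~\ref{theorem:dynucqequivalences}; the reductions to positive and to prenex form are routine by comparison.
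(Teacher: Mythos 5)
Your first two steps --- maintaining complement relations to eliminate negation, then passing to prenex form with a DNF matrix --- are exactly the paper's route (Lemma~\ref{lemma:negationfree}(a)), and the reduction of everything to the disjunction-elimination step is also right. The gap is in how you eliminate the disjunctions. You introduce, for each disjunctive matrix $\bigvee_i C_i$, a fresh relation $M$ that \emph{dynamically maintains} the set $\{\vec z \mid \bigvee_i C_i(\vec z)\}$, and you claim that ``since $M$ is itself defined by a union, its own update formulas are unions of conjunctive queries'', so that Theorem~\ref{theorem:dynucqequivalences} finishes the job. This claim fails. To keep the invariant $M=\{\vec z\mid \bigvee_i C_i(\vec z)\}$ across a modification $\delta$, the update formula for $M$ must express the value of $\bigvee_i C_i$ in the \emph{next} state, i.e.\ it is obtained by substituting, for every atom $S(\vec w)$ occurring in some $C_i$, the update formula $\uf{S}{\delta}{\vec u}{\vec w}$ of $S$. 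After your normalization these update formulas are general prenex positive formulas containing universal quantifiers, so $M$'s update formulas are not UCQs and Theorem~\ref{theorem:dynucqequivalences} does not apply to them. (The only way to make them UCQs is to define $M$'s new value from the \emph{old} values of the constituent relations, but then $M$ holds a stale, off-by-one value and replacing $\bigvee_i C_i$ by $M(\vec z)$ is simply incorrect.) Worse, the substitution re-introduces disjunctions --- the outer $\bigvee_i$ together with the DNF matrices of the substituted formulas --- so each relation $M$ needs further relations of the same kind: the ``finite closed family'' your last paragraph postulates is precisely what is missing, and it cannot be ``imported from Theorem~\ref{theorem:dynucqequivalences}'', whose proof establishes nothing about maintaining values of subformulas; the iteration as described is an unbounded regress.

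The paper's proof of Lemma~\ref{lemma:disjunctionfree}(c) avoids maintaining subformula values altogether. It leaves the quantifier prefix $\exists \vec y_1\forall \vec y_2\ldots$ untouched and rewrites only the matrix: using padded copies $\Rh_1,\Rh_2$ of every relation (which guarantee non-emptiness when $|\domain|\ge 2$) and a \emph{static}, precomputed ``blueprint'' relation $T_{R,\delta}$ whose interpretation never changes, the matrix $\bigvee_i C_i$ is replaced by $\exists z'_1 \exists \vec z_1\ldots\exists z'_k\exists \vec z_k\,\bigl(\Ch_1\wedge\ldots\wedge \Ch_k\wedge T_{R,\delta}(\ldots)\bigr)$, where the selector existentials are appended innermost. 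This replacement is a \emph{pointwise} logical equivalence, valid for every fixed valuation of the universally quantified variables, so it goes through under arbitrary prefixes; and since $T_{R,\delta}$ is static, no regress arises. Together with the separate treatment of one-element domains (which the padding trick requires, handled via Lemma~\ref{lemma:nullary} and a case distinction in the initialization), this is the mechanism you would have to substitute for your $M$-relations to make the argument sound.
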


Further we prove the following result for the quantifier-free variants of dynamic conjunctive queries.

\begin{theorem}\label{theorem:dynpropequivalences}
  Let $\query$ be a query. Then the following statements are equivalent:
  \begin{enumerate}
    \item $\query$ can be maintained in $\DynProp$.
    \item $\query$ can be maintained in $\DynPropUCQneg$.
    \item $\query$ can be maintained in $\DynPropCQneg$.
    \item $\query$ can be maintained in $\DynUCQ$.
  \end{enumerate}
\end{theorem}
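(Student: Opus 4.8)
The plan is to collapse the four classes by establishing a short cycle of inclusions, so that after recording the cheap facts only two genuine constructions remain. The cheap facts are: every quantifier-free formula is equivalent to one in disjunctive normal form, i.e.\ a $\PropUCQneg$-formula, so $\DynProp=\DynPropUCQneg$ with the same auxiliary relations and initialization (statements (1) and (2)); and the syntactic inclusions $\PropUCQ,\PropCQneg\subseteq\PropUCQneg\subseteq\Prop$ give $\DynPropUCQ\subseteq\DynProp$ and $\DynPropCQneg\subseteq\DynPropUCQneg\subseteq\DynProp$ for free. Hence it suffices to prove $\DynProp\subseteq\DynPropCQneg$ (removing disjunction, keeping atomic negation) and $\DynPropCQneg\subseteq\DynPropUCQ$ (removing negation, allowing disjunction): the first yields the equivalence of statements (1), (2) and (3), and the second, together with the trivial inclusion $\DynPropUCQ\subseteq\DynProp$, yields the equivalence of (3) and (4).

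For $\DynProp\subseteq\DynPropCQneg$ the naive idea of splitting a relation into one piece per disjunct fails, since a single conjunction of literals can only shrink a relation and the query relation must itself hold the answer. The device I would use instead is to write every update formula in \emph{conjunctive} normal form $\bigwedge_m c_m$ and to maintain, for each clause $c$ that occurs, an auxiliary relation $D_c$ whose intended value is the complement $\neg c$ of the clause over the current state; modification parameters occurring in a clause are turned into extra argument positions of $D_c$, so that $D_c$ is an ordinary data relation and a clause is recovered by plugging the actual parameters into its arguments. Then any clause can be referenced as the single negative literal $\neg D_c$, so that each original update formula, and in particular the update of the query relation, becomes the single conjunction $\bigwedge_m\neg D_{c_m}$, with no union ever materialised. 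The construction is self-supporting because the update of $D_c$ is again a single conjunction of $D$-literals: expanding $\neg c^{\text{new}}$ through De Morgan and substituting the already-constructed single-conjunction updates of the relations occurring in $c$ produces exactly a conjunction of $D$-literals (and (in)equalities), where a disjunction that appears --- for instance the enlargement of a complemented input relation under the opposite modification --- is captured by the negative literal $\neg D_{c'}$ of a \emph{dual} clause $c'$, which is maintained by the same mechanism.

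For $\DynPropCQneg\subseteq\DynPropUCQ$ I would remove atomic negation by maintaining, besides every relation $R$, its complement $\bar R$, together with fixed relations for equality and inequality installed by the initialization. Each negative literal $\neg P(\vec t)$ is replaced by the positive atom $\bar P(\vec t)$, turning a conjunction of literals into a conjunction of positive atoms; the complement relation $\bar R$ is updated by the De Morgan dual of $R$'s update, which is a union of positive atoms and hence a $\PropUCQ$-formula. The only point needing care is the complement of an input relation: under the modification that enlarges it by the single tuple $\vec a$ the update is $\bar S(\vec y)\vee\vec y=\vec a$, a positive union, and under the shrinking modification it is $\bar S(\vec y)\wedge\vec y\neq\vec a$, where the inequality is supplied by the fixed inequality relation so as to stay positive. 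Since $\PropUCQ$ permits unions, the query relation presents no difficulty here.

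The step I expect to be the main obstacle, and would write out most carefully, is $\DynProp\subseteq\DynPropCQneg$. Two invariants must be checked: that the family of clauses for which relations $D_c$ are introduced is finite, which holds because all literals range over a fixed finite vocabulary in the bounded set of variables (together with the bounded number of modification-parameter slots) available to quantifier-free updates, so the dual clauses generated during the expansion are drawn from this finite set and no growth of arity occurs; and that the intended invariant $D_c=\neg c$ is preserved by the constructed updates, which is precisely the De Morgan computation above plus a matching initialization. This mirrors the union elimination carried out for the quantified fragments in Theorems~\ref{theorem:dynefoequivalences} and~\ref{theorem:dynucqequivalences}; the essential difference is that there a single existential quantifier over a bounded tag can collect the disjuncts, whereas here the absence of quantifiers forces the clause-complement encoding, which lets each clause be read off as one negative literal.
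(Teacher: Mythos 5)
Your plan reproduces the paper's architecture --- (1)$\Leftrightarrow$(2) via disjunctive normal form, a clause-complement construction for $\DynProp\subseteq\DynPropCQneg$ as in the paper's Lemma~\ref{lemma:dynpropcq}, and complement relations to eliminate negation as in Lemma~\ref{lemma:negationfree}(b) --- and your reading of item~(d) as $\DynPropUCQ$ (as in Figure~\ref{figure:collapse}) is indeed the intended one, since a literal reading as $\DynUCQ$ would contradict Theorem~\ref{theorem:separations}(b); your negation-elimination step is also correct as written. The gap lies in the step you yourself flag as the main obstacle, $\DynProp\subseteq\DynPropCQneg$, exactly where a relation $R$ occurs \emph{positively} in a clause $c$. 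To update $D_c=\neg c$ you must express $\neg R^{\mathrm{new}}$ as a conjunction of literals, and your recipe does this by negating the already-rewritten update of $R$: from $\neg\bigwedge_m\neg D_{c_m}$ you get the disjunction $\bigvee_m D_{c_m}$, which you declare to be a dual clause $c'$ over the $D$-vocabulary, ``maintained by the same mechanism''. That mechanism does not close. Maintaining $D_{c'}$ needs the negations $\neg D_{c_m}^{\mathrm{new}}$ of the updates of the relations $D_{c_m}$; those updates are again multi-literal conjunctions, so their negations are new clauses over the freshly introduced relations, which require yet further relations, and so on. Semantically, round $k$ introduces relations whose intended value is the state of some relation $k$ modifications into the future (a quantifier-free squirrel regress); syntactically, the clause produced in round $k+1$ mentions relations introduced in round $k$, so clauses never repeat, and once modifications carry parameters the arities grow with every round. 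Your finiteness argument (``all literals range over a fixed finite vocabulary \ldots no growth of arity occurs'') is therefore invalid for the construction as described: the vocabulary is not fixed, it is enlarged by the construction itself. A toy program with $\phi^Q_\delta=A\vee X$, $\phi^A_\delta=(A\vee B)\wedge C$ and trivial atomic updates for $B$, $C$, $X$ already forces an infinite sequence of pairwise distinct $D$-relations.

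The repair is precisely the move the paper makes in Lemma~\ref{lemma:dynpropcq}, using a tool you already deploy in your other step, but it must be applied \emph{before} the clause relations are introduced: first extend the $\DynProp$-program so that the complement $\hat{R}$ of \emph{every} relation --- auxiliary as well as input --- is maintained with its own CNF update formula (replacement technique). Then $\neg R^{\mathrm{new}}$ is never obtained by negating a rewritten formula: it equals $\hat{R}^{\mathrm{new}}$, which is given by $\hat{R}$'s own CNF update formula, whose clauses are clauses of the extended original program. Consequently the update of each $D_c$ is a conjunction of such CNF formulas and introduces no new clauses; a single round of adding the relations $D_c$, followed by replacing every clause everywhere by its negative literal $\neg D_c$, terminates with finitely many relations of bounded arity, and both of your invariants then hold.
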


Before we turn to the proofs of the theorems, we discuss the proof techniques that will be used.

For showing that a class $\DynC$ of queries is contained in a class $\DynC'$, it is sufficient to construct, for every dynamic program with update queries from class $\calC$, an equivalent dynamic program with update queries from class $\calC'$. In cases where  $\calC' \subset \calC$ this can also be seen as constructing a $\calC'$-normal form for $\calC$-programs. 

Most of the proofs for the collapse of two dynamic classes in this paper are not very deep. Indeed, most of them use one or more of the following three (easy) techniques.

The \emph{replacement technique} (\cite{ZeumeS15reach}) is used to remove subformulas of a certain kind from update formulas and to replace their ``meaning'' by additional auxiliary relations. In this way, we often can remove negations (choose negative literals as subformulas, see the proof of Lemma \ref{lemma:negationfree}) and disjunctions (see proof of Lemma \ref{lemma:disjunctionfree}) from update formulas. 

The \emph{preprocessing technique} is used to convert (more) complicated update formulas into easier update formulas by splitting the computation performed by the complicated update formula into two parts; one of them performed by the initialization mapping and stored in an additional auxiliary relation, the other one performed by the easier update formula using the pre-computed auxiliary relation. Applications of this technique are the removal of unions from dynamic unions of conjunctive queries (see example below) as well as proving the equivalence of semantics for dynamic conjunctive queries with negations (see \mbox{Lemma \ref{lemma:qfotodeltaqfo}}). 

\begin{example}\label{example:disjunctionfree}
  We consider the update formula $$\uf{R}{\delta}{u}{x} = \exists y \big(U(x,y) \lor V(x,u) \big)$$ for a unary relational symbol $R$. We aim at an equivalent update formula $\ufb{R}{\delta}{u}{x}$ without disjunction. The idea is to store a `disjunction blue print' in a precomputed auxiliary relation $T$ and to use existential quantification to guess which disjunct becomes true. 

  In this example, we assume that in every state of the dynamic program on  every database, both the interpretations of $U$ and $V$ are always non-empty sets.\footnote{This assumption will eventually be removed in the proof.} 
Then, $\uf{R}{\delta}{u}{x}$ can be replaced by 
      $$\exists y \exists z_1 \exists z_2  \exists z_3 \exists z_4 \big ( U(z_1,z_2) \land V(z_3,z_4) \land T(z_1, z_2, z_3, z_4, x,y,u)\big)$$
 
  where $T$ is an additional auxiliary relation symbol which is interpreted, in every state $\state$, by a $7$-ary relation $T^\state$ containing all tuples $(a_1, \ldots, a_7)$ with $(a_1, a_2) = (a_5,a_6)$ or $(a_3, a_4) = (a_5,a_7)$.  Thus $T^\state$ ensures that either the values chosen for $z_1, z_2$ coincide with the values of $x, y$ or the values of $z_3, z_4$ coincide with $x, u$. 

  Therefore, the initialization mapping initializes $T$  with the result of the query
        $$\query_T(z_1, z_2, z_3, z_4, x,y,u) \df \big((z_1, z_2) = (x, y) \lor (z_3, z_4) = (x, u)\big).$$

  Observe that this approach fails when $U$ or $V$ are interpreted by empty relations. In order to cover empty relations as well, some extra work needs to be done (see proof of Lemma \ref{lemma:disjunctionfree}).

\end{example}

The \emph{squirrel technique} maintains additional auxiliary relations that reflect the state of some auxiliary relation after every possible single modification (or short modification sequence).\footnote{Squirrels usually make provisions for every possible future.} For example, if a dynamic program contains a relation symbol $R$ then a fresh relation symbol $R_\ins$ can be used, such that the interpretation of $R_\ins$ contains the content of $R$ after modification $\ins$ (for every possible insertion tuple). Of course, $R_\ins$ has higher arity than $R$, as it takes the actual inserted tuple into account.
Sample applications of this technique are the removal of quantifiers from \emph{some} update formulas (see the following example and Lemma \ref{lemma:shifting}) and the maintenance of first-order queries in $\DynCQneg$ (see \mbox{Theorem \ref{theorem:non-recursive})}.

\begin{example}\label{example:shifting}
  Consider the update formula 
  $$\uf{Q}{\ins}{u_1}{x} = \exists y \big(Q(x) \vee \neg S(u_1, y) \big)$$
  for the query symbol $Q$ of some dynamic program $\prog$. In order to obtain a quantifier-free update formula for $Q$ after insertion of an arbitrary tuple we maintain the relation $Q_\ins(\cdotp, \cdotp)$ that contains a tuple $(a, b)$ if and only if $b$ would be in $Q$ in the next state, after insertion of $a$. Similarly for $S$ and deletions.

  Then the update formula $\ufwa{Q}{\ins}$ can be replaced by the quantifier-free formula $\uf{Q}{\ins}{u_1}{x} \df Q_\ins(u_1,x)$. The relation $Q_\ins$ can be updated via
    \begin{align*}
    \uf{Q_\ins}{\ins}{u_0}{u_1, x}& \df \exists y \big(Q_\ins(u_0, x) \vee \neg S_\ins(u_0, u_1, y) \big) \\
    \uf{Q_\ins}{\del}{u_0}{u_1, x}& \df \exists y \big(Q_\del(u_0, x) \vee \neg S_\del(u_0, u_1, y) \big)
    \end{align*}
and similarly, for the other new auxiliary relations.
\end{example}

We note that, in this example, the application of the technique does not eliminate all quantifiers in the program (in fact, it removes one and introduces two new formulas with quantifiers), but it removes quantification from the update formula for a \emph{particular relation}. Removing  quantification from the update formulas of the query relation will turn out to be useful in the proofs of Lemmata \ref{lemma:disjunctionfree}, \ref{lemma:quantifierswitch} and \ref{lemma:qfotodeltaqfo}.

This concludes the description of the techniques. In the following, as a preparatory step, we show how to remove quantification from the update formulas of the query relation. Afterwards we exhibit constructions for removing negations and unions from some dynamic fragments. Finally we present a construction for switching quantifiers in dynamic programs, that is, e.g., for constructing an $\EFO$-program from an $\AFO$-program.  

The following technical lemma shows how to remove quantifiers from the update formulas of the query relation. For an arbitrary quantifier prefix \mbox{$\quant \in \{\exists, \forall\}^*$} let $\QFO$ be the class of queries expressible by formulas with quantifier prefix $\quant$. If $\quant$ is a substring of $\quant'$ and $\query$ is a query in $\QFO$ then trivially $\query$ is in $\QFO[\quant']$ as well.

\begin{lemma}\label{lemma:shifting}
  Let $\quant$ be an arbitrary quantifier prefix. For every $\DynQFO$-program there is an equivalent $\DynQFO$-program $\prog$ such that the update formulas for the designated query symbol of $\prog$ consist of a single atom.
\end{lemma}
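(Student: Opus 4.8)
The plan is to apply the \emph{squirrel technique} of Example~\ref{example:shifting} in full generality. The idea is to maintain, for the designated query symbol $Q$ and every modification type $\delta \in \{\ins_S, \del_S \mid S \in \inpSchema\}$, a fresh auxiliary relation $Q_\delta$ that precomputes the value $Q$ would take after performing $\delta$. Concretely, if the update formula $\uf{Q}{\delta}{\vec x}{\vec y}$ of the given program has a modification tuple $\vec x$ and an output tuple $\vec y$ (of the arity of $Q$), then $Q_\delta$ is given arity $|\vec x| + |\vec y|$, with the intended invariant that in every reachable state $\state$
$$(\vec a, \vec b) \in Q_\delta^\state \iff \state \models \uf{Q}{\delta}{\vec a}{\vec b}.$$
Granting this invariant, the update formula for the query symbol may be replaced by the single atom $\uf{Q}{\delta}{\vec x}{\vec y} \df Q_\delta(\vec x, \vec y)$, which is exactly what the lemma asks for. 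The whole difficulty is to maintain the relations $Q_\delta$ \emph{without increasing the quantifier prefix} beyond $\quant$.

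To this end I would introduce such a squirrel $R_\delta$ for \emph{every} auxiliary relation $R$ and every modification type $\delta$, keeping all original update formulas (the one for $Q$ being replaced by the atom above, though still used as a template). To maintain $R_\delta$ under a modification $\delta'$ with modification tuple $\vec x'$, I take the original formula $\uf{R}{\delta}{\vec x}{\vec y} = \quant\,\vec z\; \psi$ and substitute, inside the quantifier-free part $\psi$, every auxiliary atom $R'(\vec t)$ by the atom $R'_{\delta'}(\vec x', \vec t)$ and every input atom $S(\vec t)$ by the quantifier-free formula describing membership in $S$ after $\delta'(\vec x')$, that is, $S(\vec t)$, or $S(\vec t) \mor \vec t = \vec x'$, or $S(\vec t) \mand \vec t \neq \vec x'$, according to whether $\delta'$ leaves $S$ untouched, inserts into $S$, or deletes from $S$. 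Since atoms are replaced only by atoms, respectively by quantifier-free formulas, the quantifier-free part stays quantifier-free, and the resulting formula $\uf{R_\delta}{\delta'}{\vec x'}{\vec x, \vec y}$ still has quantifier prefix $\quant$; hence it lies in $\QFO$. Note that these finitely many squirrels form a closed system: the update formula of $R_\delta$ only refers to squirrels $R'_{\delta'}$, never to longer modification sequences.

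Correctness follows by induction over the modification sequence. Writing $\state' = \updateState{P}{\delta'}{\state}$ for the successor of $\state$ under $\delta'(\vec a')$, the semantics of update programs gives $R^{\state'}(\vec t) \Leftrightarrow \state \models \uf{R}{\delta'}{\vec a'}{\vec t}$ for auxiliary $R$, which by the inductive hypothesis equals $R_{\delta'}^\state(\vec a', \vec t)$; similarly, the adjusted input atoms correctly describe $S^{\state'}$. Consequently, evaluating $\uf{R}{\delta}{\vec a}{\vec b}$ in $\state'$ yields the same truth value as evaluating its substituted version $\uf{R_\delta}{\delta'}{\vec a'}{\vec a, \vec b}$ in $\state$, which is precisely how $R_\delta^{\state'}$ is computed; this re-establishes the invariant for every $R_\delta$, and the case $R = Q$ gives the correctness of the single-atom update formula for $Q$. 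The initialization mapping of $\prog$ extends the given one by putting, for every squirrel $R_\delta$, the tuple $(\vec a, \vec b)$ into $R_\delta$ exactly if $\state_\init(\db) \models \uf{R}{\delta}{\vec a}{\vec b}$; since arbitrary initialization mappings are permitted, this is harmless.

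The main obstacle, and the reason the squirrel relations are indispensable, is this preservation of the quantifier prefix. The naive alternative --- maintaining $Q_\delta$ under $\delta'$ by literally substituting the update formulas $\uf{R}{\delta'}$ for the occurrences of the relations $R$ in $\uf{Q}{\delta}$ --- would nest a $\quant$-block inside another $\quant$-block and in general leave $\QFO$. Storing the post-modification values in dedicated relations of matching shape is exactly what lets us replace each such occurrence by a single atom and keep the prefix intact. Beyond this, the remaining points are routine bookkeeping: making sure the fresh modification variables $\vec x'$ cause no variable capture in the substitution, and regarding a single atom as a (vacuously $\quant$-quantified) formula of $\QFO$, which is legitimate since the empty prefix is a substring of $\quant$.
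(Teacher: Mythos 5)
Your proposal is correct and takes essentially the same route as the paper's proof: both maintain, for every auxiliary relation $R$ and every modification type $\delta$, a squirrel relation $R_\delta$ storing the one-step lookahead of $R$, replace the update formulas of the query symbol by single atoms over these squirrels, and update each $R_\delta$ by substituting, inside the quantifier-free part, squirrel atoms for auxiliary atoms and quantifier-free post-modification formulas for input atoms, so that the prefix $\quant$ is preserved. The correctness induction over modification sequences and the reliance on the arbitrary initialization mapping likewise coincide with the paper's argument.
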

\begin{proof}
  We follow the approach from \mbox{Example \ref{example:shifting}}. For ease of presentation we fix the input schema to be $\inpSchema = \{E\}$ where $E$ is a binary relation symbol; the proof can be easily adapted to arbitrary input schemas.

  Let $\prog$ be a \DynC-program over auxiliary schema $\schema$ with designated query symbol $Q$. We construct an equivalent $\DynC$ program $\prog'$ over schema $\schema'$ where $\schema'$ contains a designated query symbol $Q'$ and a $(k+2)$-ary relation symbol $R_\delta$ for every $k$-ary  $R \in \schema$  and every $\delta \in \{\ins, \del\}$.

  The idea is that $R_\delta$ shall reflect the content of $R$ in the next state, for each possible modification of the kind $\delta$. More precisely, let $G = (E, V)$ be a graph, $\alpha$ a sequence of modifications, $\beta = \delta(\vec e)$ a modification with $\delta \in \{\ins, \del\}$ and $\vec e \in V^2$. If $\state$ is the state obtained by $\prog$ after applying $\alpha \beta$ to $G$, i.e. $\state = \updateState{\prog}{\alpha \beta}{\init(G)}$, and $\state'$ is the state obtained by $\prog'$ after applying $\alpha$ to $G$, i.e. $\state' = \updateState{\prog'}{\alpha}{\init'(G)}$, then
    \begin{equation} \label{eq:shifting}
      \text{$\vec a \in R^{\state}$ if and only if $(\vec e, \vec a) \in R^{\state'}_\delta$.}
    \end{equation}
  Thus for every $\delta(\vec e)$ the relation $R_\delta(\vec e, \cdot)$ stores $R(\cdot)$ after application of $\delta(\vec e)$.  

  Then the update formula for $Q'$ after a modification $\delta$ can be written as follows:
    $$\uf{Q'}{\delta}{\vec u}{\vec x} \df R_\delta(\vec u, \vec x)$$

  It remains to explain how to update the relations $R_\delta$. We use formulas $\ufwa{E}{\ins}$ and $\ufwa{E}{\del}$ that express the impact of a modification to $E$, for example, \mbox{$\uf{E}{\ins}{a,b}{x,y} = E(x,y) \vee (a=x \wedge b = y)$}. By \mbox{$\ufsubstitute{\schema \rightarrow \schema_{\delta_0}}{R}{\delta_1}{\vec u_0}{\vec u_1, \vec x}$} we denote the formula obtained from $\uf{R}{\delta_1}{\vec u_1}{\vec x}$ by replacing every atom $S(\vec z)$ with $S \in \schema$ by $S_{\delta_0}(\vec u_0, \vec z)$.
  Then the update formula for  $R_{\delta_1}$ is 
    $$\uf{R_{\delta_1}}{\delta_0}{\vec u_0}{\vec u_1, \vec x} \df \ufsubstitute{\schema \rightarrow \schema_{\delta_0}}{R}{\delta_1}{\vec u_0}{\vec u_1, \vec x}.$$

  The initialization mapping of $\prog'$ is as follows. The query symbol $Q'$ is initialized like $Q$ in $\prog$. For every graph $G$ the relation symbol $R_\delta \in \schema'$ is initialized as
    $$ \bigcup_{\vec e \in V^2} \{\vec e\} \times \updateRelation{\prog}{\delta(\vec e)}{\init(G)}{R_\delta}$$
  where $\updateRelation{\prog}{\delta(\vec e)}{\init(G)}{R_\delta}$ denotes the relation $R_\delta$ in state $\updateState{\prog}{\delta(\vec e)}{\init(G)}$.
  
  The correctness of this construction is proved inductively over the length of modification sequences by showing that states of $\prog'$ \emph{simulate} states of $\prog$ as specified by (\ref{eq:shifting}).

  Therefore, let $G$ be a graph and $\alpha = \alpha_1 \ldots \alpha_i$ a modification sequence with $\alpha_i = \delta_i(\vec e_i)$. Further let $\state_j$ and $\state'_j$ be the states obtained by $\prog$ and $\prog'$, respectively, after application of $\alpha_1 \ldots \alpha_i$.
  
  If $\alpha$ is of length $0$ and $\beta$ is an arbitrary modification with $\beta = \delta(\vec e)$ then $\state \df \updateState{\prog}{\beta}{\state_0}$ and $\state' \df \state'_0$ satisfy (\ref{eq:shifting}) thanks to the definition of the initialization mapping \mbox{of $\prog'$}. 

  If $\alpha$ is of length $i \geq 1$ then, by induction hypothesis, the states \mbox{$\state \df \state_i = \updateState{\prog}{\alpha_i}{\state_{i-1}}$} and $\state' \df \state'_{i-1}$ satisfy (\ref{eq:shifting}) that is 
  \begin{equation}\label{eq:shiftinga}
    \text{$\vec a \in R^\state$ if and only if $(\vec e_i, \vec a) \in R^{\state'}_{\delta_i}$}
  \end{equation}
 for all relations $R$ and $R_{\delta_i}$.

  Now, let $\beta = \delta(\vec e)$ be an arbitrary modification. Further let $\calT \df \updateState{\prog}{\beta}{\state}$ and $\calT' \df \updateState{\prog'}{\alpha_i}{\state'}$. By definition, $\vec b \in R^\calT$ if and only if
     $$(R^\state, \{\vec u_1 \mapsto \vec e, \vec x \mapsto \vec b\}) \models \uf{R}{\delta}{\vec u_1}{\vec x}.$$
  Thanks to (\ref{eq:shiftinga}) and the definition of $\ufwa{R_{\delta_1}}{\delta_0}$ this is equivalent to 
     $$(R^{\state'}, \{\vec u_0 \mapsto \vec e_i, \vec u_1 \mapsto \vec e, \vec x \mapsto \vec b\}) \models \ufsubstitute{\schema \rightarrow \schema_{\delta_0}}{R}{\delta_i}{\vec u_0}{\vec u_1, \vec x}.$$
   By definition this is equivalent to $(\vec e, \vec b) \in R^{\calT'}$.
\end{proof}

Now we turn towards constructions for removing negations and unions from certain fragments. We start by exhibiting negation-free normal forms for \DynFO and for \DynProp.
\begin{lemma} \label{lemma:negationfree}
  \begin{enumerate}
    \item Every \DynFO-program has an equivalent negation-free \DynFO-program.
    \item Every \DynProp-program has an equivalent  \mbox{\DynPropUCQ-program}.
  \end{enumerate}
\end{lemma}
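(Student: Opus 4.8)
The plan is to use the replacement technique: for every relation symbol $R$ of the program (auxiliary \emph{and} input) I would add a fresh relation symbol $\symneg{R}$ that is maintained so as to always store the complement of $R$, and then eliminate negated atoms by replacing each literal $\neg R(\vec z)$ with $\symneg{R}(\vec z)$. The central invariant, maintained for every reachable program state $\state=(\domain,\inp,\aux)$, is that $\symneg{R}^{\state}$ equals $\domain^{\arity(R)}\setminus R^{\state}$ for every $R\in\schema$. Since update formulas are evaluated on the \emph{old} state (i.e.\ before the current modification is applied to the input), this invariant guarantees that the replacement $\neg R(\vec z)\rightsquigarrow\symneg{R}(\vec z)$ is sound at the moment of evaluation, for input and auxiliary relations alike.

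For part~(1) I would first rewrite every update formula into negation normal form, so that negations occur only in front of atoms, and then perform the replacement above; since full first-order logic is available, the result is again a (negation-free) $\DynFO$ formula. To maintain the complements, observe that the new value of $R$ is computed by $\uf{R}{\delta}{\vec u}{\vec x}$, so $\symneg{R}$ must be updated by a formula equivalent to $\neg\,\uf{R}{\delta}{\vec u}{\vec x}$; I push this single outer negation down to the atoms and again replace negated atoms by complement symbols, obtaining a negation-free update formula for $\symneg{R}$. For part~(2) the construction is identical except that, in order to stay inside $\PropUCQ$, I bring each quantifier-free formula into disjunctive normal form before replacing negated atoms: a disjunction of conjunctions of positive atoms is exactly a propositional union of conjunctive queries, and negating a DNF and redistributing back to DNF keeps a formula quantifier-free, so the complement update formulas also land in $\PropUCQ$.

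The point that needs care is the treatment of equality, since $\neg(x=y)$ is itself a form of negation and the complements of the \emph{input} relations must be updated without it. I would precompute, using the arbitrary initialization mapping, an auxiliary inequality relation $\mathit{neq}$ containing all pairs of distinct elements, which is left unchanged by every update. Tuple inequalities then expand positively as $\vec x\neq\vec u \equiv \bigvee_i \mathit{neq}(x_i,u_i)$, so the input complements can be maintained negation-free: after inserting $\vec u$ into $E$ the relation $\symneg{E}$ becomes $\symneg{E}(\vec x)\wedge\bigvee_i \mathit{neq}(x_i,u_i)$, and after deleting $\vec u$ it becomes $\symneg{E}(\vec x)\vee\bigwedge_i(x_i=u_i)$, both of which distribute into $\PropUCQ$. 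The initialization mapping of the new program initializes each $R$ as before and each $\symneg{R}$ as its complement, which is permitted precisely because arbitrary initialization mappings are allowed.

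Correctness then follows by induction on the length of the modification sequence, showing that the new program preserves the complement invariant and hence evaluates the transformed query update formula to the same relation as the original program. The main obstacle is conceptual rather than computational: one must maintain the complements of \emph{all} relations --- including the input relations, whose complements change under single-tuple modifications --- consistently enough that the negated-atom replacement is valid in the old state, while simultaneously ensuring that the update formulas for the complement relations themselves stay in the target fragment ($\DynFO$ without negation for part~(1), $\PropUCQ$ for part~(2)). The inequality-as-negation subtlety is exactly what the precomputed relation $\mathit{neq}$ resolves.
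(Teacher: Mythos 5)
Your proposal is correct and follows essentially the same route as the paper: maintain a complement relation $\symneg{R}$ for every relation (input relations included, with explicit update formulas for them), maintain an unchanging inequality relation (the paper's $\eqh$, your $\mathit{neq}$), update each $\symneg{R}$ by the negation-normal form of $\neg\,\uf{R}{\delta}{\vec u}{\vec x}$ with negated atoms replaced by complement symbols, and conclude by induction on the modification sequence. The only cosmetic difference is that you spell out the DNF conversion for part~(2) and the concrete update formulas for the input complements, which the paper leaves implicit via its convention that input relations also have update formulas.
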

\begin{proof}
  This theorem is a generalization of Theorem 6.6 from \cite{Hesse03}. Given a dynamic program $\prog$, the simple idea is to maintain, for every auxiliary relation $R$ of $\prog$, an additional auxiliary relation $\Rh$ for the complement of $R$.

  We make this more precise. In the following we prove (a). As the construction does not introduce quantifiers, it can be used for (b) as well.

  Let $\calP = (P, \init, Q)$ be a \DynFO-program over schema $\schema$. We assume, without loss of generality, that $\calP$ is in negation normal form. Further we assume, for ease of presentation, that the input relations have update formulas as well\footnote{E.g. if the input database is a graph, then $\uf{E}{\ins}{a,b}{x,y} = E(x,y) \vee (a=x \wedge b = y)$ etc.}. 

  We construct a negation-free \DynFO-program equivalent to $\calP$ that uses the schema $\schema \cup \symneg{\schema} \cup \{\eqh\}$ where $\symneg{\schema}$ contains for every relation symbol $R \in \schema$ a fresh relation symbol $\symneg{R}$ of equal arity. Recall that $\schema$ includes the input schema and the auxiliary schema. The idea is to maintain in $\symneg{R}^\state$ the negation of $R^\state$, for all states $\state$. Further $\eqh$ shall always contain the complement of $=$. 

  In a first step we construct a \DynFO-program $\calP' = (P', \init', Q)$ in negation normal form over $\schema \cup \symneg{\schema} \cup \{\eqh\}$ that maintains $R^\state$ and $\symneg{R}^\state$ (but still uses negations). The update formulas for relation symbols $R \in \schema$ are as in $\calP$. For every $\symneg{R} \in \symneg{\schema}$ and every modification $\delta$, the update formula $\uf{\symneg{R}}{\delta}{\vec x}{\vec y}$ is the negation normal form\footnote{Observe that this fails for $\DynCQneg$ and $\DynUCQneg$.} of $\neg \uf{R}{\delta}{\vec x}{\vec y}$. The relation $\eqh$ never changes. The initialization mapping $\init'$ initializes $\symneg{R}$ with the complement of $\init(R)$.

  From $\calP'$ we construct a negation-free \DynFO-program $\calP'' = (P'', \init', Q)$. An update formula $\uf{R}{\delta}{\vec x}{\vec y}$ for $\calP''$ is obtained from the update formula $\uf{R}{\delta}{\vec x}{\vec y}$ for $\calP'$ by replacing all negative literals $\neg S$ by $\symneg{S}$. The initialization mapping of $\calP''$ is the same as for $\calP'$. 

  The equivalence of $\calP$ and $\calP''$ can be proved by an induction over the length of modification sequences.

\end{proof}

Two different techniques are used for removing unions. We start by giving a disjunction-free normal form for \DynProp.

\begin{lemma}\label{lemma:dynpropcq}
  Every \DynProp-program has an equivalent \DynAndNeg-program.  
\end{lemma}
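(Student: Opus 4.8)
The plan is to eliminate disjunctions from the update formulas while freely using negation, so that every update formula becomes a conjunction of literals, i.e.\ a \DynAndNeg-program. The guiding idea is that the complement of a union is an intersection, so disjunctions can be traded for conjunctions once one maintains complements of all relations and introduces one auxiliary relation per disjunct. The real difficulty — and what I expect to be the main obstacle — is the simultaneous-update semantics: a relation that is meant to store the value of a subformula cannot simply be read off when updating another relation, since all relations are recomputed from the \emph{same} previous state, so a naive replacement of subformulas by auxiliary relations is ``off by one'' modification. I would resolve this with the squirrel technique of Example~\ref{example:shifting} and Lemma~\ref{lemma:shifting}.

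First I would apply Lemma~\ref{lemma:negationfree}(b) to pass to an equivalent \DynPropUCQ-program, so that every update formula $\uf{R}{\delta}{\vec u}{\vec x}$ is a positive disjunction of conjunctions $\bigvee_i C_i$ with $C_i=\bigwedge_j A_{ij}$ and each $A_{ij}$ a positive atom over the previous state. The central observation is that the complement of such a formula is $\bigwedge_i \neg C_i$. Hence, if for each disjunct $C_i$ there were a single auxiliary relation $K_{R,\delta,i}$ holding exactly the tuples satisfying $C_i$, then the complement $\symneg{R}$ of $R$ could be updated by the conjunction of literals $\bigwedge_i \neg K_{R,\delta,i}(\vec u,\vec x)$. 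Since \DynAndNeg\ permits negated atoms, maintaining complements $\symneg{S}$ of all relations (as in Lemma~\ref{lemma:negationfree}) and referring to relations negatively is unproblematic; all of the union is thereby pushed into the conjunct relations $K_{R,\delta,i}$.

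The substantial work is to maintain the conjunct relations $K_{R,\delta,i}$ themselves by conjunctions of literals, and this is exactly where simultaneity bites: updating $K_{R,\delta,i}$ after a modification $\delta_0(\vec u_0)$ needs the \emph{new} values of the relations $S$ occurring in $C_i$, and these are given by disjunctive update formulas, so a direct substitution reintroduces disjunctions. To avoid this I would, following Example~\ref{example:shifting} and Lemma~\ref{lemma:shifting}, maintain for every relation its one-step ``shifted'' version $S_{\delta_0}(\vec u_0,\cdot)$ recording the content of $S$ after the single modification $\delta_0(\vec u_0)$, and reference these shifts via their complements $\symneg{S_{\delta_0}}$ as single negated atoms. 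Then $K_{R,\delta,i}$ updates under $\delta_0$ to $\bigwedge_j \neg\,\symneg{S_{ij}}_{\delta_0}(\vec u_0,\vec z_{ij})$, a conjunction of literals read entirely on the previous state. One checks that this yields a \emph{closed} finite system: the synchronized complements $\symneg{R}$, the conjunct relations $K_{R,\delta,i}$, and all their single shifts each have update formulas that are conjunctions of literals, the shifts of conjunct relations remaining conjunctions by the substitution of Lemma~\ref{lemma:shifting}, and the shifts of the input relations being handled through precomputed (in)equality relations supplied by the arbitrary initialization mapping, as in Example~\ref{example:disjunctionfree}.

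Finally, for the designated query symbol I would add a fresh relation whose update formula is the single negated atom $\neg\,\symneg{Q_{\delta}}(\vec u,\vec x)$: the shifted complement $\symneg{Q_{\delta}}$ stores, when read on the previous state, precisely the complement of $Q$ after the modification, so its negation returns the correct new query relation — again a conjunction of (one) literal, hence in \DynAndNeg. Correctness of the whole construction would then be established by a routine induction over the length of the modification sequence, verifying that every shifted relation faithfully records the corresponding one-step future. I expect the proof to be conceptually light but notationally heavy; the only genuine obstacle is the simultaneity/off-by-one timing, and the bulk of the write-up will be checking that the closed system of complements, conjunct relations and their single shifts is maintainable entirely within conjunctions of literals.
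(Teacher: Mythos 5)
Your overall plan---complements to trade unions for intersections, per-disjunct auxiliary relations, and the squirrel technique to cope with the simultaneity issue---is not the paper's route, and the step you dismiss with ``one checks that this yields a closed finite system'' is exactly where it fails. The first layer is fine: $\symneg{R}$ updated by $\bigwedge_i \neg K_{R,\delta,i}(\vec u,\vec x)$, and $K_{R,\delta,i}$ (storing the current value of the disjunct $C_i=\bigwedge_j S_{ij}(\vec z_{ij})$) updated after $\delta_0$ by a conjunction of single-shift atoms. But the shifted complements $\symneg{S}_{\delta_0}$ that this requires are in turn updated, after $\delta_1$, by $\bigwedge_i \neg K_{S,\delta_0,i,\delta_1}$, i.e.\ by \emph{shifted} conjunct relations, and these do not close up. The invariant of $K_{R,\delta,i,\delta_1}$ speaks about $C_i$ one modification in the future, so its update formula after $\delta_2$ must express $C_i$ \emph{two} modifications in the future. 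With only single shifts in the schema, the value of an atom $S_{ij}$ after two modifications can only be written as $\neg\bigl[\,\symneg{S_{ij}}_{\delta_1} \text{ after } \delta_2\,\bigr] = \neg \bigwedge_k \neg K_{S_{ij},\delta_1,k,\delta_2} = \bigvee_k K_{S_{ij},\delta_1,k,\delta_2}$, a disjunction, so the update of a shifted conjunct relation becomes a conjunction of disjunctions, not a conjunction of literals. The alternative, applying the substitution of Lemma~\ref{lemma:shifting} to the update formula of $K_{R,\delta,i}$, yields a conjunction over \emph{double} shifts $(S_{\delta_1})_{\delta_2}$, which are not in your schema; maintaining those needs triple shifts, and the regress never terminates. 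The root cause is an asymmetry your plan glosses over: a conjunction of DNF formulas is not a DNF without distributing (which creates new, larger conjuncts), whereas a conjunction of CNF formulas is again a CNF whose clauses are among the old ones.

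The paper's proof exploits precisely this asymmetry and needs no shifted relations at all. It keeps negation (rather than passing to \DynPropUCQ), maintains complements $\symneg{R}$ as you do, puts all update formulas into \emph{conjunctive} normal form, and introduces one relation $R_{\neg C}$ per \emph{clause} $C$, with the current-state invariant that $R_{\neg C}(\vec z)$ holds iff $\neg C(\vec z)$ holds. Since $\neg C$ is a conjunction of negated literals, its new value after a modification is the conjunction of the update formulas of the complementary relations $X_i$ ($X_i=\symneg{R}$ for a positive literal over $R$, and $X_i=R$ for a negative one), each of which is a CNF; hence the update formula of $R_{\neg C}$ is a CNF built exclusively from clauses that already occur in the program. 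Replacing every clause $C'$ everywhere by the single literal $\neg R_{\neg C'}$ then turns every update formula---of the original relations and of the clause relations alike---into a conjunction of negated atoms in one stroke, and the system closes because the clause set is fixed and finite. If you want to salvage your write-up, dualize it: maintain relations for negated clauses of CNF update formulas rather than for disjuncts of DNF ones; the shift-based system as you describe it cannot be made to close.
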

\begin{proof}
    Let $\calP = (P, \init, Q)$ be a \DynProp-program over schema $\schema$. We assume, without loss of generality, that $\schema$ contains, for every relation symbol $R$, a relation symbol $\symneg{R}$ and that $\calP$ ensures that  $\symneg{R}^\state$ is the complement of $R^\state$ for every state $\state$. This can be achieved by using the same technique as in \mbox{Lemma \ref{lemma:negationfree}}. Further we assume that all update formulas of $\calP$ are in conjunctive normal form.

  The conjunctive \DynProp-program we are going to construct is over schema $\schema \cup \schema'$ where $\schema'$ contains a fresh relation symbol $R_{\neg C}$ for every clause $C$ occurring in some update formula of $\calP$.  The goal of the construction is to ensure that $R_{\neg C}^\state(\vec z)$ holds if and only if $\neg C(\vec z)$ is true in state $\state$. Then an update formula $\phi=C_1(\vec x_1) \wedge \ldots \wedge C_k(\vec x_k)$ with clauses $C_1(\vec x_1), \ldots, C_k(\vec x_k)$ can be replaced by the conjunctive formula $\neg R_{\neg C_1}(\vec x_1) \wedge \ldots \wedge \neg R_{\neg C_k}(\vec x_k)$.

  In a first step we construct a \DynProp-program $\calP' = (P', \init', Q)$ in conjunctive normal form that maintains the relations $R_{\neg C}^\state$. To this end, let $C$ be a clause with $k$ variables and let $\vec z$ be the $k$-tuple that contains those variables in the order in which they occur in $C$. Assume that \mbox{$C = L_1(\vec z_1) \vee \ldots \vee L_l(\vec z_l)$} where $\vec z_i \subseteq \vec z$ and each $L_i$ is an atom or a negated atom. Thus \mbox{$\neg C \equiv \neg L_1(\vec z_1) \wedge \ldots \wedge \neg L_l(\vec z_l)$}. The relation symbol $R_{\neg C}$ is of arity $k$. For a modification $\delta$ the update formula for $R_{\neg C}$ is $$\uf{R_{\neg C}}{\delta}{\vec x}{\vec z} = \uf{X_1}{\delta}{\vec x}{\vec z_1} \wedge \ldots \wedge \uf{X_l}{\delta}{\vec x}{\vec z_l}$$ where $X_i$ is the relation symbol $R$ if $L_i = \neg R$ and $X_i$ is $\symneg{R}$ if $L_i = R$. Observe that $\uf{R_{\neg C}}{\delta}{\vec x}{\vec z}$ is in conjunctive normal form, because each $\uf{X_i}{\delta}{
\vec x}{\vec z_i}$ is in conjunctive normal form; further $\uf{R_{\neg C}}{\delta}{\vec x}{\vec z}$ does not use new clauses. The initialization mapping $\init'$ extends the initialization mapping $\init$ to the schema $\tau'$ in a natural way. For a clause $C$ and input database $\inp$, a tuple $\vec a$ is in $\init'(R_{\neg C})$ if and only if $C$ evaluates to false in $\init(\inp)$ for $\vec a$.

  The second step is to construct from $\calP'$ the desired conjunctive \DynProp-program $\calP''$: every clause $C$ in every update formula of $\calP'$ is replaced by $\neg R_{\neg C}$. This construction yields a conjunctive program $\calP''$. The initialization mapping of $\calP''$ is the same as for $\calP'$. 

  We sketch the proof that $\calP''$ is equivalent to $\calP$. The dynamic program $\calP'$ updates relations from $\schema$ exactly as program $\calP$. By an induction over the length of modification sequences, one can prove that $R_{\neg C}^\state(\vec a)$ holds if and only if $\neg C(\vec a)$ is true in state $\state$ for all tuples $\vec a$. Thus corresponding update formulas of $\calP$ and $\calP''$ always yield the same result.

\end{proof}

Now we turn to disjunction-free normalforms for $\DynUCQ$, $\DynUCQneg$ and negation-free $\DynFO$. Observe that the idea of the proof of Lemma \ref{lemma:dynpropcq} cannot be applied directly since those classes are not closed under negations. Instead the idea is to simulate disjunctions by existential quantifiers. 
\begin{lemma}\label{lemma:disjunctionfree}
  \begin{enumerate}[(a)]
    \item For every $\DynUCQneg$-program there is an equivalent $\DynCQneg$-program.    \item For every $\DynUCQ$-program there is an equivalent $\DynCQ$-program.
    \item For every $\DynFO$-program there is an equivalent $\DynFOand$-program.
  \end{enumerate}
\end{lemma}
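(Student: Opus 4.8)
The plan is to prove all three parts by a single uniform application of the \emph{preprocessing technique}, generalizing the idea sketched in Example~\ref{example:disjunctionfree}: we simulate each disjunction in an update formula by an existential quantifier that ``guesses'' which disjunct is satisfied, while a precomputed auxiliary relation stores the ``disjunction blue print'' that links the guessed witnesses back to the actual variables. Concretely, I would start from an arbitrary program in the larger class, put every update formula into prenex normal form, and focus on the quantifier-free matrix $\psi$. For part~(c) this matrix is an arbitrary quantifier-free formula, which I would first bring into \emph{conjunctive} normal form $\bigwedge_j C_j$; for parts~(a) and~(b) the matrix is already a disjunction of (possibly negated) atoms, which is the single-clause case. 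Thus it suffices to handle a single clause $C = L_1(\vec z_1) \vee \cdots \vee L_l(\vec z_l)$ and show how to replace it by a conjunction of atoms guarded by fresh existential quantifiers, since a conjunction of clauses then becomes a conjunction of such blocks, which is exactly the $\wedge$-form required by $\DynCQneg$, $\DynCQ$, and $\DynFOand$.

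First I would introduce, for each clause $C$, a fresh auxiliary relation $T_C$ whose intended interpretation is the fixed, query-independent relation encoding ``the guessed witness tuple agrees with the actual variables on one of the disjuncts''. Following Example~\ref{example:disjunctionfree}, the replacement for $C$ reads roughly
$$
\exists \vec w_1 \cdots \exists \vec w_l \;\Big( L_1(\vec w_1) \wedge \cdots \wedge L_l(\vec w_l) \wedge T_C(\vec w_1, \ldots, \vec w_l, \vec z)\Big),
$$
where $T_C$ holds precisely when some $\vec w_i$ coincides with the corresponding projection of $\vec z$ on the variables of $L_i$. Because $T_C$ is a fixed relation depending only on the syntactic shape of $C$ (an equality pattern), it can be set up once and for all by the initialization mapping and, being invariant under modifications, never needs to be updated; this is what keeps the resulting program inside the target class. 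For part~(b) the literals $L_i$ are positive atoms, so the block is purely conjunctive-positive as required by $\DynCQ$; for part~(a) and the matrix of part~(c) the $L_i$ may be negated atoms, which is allowed in $\DynCQneg$ and $\DynFOand$ respectively. Note also that the newly introduced quantifiers are existential, so for parts~(a) and~(b) the quantifier prefix stays in $\exists^*$, matching $\DynCQneg$ and $\DynCQ$; for part~(c) the prefix is merely extended by existential quantifiers, which is harmless since $\DynFOand$ allows arbitrary prefixes.

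The step I expect to be the main obstacle is exactly the one flagged in the example: the guessing trick breaks down when one of the relations $L_i$ is interpreted by the \emph{empty} relation, because then $\exists \vec w_i\, L_i(\vec w_i)$ is false and the whole existential block collapses even though $C$ might be satisfied via a different disjunct. To repair this I would maintain, for each auxiliary (and input) relation, a Boolean flag relation recording whether that relation is currently empty, updating these flags alongside the relations themselves; equivalently, one can maintain a single distinguished ``dummy'' witness and a nonemptiness indicator. The blue-print relation $T_C$ is then enlarged so that, when the flag signals that some $L_i$ is empty, the corresponding guessed coordinate $\vec w_i$ is allowed to range over dummy values that do not contribute to the truth of the clause, while the remaining disjuncts are still tested correctly. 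Maintaining these emptiness flags is itself expressible in the respective fragments, and combining them with $T_C$ makes the encoding faithful in all cases.

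Finally I would package the construction: the new auxiliary schema consists of the original relations (together with their complements $\symneg{R}$, available by Lemma~\ref{lemma:negationfree} in the cases where negated literals occur), the fixed blue-print relations $T_C$, and the emptiness flags; the update formula for each original relation is obtained by replacing every clause $C$ in its (prenex, CNF) matrix with the corresponding existential-conjunctive block, and the update formulas for $T_C$ are trivial (they copy the unchanging relation), while the flags are updated by simple quantifier-free or short existential formulas. Correctness follows by a routine induction on the length of the modification sequence, showing that in every state the block for $C$ evaluates to the same truth value as $C$ itself, so that the rewritten update formulas agree with the originals on all tuples; hence the new program maintains the same query and lies in the target class.
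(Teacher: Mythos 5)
Your overall skeleton --- prenex form, replacing each disjunction by an existential block that guesses witnesses for every disjunct plus a precomputed ``blue print'' relation $T_C$ --- is indeed the paper's strategy, and you correctly identify the empty-relation problem as the crux. But your repair of that problem does not work, and it is exactly the step where the real content of the proof lies. You require $T_C$ to be a \emph{fixed} relation (``set up once and for all by the initialization mapping and \ldots never needs to be updated'') and simultaneously to be ``enlarged'' according to dynamically maintained emptiness flags; these two requirements are incompatible, since emptiness of auxiliary relations changes during the run and a never-updated relation cannot reflect it. If you instead push the flags into the conjunctive block, the conjunct you need is ``$L_j(\vec w_j)$ holds \emph{or} $L_j$ is currently empty'' --- a disjunction, i.e.\ the very thing being eliminated. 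And if you resolve that disjunction by maintaining a padded relation $P_j$ that equals $L_j$ when $L_j\neq\emptyset$ and is full when $L_j=\emptyset$, soundness breaks: suppose the clause $C$ is false at $\vec z$ but some $L_i$ is empty; then $P_i$ is full, so you may guess $\vec w_i=\vec z_i$, fill the other coordinates with dummy witnesses, and $T_C$ --- which can only check equality patterns --- accepts, making the rewritten block true although every disjunct of $C$ is false. A sound $T_C$ would have to check ``the matched index $i$ has $L_i$ non-empty,'' which is again dynamic information inside a static relation.

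There is a second, independent gap: the claim that ``maintaining these emptiness flags is itself expressible in the respective fragments'' is unsupported. The natural update formula for an emptiness flag of $R$ is $\neg\exists\vec x\,\uf{R}{\delta}{\vec u}{\vec x}$, a negated existential, which is not a $\CQ$, not a $\UCQ$, and not a $\CQneg$ formula (there negation sits only on atoms); for part (b), where no negation is available at all, the problem is worst. Relatedly, your appeal to Lemma \ref{lemma:negationfree} to get complements $\symneg{R}$ is only legitimate for part (c): the paper's footnote to that lemma explicitly notes that the complement construction \emph{fails} for $\DynCQneg$ and $\DynUCQneg$. The paper's actual fix is different in kind: instead of tracking emptiness dynamically, it makes non-emptiness a structural invariant. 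For each $R$ it maintains two marked copies, $\Rh_1$ (the copy of $R$ on the slice with first coordinate equal to a fixed element $c$, padded with \emph{all} tuples whose first coordinate differs from $c$, hence never empty) and $\Rh_2$ (only the $c$-slice, hence never full), and replaces positive literals by $\Rh_1$-atoms and negative ones by $\neg\Rh_2$-atoms; the ``real witness'' test ($z'_i=c$) is then static and can safely live in the never-updated relation $T$. This padding needs $|\domain|\ge 2$, which is why the paper must additionally build a separate program for singleton domains (Lemma \ref{lemma:nullary}) and splice the two programs together, with different splicing constructions for (a) versus (b) and (c) --- a case analysis your proposal also leaves unaddressed.
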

\begin{proof}
  We first prove the statements for domains with at least two elements and show how to drop this restriction afterwards. The construction uses the idea from Example \ref{example:disjunctionfree}.
We give it for (a) but, as it does not introduce any negation operators it works for (b) as well. For (c) it is sufficient to start from a negation-free $\DynFO$-program by Lemma \ref{lemma:negationfree}; and for those the same construction as for (a) can be used\footnote{More precisely, replace the quantifier prefix $\exists \vec y$ used throughout the construction of (a) by a general quantifier-prefix $\exists \vec y_1 \forall \vec y_2 \ldots$.}.

  Let $\prog = (P, \init, Q)$ be a $\DynUCQneg$-program over schema $\schema$. Without loss of generality, we assume that the quantifier-free parts of all update formulas of $\prog$ are in disjunctive normal form. We convert $\prog$ into an equivalent $\DynCQneg$-program $\prog'$ whose update formulas are in prenex normalform with quantifier-free parts of the form $\bigwedge_i L_i(\vec x_i) \land T(\vec y)$, where $L_i$ are arbitrary literals over a modified schema $\schemah$ and the symbols $T$ are fresh auxiliary relation symbols. 
  The program $\prog' = (P', \init', Q)$ is over schema $\schema' = \schema \cup \schemah \cup \schema_T$, where $\schemah$ contains a $(k+1)$-ary relation symbols $\Rh_1$ and $\Rh_2$ for every $k$-ary relation symbol $R \in \schema$; and $\schema_T$ contains a relation symbol $T_{S, \delta}$ for every relation symbol $S \in \schema \cup \schemah$ and every modification $\delta$. 

  The intention for relation symbols from $\schema'$ is as follows. The relation symbols $R \in \schema$ shall always be interpreted as in $\prog$. The intention of $\Rh_1 \in \schemah$ is, on one hand, to contain a ``copy of $R$'' (in tuples with first component $c$, for some fixed element $c$) and on the other hand, to guarantee non-emptiness. The latter  is strongly ensured by enforcing all tuples that do \emph{not} have $c$ as first component to be in $\Rh$ and by $|\domain| \geq 2$. Similarly $\Rh_2 \in \schemah$ contains a ``copy of $R$'' but is not the universal relation, that is, it does not contain all tuples. More precisely: 
    \begin{equation}\label{eq:disjunctionfree}
      \Rh_1^\state \df \{(c, \vec a) \mid \vec a \in R^\state\} \cup  \{(d, \vec a) \mid \text{$d \neq c$ and $\vec a \in \domain^k$}\}
    \end{equation}\vspace{-3mm}
    \begin{equation}\label{eq:disjunctionfreea}
      \Rh_2^\state \df \{(c, \vec a) \mid \vec a \in R^\state\}
    \end{equation}
  The relations $T_{S, \delta}$ will be used as in Example \ref{example:disjunctionfree}.

  Now we construct the update formulas for program $\prog'$. Let $R \in \schema$ and $\delta$ be a modification. Further let 
  $$\uf{R}{\delta}{\vec u}{\vec x} = \exists \vec y (C_1(\vec u, \vec x, \vec y) \lor \ldots \lor C_k(\vec u, \vec x, \vec y))$$
  be the update formula of $R$ with respect to $\delta$  in $\prog$, where every $C_i$ is a conjunction of literals. For 
    $$C_i(\vec u, \vec x, \vec y) = L_1(\vec v_1) \land \ldots \land L_m(\vec v_l)$$ 
  we define 
    $$\Ch_i(v, \vec u, \vec x, \vec y) \df \Lh_1(v, \vec v_1) \land \ldots \land \Lh_m(v, \vec v_l)$$ 
  where $\Lh_j = \Rh_1$ if $L_j = R$ and $\Lh_j= \neg \Rh_2$ if $L_j = \neg R$. 

  The  update formula $\ufb{\Rh_i}{\delta}{\vec u}{x', \vec x}$ for $\Rh_i \in \schemah$ in $\prog'$ is  
  \begin{align*}
    \ufb{\Rh_i}{\delta}{\vec u}{x', \vec x} \df &  \exists \vec y\;\exists z'_1 \exists \vec z_1 \ldots  \exists z'_k \exists \vec z_k\\
        & \quad \Big( \hat{C}_1(z'_1, \vec z_1) \land \ldots \land \hat{C}_k(z'_k, \vec z_k) \\ 
        & \quad \quad \quad \land T_{\Rh, \delta}(\vec y, z'_1, \vec z_1, \ldots, z'_k, \vec z_k, \vec u, x', \vec x) \Big).
  \end{align*}
  The  update formula $\ufb{R}{\delta}{\vec u}{\vec x}$ for $R \in \schema$ in $\prog'$ is
  \begin{align*}
    \ufb{R}{\delta}{\vec u}{\vec x} \df & \exists \vec y\;\exists z'_1 \exists \vec z_1 \ldots \exists z'_k \exists \vec z_k \\
    & \quad \Big(\hat{C}_1(z'_1, \vec z_1) \land \ldots \land \hat{C}_k(z'_k, \vec z_k) \\ 
    & \quad \quad \quad \land T_{R, \delta}(\vec y, z'_1, \vec z_1, \ldots, z'_k \vec z_k, \vec u, \vec x) \Big).
  \end{align*}

To ensure equivalence of this program with the original program, the relations $T_{S, \delta}$ are defined as follows.

\begin{itemize}
\item $T_{R, \delta}$ contains all tuples\footnote{For simplicity, we reuse variable names as element names.} $(\vec y, z'_1, \vec z_1, \ldots, z'_k, \vec z_k, \vec u, \vec x)$, for which, for some $i$, $z'_i=c$ and $\vec{z_i}= (\vec u, \vec x, \vec y)$.
\item $T_{\Rh_i, \delta}$ contains all tuples $(\vec y, z'_1, \vec z_1, \ldots, z'_k, \vec z_k, \vec u, x', \vec x) )$, for which
  \begin{itemize}
  \item $x'\not=c$, or
  \item for some $j$, $z'_j=c$ and $\vec{z_j}= (\vec u, \vec x, \vec y)$.
  \end{itemize}
\end{itemize}
  These are initialized as intended by simple quantifier-free formulas (but with disjunction). Their interpretation is never changed, that is, for every $T_{S, \delta}$, both update formulas reproduce the current value of  $T_{S, \delta}$.

  The initialization for relation symbols from $\schema$ and $\schemah$ is straightforward. Auxiliary relation symbols $R \in \schema$ are initialized as in $\prog$; and auxiliary relation symbols $\Rh_1,\Rh_2 \in \schemah$ are initialized by $\init'$ analogously to $\init$ but respecting \mbox{Equations (\ref{eq:disjunctionfree}) and (\ref{eq:disjunctionfreea})}.

  This concludes the proof of (a), (b) and (c) for domains with at least two elements.   
  The restriction on the size of the domains can be dropped as follows. In all three cases the idea is to make a case distinction on the size of the domain in the update formulas of the designated query symbol. 

  To this end, we first construct a $\DynPropCQ$-program $\prog'' = ({P''}, {\init''}, {Q''})$ over schema ${\schema''}$ with $\schema' \cap \schema'' = \emptyset$ which is equivalent to $\prog$ over databases with domains of size one. Then we construct a program $\prog'''$ equivalent to $\prog$ by combining the programs $\prog'$ and $\prog''$. 

  For the construction of $\prog''$ we observe that every relation of a database over a single element domain $\domain = \{a\}$ contains either exactly one tuple, namely $(a,\ldots, a)$, or no tuple at all. Thus every such relation $R$ corresponds to a $0$-ary relation $R_0$ where $R_0$ is true if and only if $(a, \ldots, a) \in R$. Hence, by \mbox{Lemma \ref{lemma:nullary}} (see below), there is a $\DynPropCQ$-program equivalent to $\prog$ for databases with domains of size one.
 
  To combine $\prog'$ and $\prog''$ we use two different approaches, one for (a) and one for (b) and (c). 

  First we consider (a). To this end, we can assume, by Lemma \ref{lemma:shifting}, that the update formulas for the query relations $Q'$ and $Q''$ of  $\prog'$ and $\prog''$, respectively, consist of single atoms. We construct an intermediate program $\tilde{\prog} = (\tilde{P}, \widetilde{\init}, \tilde{Q})$ over schema $\tilde{\schema} = \{\tilde{Q}, U\} \cup \schema' \cup \schema''$ where $U$ is a fresh $0$-ary relation symbol. The intention is that interpretations of symbols in $\schema'$ and $\schema''$ are as in $\prog'$ and $\prog''$, respectively, and that $U$ is interpreted by true if and  only if the domain is of size one. The initializations are accordingly.

  Thus all update formulas of $\tilde{\prog}$ for relation symbols from $\schema'$ and $\schema''$ are as in $\prog'$ and $\prog''$ (and thus disjunction-free). The update formula for $U$ is $\ufwa{U}{\delta} \df U$ and
    \begin{align*}
      \ufwa{\tilde{Q}}{\delta} & \df (\ufwa{Q'}{\delta} \land \neg U) \vee (\ufwa{Q''}{\delta} \land U) \\
      & \equiv (\ufwa{Q'}{\delta} \lor \ufwa{Q''}{\delta}) \wedge (\neg U \vee \ufwa{Q''}{\delta}) \wedge (\ufwa{Q'}{\delta} \lor U).     
    \end{align*}

  The program $\prog'''$ is obtained from $\tilde{\prog}$ by removing disjunctions from $\ufwa{\tilde{Q}}{\delta}$ using the method\footnote{This method cannot be used for $\DynCQ$ and $\DynFOand$.} from the proof of Lemma \ref{lemma:dynpropcq}. For example, the first clause is replaced by $\neg R_{\neg (Q' \vee Q'')}$ where $R_{\neg (Q' \vee Q'')}$ is a fresh auxiliary relation symbol intended to be always interpreted by the result of the query $\neg (\ufwa{Q'}{\delta} \lor \ufwa{Q''}{\delta})$. The update formula for $R_{\neg (Q' \vee Q'')}$ after a modification $\delta$ is $\neg \ufwa{Q'}{\delta} \wedge \neg \ufwa{Q''}{\delta}$; it is disjunction-free since, by our assumption, $\ufwa{Q'}{\delta}$ and $\ufwa{Q''}{\delta}$ both consist of a single atom. This concludes the proof \mbox{of (a)}.
  
  The program $\prog'''$ for (b) and (c) is over schema $\schema''' = \{Q'''\} \cup \schema' \cup \schema''$. Again all update formulas of $\prog'''$ for relation symbols from $\schema'$ and $\schema''$ are as in $\prog'$ and $\prog''$ and
    $$\ufwa{Q'''}{\delta} = \ufwa{Q'}{\delta} \land \ufwa{Q''}{\delta}.$$ 
  The case distinction is delegated to the initialization mapping. Recall that the size of the domain is fixed when the auxiliary relations are initialized. The initialization mapping $\init'''$ is as follows.
  If $|\domain| = 1$ then 
        $$\init'''(R) = \begin{cases}
          {\init''}({Q''}) & \text{for $R = Q'''$, }\\
         D^k & \text{for $R \in \schema'$,}\\
          {\init''}({R''}) & \text{for $R \in \schema''$} 
                    \end{cases}$$
  If $|\domain| \geq 2$ then 
      $$\init'''(R) = \begin{cases}
          \init'(Q') & \text{for $R = Q'''$,} \\
          \init'(R') & \text{for $R \in \schema'$,} \\
                      D^k & \text{for $R \in {\schema''}$}
      \end{cases}$$

  Thus $\init'''$ selects either $\ufwa{Q'}{\delta}$  or $\ufwa{{Q''}}{\delta}$, depending on the size of the domain. If $|\domain| = 1$ then $\ufwa{Q'}{\delta}$ always evaluates to true whereas $\ufwa{{Q''}}{\delta}$ yields the same value as in ${\prog''}$, and vice versa for $|\domain| \geq 2$. As update formulas do not use negation, all relations in the program, that is initialized to ``true'' ($\prog'$ or $\prog''$) remain ``full'' throughout.\footnote{This cannot be guaranteed for $\DynUCQneg$.} This concludes the proof of (b).
\end{proof}

  It remains to prove that all queries over $0$-ary relations can be maintained in $\DynPropCQ$. $0$-ary relations can either be true (containing the empty tuple) or false (not containing the empty tuple and thus being empty), thus $0$-ary atoms are basically propositional variables. Queries on $0$-ary databases are therefore basically families of Boolean functions, one for each domain size. Such queries are not very interesting from the perspective of databases, but we need to show the following lemma as we used it in the previous proof. 
  
  As quantification in queries on $0$-ary databases is useless, every $\FO$ query can be expressed by a quantifier-free formula and therefore  can be maintained in $\DynProp$. The following lemma shows that this can be sharpened.

  \begin{lemma} \label{lemma:nullary}
    Every query on a $0$-ary database can be maintained by a $\DynAND$-program.
  \end{lemma}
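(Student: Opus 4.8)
The plan is to exploit that a $0$-ary database over input schema $\inpSchema = \{S_1, \dots, S_k\}$ is completely described by the truth vector $\sigma = (S_1, \dots, S_k) \in \{0,1\}^k$ of its relations, and that the only modifications are $\ins_{S_i}$ and $\del_{S_i}$, which set the $i$-th coordinate of $\sigma$ to $1$ resp.\ $0$. A query on such a database is, for each domain size $n$, an arbitrary Boolean function $f_n \colon \{0,1\}^k \to \{0,1\}$. Since we allow arbitrary initialization mappings, the entire content of $f_n$ may be baked into the initial auxiliary data; the difficulty is only to design $\PropCQ$ update formulas --- conjunctions of positive atoms, with neither negation nor disjunction nor quantifiers --- that keep the query value correct under modifications.

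For every partial assignment $p \colon P \to \{0,1\}$ with $P \subseteq \{1, \dots, k\}$ I would introduce a $0$-ary auxiliary relation symbol $R_p$ (there are $3^k$ of them). Writing $\sigma[p]$ for the vector obtained from $\sigma$ by overriding the coordinates in $P$ according to $p$, the intended invariant is: in the state reached after a modification sequence leading to current assignment $\sigma$, the relation $R_p$ holds iff $f_n(\sigma[p]) = 1$. The designated query symbol is then simply $R_\emptyset$, since $\sigma[\emptyset] = \sigma$.

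The key computation is how each $R_p$ changes under a modification $\delta = \ins_{S_i}$ (the case $\del_{S_i}$ is symmetric, with $1$ replaced by $0$). If $i \in P$, then $p$ already fixes coordinate $i$, so overriding by $i \mapsto 1$ has no effect on $\sigma[p]$ and the value of $R_p$ is unchanged; I set $\ufwa{R_p}{\ins_{S_i}} \df R_p$. If $i \notin P$, then $(\sigma[i \mapsto 1])[p] = \sigma[p']$ for the extended assignment $p' = p \cup \{i \mapsto 1\}$, so the new value of $R_p$ equals the old value of $R_{p'}$; I set $\ufwa{R_p}{\ins_{S_i}} \df R_{p'}$. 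In both cases the update formula is a single positive atom, hence in $\PropCQ$, and no disjunction or negation is ever needed. The initialization, on an input database with initial assignment $\sigma_0$, puts the empty tuple into $R_p$ exactly when $f_n(\sigma_0[p]) = 1$; this is permitted because arbitrary (indeed non-uniform, one per domain size) initialization mappings are allowed. Correctness of the invariant, and hence that $R_\emptyset$ always carries the query answer, follows by a routine induction on the length of the modification sequence using the two identities for $\sigma[p]$ above.

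The step I expect to be the only real obstacle is conceptual rather than technical: a priori, computing an arbitrary (non-monotone) Boolean function seems to demand disjunction, which $\PropCQ$ forbids. The resolution is the observation that after a single modification one may condition on which coordinate was just touched and on its new value; relative to that information every $R_p$-update is a pure ``pointer shift'' in a table of partial-assignment values precomputed by the initialization. Thus all the non-monotonicity and all the disjunctions of $f_n$ are absorbed into the initial auxiliary data, and the dynamic part is trivially conjunctive.
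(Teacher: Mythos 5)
Your proof is correct, and it takes a genuinely different route from the paper's. The paper fixes a CNF representation $\varphi_j$ of each query and maintains one $0$-ary auxiliary relation $R_C$ per disjunctive clause $C$ over $A_1,\dots,A_k$ (indeed it maintains all $2^{2^k}$ queries over $\inpSchema$ simultaneously in a single program); the update formula for $\varphi_j$ after a change to $A_l$ is the conjunction of the relations for the clauses not mentioning $A_l$ and for the residual clauses $C\setminus\{A_l\}$ resp.\ $C\setminus\{\neg A_l\}$, so the key fact there is that the post-modification value of a CNF formula is a \emph{conjunction} of pre-modification values of other formulas. You instead keep, for the single target function $f_n$, its value under all $3^k$ partial overrides $\sigma[p]$ of the input assignment, and observe that a modification merely shifts which table entry is relevant: every update formula is a single positive atom ($R_p$ or $R_{p'}$), and all of the Boolean complexity of $f_n$ is pushed into the (arbitrary, per-domain-size) initialization. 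Your argument buys a stronger normal form --- atomic ``projection'' updates rather than genuine conjunctions --- with fewer auxiliary relations ($3^k$ versus $2^{2^k}$), and it is essentially an application of the squirrel technique that the paper itself uses elsewhere (Lemma \ref{lemma:shifting}, Theorem \ref{theorem:non-recursive}), here provisioning for every possible way the remaining coordinates could be forced. What the paper's construction buys in exchange is the simultaneous maintenance of \emph{all} queries on $\inpSchema$ in one program, with updates whose conjunctive structure mirrors the CNF of the maintained formulas; both constructions lean equally on the arbitrary initialization mapping, so neither is more demanding in that respect.
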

  \begin{proof}
      Let $\inpSchema$ be an input schema with $0$-ary relation symbols $A_1, \ldots, A_k$. Further let $\calQ_1, \ldots, \calQ_{m}$ be an enumeration of all $m = 2^{2^k}$ many queries on $\inpSchema$. We actually show that all of them can be maintained by one $\DynAND$-program $\prog$ with auxiliary schema $\auxSchema = \{R_1, \ldots, R_{m}\}$ maintaining $\calQ_i$ in $R_i$, for every $i \in\{1,\ldots,m\}$.
      To this end, let $\varphi_1, \ldots, \varphi_{m}$ be propositional formulas over $\inpSchema$  such that $\varphi_i$ expresses $\calQ_i$ and each $\varphi_i$ is in conjunctive normal form. Without loss of generality, no clause contains $A_l$ and $\neg A_l$ for any $A_l \in \inpSchema$ and any $\varphi_i$.  As $\auxSchema$ contains a relation symbol, for every propositional formula over $A_1, \ldots, A_k$, it contains, in particular,  an auxiliary relation symbol $R_C$, for every disjunctive clause over $A_1, \ldots, A_k$. 

      The update formulas for $R_j$ after changing input relation $A_l$ can be constructed as follows. Let $\calC$ be the set of clauses of $\varphi_j$, i.e. $\varphi_j =  \bigwedge_{C \in \calC} C$. We denote by $\calC^+_{A_l}$, $\calC^-_{A_l}$ and $\calC_{A_l}$ the subsets of $\calC$ whose clauses contain $A_l$, $\neg A_l$ and neither $A_l$ nor $\neg A_l$, respectively. 

      If $A_l$ becomes true by a modification then $\varphi_j$ evaluates to true if all clauses in $\calC_{A_l}$ and all clauses $C \setminus \{\neg A_l\}$ with $C \in \calC^-_{A_l}$ evaluated to true before the modification (clauses in $\calC^+_{A_l}$ will evaluate to true after enabling $A_l$). 

      If $A_l$ becomes false by a modification then $\varphi_j$ evaluates to true if all clauses in $\calC_{A_l}$ and all clauses $C \setminus \{A_l\}$ with $C \in \calC^+_{A_l}$ evaluated to true before the modification (clauses in $\calC^-_{A_l}$ will evaluate to true after disabling $A_l$). 

      Therefore the update formulas for $R_j$ after updating $A_l$ can be defined as follows: 
      $$\ufwa{R_j}{\ins_{A_l}} \df \bigwedge_{C \in \calC_{A_l}} R_{C} \land \bigwedge_{C \in \calC_{A_l}^-} R_{C \setminus \{\neg A_l\}}$$
      $$\ufwa{R_j}{\del_{A_l}} \df \bigwedge_{C \in \calC_{A_l}} R_{C} \land \bigwedge_{C \in \calC_{A_l}^+} R_{C \setminus \{A_l\}}$$

      The initialization is straightforward. The correctness of this construction can be proved by induction over the length of modification sequences.
  \end{proof}

Finally we prove that $\DynEFO = \DynAFO$, and therefore that unions of conjunctive queries with negation coincide with $\DynAFO$ in the dynamic setting. The proof uses the replacement technique  to maintain the complements of the auxiliary relations used in the  $\DynEFO$-program via $\DynAFO$-formulas. A small complication arises from the fact, that the query relation (and not its complement) has to be maintained. This is solved by ensuring that the update formulas of the query relation are atomic.

A slightly more general result can be shown.
\begin{lemma}\label{lemma:quantifierswitch}
  Let $\quant$ be an arbitrary quantifier prefix. A query can be maintained in $\DynQFO$ if and only if it  can be maintained in $\DyncQFO$.
\end{lemma}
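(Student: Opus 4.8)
The plan is to prove both inclusions $\DynQFO \subseteq \DyncQFO$ and $\DyncQFO \subseteq \DynQFO$ by a single generic construction. Since $\cquant$ is the dual of $\quant$ and dualization is an involution (so that $\overline{\cquant} = \quant$), it suffices to describe a construction turning an arbitrary $\DynQFO$-program into an equivalent $\DyncQFO$-program; instantiating the very same construction with $\cquant$ in place of $\quant$ then yields the reverse inclusion. So I would fix a $\DynQFO$-program and, as a preparatory step, apply Lemma \ref{lemma:shifting} to assume without loss of generality that its update formulas for the designated query symbol $Q$ consist of a single atom, say $\uf{Q}{\delta}{\vec u}{\vec x} = R_\delta(\vec u, \vec x)$ for each modification $\delta$.

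The heart of the argument is the replacement technique applied to complements. For every auxiliary (and, as in the proof of Lemma \ref{lemma:negationfree}, every input) relation symbol $R$ of the shifted program I would introduce a fresh symbol $\symneg{R}$ of the same arity, intended to carry the complement of $R$ in every state. The key observation is that negating a formula with quantifier prefix $\quant$ and pushing the negation past the quantifier block, using $\neg\exists = \forall\neg$ and $\neg\forall = \exists\neg$, produces a formula whose prefix is exactly the dual $\cquant$, while the matrix stays quantifier-free; that the matrix may now contain negations is harmless, since the $\cQFO$-fragments constrain only the prefix, not the matrix. Concretely, I would define $\uf{\symneg{R}}{\delta}{\vec x}{\vec y}$ by forming $\neg\uf{R}{\delta}{\vec x}{\vec y}$, moving the negation inside so the prefix becomes $\cquant$, and replacing every occurrence of each original relation symbol $S$ by $\neg\symneg{S}$ (sound because $\symneg{S}$ holds the complement of $S$). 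Each such update formula has prefix $\cquant$ and refers only to complement symbols, hence is a legal $\cquant$-formula.

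The only symbol that must carry its genuine content rather than its complement is the query symbol $Q$, and this is exactly where the preparatory use of Lemma \ref{lemma:shifting} pays off. Since $\uf{Q}{\delta}{\vec u}{\vec x} = R_\delta(\vec u, \vec x)$ is atomic, I would maintain $Q$ directly by $\uf{Q}{\delta}{\vec u}{\vec x} \df \neg\symneg{R_\delta}(\vec u, \vec x)$, a single negated atom and therefore a quantifier-free formula, which belongs to $\cQFO$ because the empty prefix is a substring of $\cquant$. The complements $\symneg{R}$, including $\symneg{Q}$ whenever $Q$ occurs recursively in other update formulas, are initialized as the complements of $\init(R)$, and $Q$ is initialized as in the original program; the complements of input relations are updated by giving the input relations explicit update formulas exactly as in Lemma \ref{lemma:negationfree}.

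Correctness would then follow by a routine induction on the length of the modification sequence, simultaneously establishing that $\symneg{R}^\state$ is the complement of the value of $R$ in the corresponding state of the original program and that $Q^\state$ agrees with the original query relation. I expect the main obstacle to be precisely the tension already flagged in the text: the complement-maintaining scheme naturally yields $\cquant$-formulas, yet the designated query relation must hold its \emph{actual} value rather than its complement. Forcing the query update to be atomic via Lemma \ref{lemma:shifting} is what resolves this tension, because expressing $Q$ through the maintained complement $\symneg{R_\delta}$ then costs no quantifiers and stays within $\cQFO$.
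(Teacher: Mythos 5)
Your proposal is correct and follows essentially the same route as the paper's proof: apply Lemma \ref{lemma:shifting} to make the query updates atomic, maintain complement relations $\symneg{R}$ whose update formulas are the negated originals with atoms $S(\vec z)$ replaced by $\neg\symneg{S}(\vec z)$ (yielding $\cquant$-prefixed formulas), and recover $Q$ itself as a negated atom over the complement symbols. Your explicit observation that the converse inclusion follows by instantiating the same construction with $\cquant$ (since dualization is an involution) is left implicit in the paper but is exactly the intended symmetry argument.
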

\begin{proof}
  Let $\prog = (P, \init, Q)$ be an arbitrary dynamic $\DynQFO$-program over \mbox{schema $\schema$}. By Lemma \ref{lemma:shifting} we can assume, without loss of generality, that the update formulas of $Q$ are atomic. We construct a dynamic $\DyncQFO$-program $\prog'$ over schema $\schemah \cup \{Q'\}$ where $\schemah$ contains a $k$-ary relation symbol $\Rh$ for every $k$-ary $R \in \schema$. The intention is that $\Rh$ is always equal to the complement of $R$.

  We denote by $\substitutewa{\phi}{\schema \rightarrow \schemah}$ the formula obtained from $\phi$ by replacing every atom $S(\vec z)$ in  $\phi$ by $\neg \Sh(\vec z)$.  Then the update formulas of $\prog'$ are obtained as $\ufwa{\Rh}{\delta} \df \neg \ufsubstitutewa{\schema \rightarrow \schemah}{R}{\delta}$ for every $\Rh \in \schemah$. Observe that this formula can be easily transformed into an $\cQFO$-formula. Further $\ufwa{Q'}{\delta} = \neg \ufwa{\Qh}{\delta}$ which is a $\cQFO$-formula since $\ufwa{\Qh}{\delta}$ is quantifier-free.  The initialization mapping of $\prog'$ is straightforward.
\end{proof}

Now Theorems \ref{theorem:dynefoequivalences}, \ref{theorem:dynucqequivalences} and \ref{theorem:dynpropequivalences}  follow immediately from Lemmata \ref{lemma:negationfree}, \ref{lemma:dynpropcq}, \ref{lemma:disjunctionfree} and \ref{lemma:quantifierswitch}.
 
    \subsection{Separation results}\label{section:separations}
      The dynamic complexity classes $\DynAND$ and $\DynProp$ have been separated already in \cite[Lemma 7.4]{ZeumeS15reach} using the non-empty-set query. Here we extend this result to the following theorem.

\begin{theorem}\label{theorem:separations}
  \begin{enumerate}
    \item The class $\DynAND$ is a strict subclass of $\DynProp$.
    \item The class $\DynProp$ is a strict subclass of $\DynCQ$.
  \end{enumerate}
\end{theorem}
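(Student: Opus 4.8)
The plan is to prove the two separations largely independently. Part (1), the strictness of $\DynAND \subsetneq \DynProp$, is exactly the separation established in \cite[Lemma 7.4]{ZeumeS15reach} via the non-empty-set query, so I would simply invoke that result; the inclusion $\DynAND \subseteq \DynProp$ is immediate from the syntactic containment of the underlying formula classes. The bulk of the work is Part (2). For this I would split the claimed strict inclusion $\DynProp \subsetneq \DynCQ$ into a chain $\DynProp \subseteq \DynQF \subseteq \DynCQ$ together with the known strict separation $\DynProp \subsetneq \DynQF$ (which follows from the $\DynProp$ inexpressibility results of \cite{GeladeMS12}, a query being maintainable in $\DynQF$ but provably not in $\DynProp$). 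The first inclusion $\DynProp \subseteq \DynQF$ is trivial, since a quantifier-free program without auxiliary functions is a special case of a quantifier-free program with functions. Combining a query witnessing $\DynProp \subsetneq \DynQF$ with the new inclusion $\DynQF \subseteq \DynCQ$ then yields a query in $\DynCQ \setminus \DynProp$, and hence the desired strict inclusion.

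Thus the real content is the inclusion $\DynQF \subseteq \DynCQ$. Given a $\DynQF$-program, I would first eliminate the auxiliary functions by representing each $k$-ary function $f$ by its graph relation $G_f$ of arity $k+1$, maintained so that $G_f(\vec a, b)$ holds iff $f(\vec a)=b$. Every occurrence of a (possibly nested, possibly conditional) term inside an update formula is then \emph{flattened}: a subterm $f(\ldots)$ is replaced by a fresh existentially quantified variable $z$ together with a conjunct $G_f(\ldots, z)$, and an equality $s=t$ between terms becomes $\exists z\,(G_s(\ldots,z)\wedge G_t(\ldots,z))$ after flattening both sides. Updating $G_f$ is itself done by flattening its defining term (and the guards of any if-then-else construct) in the same way. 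Since the terms are single-valued, these existential conjunctions compute exactly the intended values, so the flattened program is equivalent; crucially, function atoms only ever arise positively, in order to \emph{read off} a functional value, so no complement of a function graph is ever required.

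It remains to make the flattened program negation-free, which is the delicate point because $\DynCQ$ is a positive language. Here I would adapt the replacement idea of Lemma \ref{lemma:negationfree}: maintain, alongside every auxiliary relation $R$, an auxiliary relation $\hat R$ for its complement, and keep a static relation $\hat{=}$ for inequality. A negated relational literal $\neg R(\vec t\,)$ is then replaced, after flattening, by a positive atom over $\hat R$, and an inequality $s \neq t$ by a positive use of $\hat{=}$; the update formulas for the $\hat R$ are obtained from the negation-normal forms of the negated $R$-update formulas, which, by De~Morgan, again consist only of positive (complement-)atoms combined by $\wedge$ and $\vee$. The outcome is a program all of whose update formulas are existential-positive with conjunction and disjunction, i.e.\ a $\DynUCQ$-program. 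Applying Lemma \ref{lemma:disjunctionfree}(b) (equivalently Theorem \ref{theorem:dynucqequivalences}, stating $\DynUCQ=\DynCQ$) removes the disjunctions and produces the required $\DynCQ$-program; Lemma \ref{lemma:shifting} can be used beforehand to make the update formulas of the query relation atomic, so that maintaining its complement causes no difficulty.

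The main obstacle I anticipate is precisely this negation-elimination step: one must verify that every place where the original quantifier-free program relies on negation (negated relation atoms, term inequalities, and the negated guards inside function updates and inside the update formulas for the complement relations) can be rerouted through the explicitly maintained complement relations and the inequality relation $\hat{=}$, \emph{without} ever needing the complement of a function graph --- for that would reintroduce genuine negation that $\DynCQ$ cannot express. Once this is checked, correctness of the whole construction follows by the usual induction over the length of the modification sequence, and the separation $\DynProp \subsetneq \DynCQ$ is complete.
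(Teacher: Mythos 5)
Your proposal is correct and follows essentially the same route as the paper: part (a) by citing the known $\DynAND \subsetneq \DynProp$ separation, and part (b) via the chain $\DynProp \subsetneq \DynQF \subseteq \DynCQ$, where $\DynQF \subseteq \DynCQ$ is proved by replacing each auxiliary function by its graph relation, flattening terms with existential quantifiers, resolving $\ite{}{}{}$-constructs and negated literals through explicitly maintained complement relations and an inequality relation (never needing complements of function graphs), and finally invoking $\DynUCQ = \DynCQ$. The minor differences — performing flattening before negation-removal rather than after, and the extra (harmless) appeal to Lemma \ref{lemma:shifting} for the query relation — do not change the substance of the argument.
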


For the sake of completeness we repeat the proof idea for the separation of $\DynAND$ and $\DynProp$ as used in \cite{ZeumeS15reach}.

\begin{proofsketchof}{Theorem \ref{theorem:separations}(a)}
  Towards a contradiction assume that there is a $\DynAnd$-program $\prog = (P, \init, Q)$ 
that maintains the query $\query$ defined by $\exists x U(x)$. By possibly introducing additional auxiliary relations we can guarantee that no variable occurs more than once in any atom of an update \mbox{formula of $\prog$}. 

  The following observation is the key to the proof. Let us assume that there is a unary relation atom $R(u)$ in the formula $\uf{Q}{\del}{u}{}$ for the 0-ary query relation $Q$ and that $\state$ is a state in which the relation $U$ contains at least two elements $a\not=b$. Then, necessarily, $R^\state$ contains both, $a$ and $b$, as otherwise deletion of $a$ or $b$ would make $\ufwa{Q}{\del}$ false without $U$ becoming empty. This observation can be generalized:
if a relation $R$ has \emph{deletion depth} $k$, that is, the distance of $R$ from $Q$ in the deletion dependency graph is $k$, and $U$ contains at least $k+1$ elements, then $R$ must contain all \emph{diverse} tuples over $U$, that is, tuples that consist of pairwise distinct elements from $U$. 

  Now the idea is as follows. Consider a state $\state$ in which the set $U$ contains $m+1$ elements, where $m$ is the maximum (finite) deletion depth of any relation symbol in $\prog$. By the observation above, all relations whose symbols are reachable from $Q$ in the deletion dependency graph of $\prog$ contain all diverse tuples over $U^\state$. Thus, all relation atoms over diverse tuples from $U^\state$ evaluate to true. It is easy to show by induction on the length of modification sequences  that this property holds  (with respect to $U^{\state'}$) for all states $\state'$ that can be obtained from $\state$ by deleting elements from $U^\state$. In particular, it holds for any such state in which $U^{\state'}$ contains only one element $a$. But then, $\uf{Q}{\del}{a}{}$ evaluates to true in $\state'$ and thus $Q$ remains true after deletion of $a$, the desired contradiction to the assumed correctness \mbox{of $\prog$}.
\end{proofsketchof}

To prove the second statement of Theorem \ref{theorem:separations} we show that the class $\DynQF$ (to be defined in a moment) is a subclass of  $\DynCQ$. This proves Theorem \ref{theorem:separations}(b) as it is known from \cite{GeladeMS12} that $\DynProp$ is strictly contained in \DynQF.

We give an informal account of \DynQF before the actual definition and the statement of the theorem. The class $\DynQF$ is an extension of $\DynProp$, and was introduced by Hesse in \cite{Hesse03}. Update formulas in $\DynQF$ are quantifier-free (as in $\DynProp$) but auxiliary functions may be used. Thus, while  \DynProp formulas can only access the inserted or deleted tuple $\vec a$ and the currently updated tuple $\vec b$ of an auxiliary relation, \DynQF update formulas can access further elements of the universe obtained by applying auxiliary functions to elements of $\vec a$ and $\vec b$. Further, upon modification of the input database, auxiliary functions may be updated via update terms that may use functions and if-then-else-constructs.

The following formalization is adapted from \cite{GeladeMS12} and \cite{ZeumeS15reach}. We extend our definition of schemata to allow also function symbols. Until the end of this section, a \emph{schema (or signature)} $\schema$ consists of a set $\relSchema$ of relation symbols and a set $\funSchema$ of function symbols together with an arity function $\arity: \relSchema\cup\funSchema \mapsto \N$. A schema is \emph{relational} \mbox{if $\funSchema=\emptyset$}.  A \emph{database} $\db$ of schema $\schema$ with domain $\domain$ is a mapping that assigns to every relation symbol $R \in \relSchema$ a relation of arity $\arity(R)$ over $\domain$ and to every $k$-ary function symbol $f \in \funSchema$ a $k$-ary function.

Next, we extend our definition of update programs to auxiliary schemas with functions. For updating auxiliary functions case distinctions are allowed in addition to the composition of function terms.

\begin{definition}(Update term)
  \textit{Update terms} are inductively defined as follows:
  \begin{itemize}
  \item[(1)] Every variable is an update term.
  \item[(2)] If $f$ is a $k$-ary function symbol and $t_1, \dots, t_k$
    are update terms, then $f(t_1, \ldots, t_k)$ is an update term.
  \item[(3)] If $\phi$ is a quantifier-free update formula
    (possibly using update terms) and $t_1$ and $t_2$ are update
    terms, then $\ite{\phi}{t_1}{t_2}$ is an update term.
  \end{itemize} 
The semantics of update terms associates with every
  update term $t$ and interpretation $I=(\state,\beta)$, where $\state$
  is a state and $\beta$ a variable assignment, a value $\sem{t}{I}$ from
  $S$.  
  The semantics of (1) and (2) is straightforward. If $\state\models\phi$ holds, then $\sem{\ite{\phi}{t_1}{t_2}}{I}$
  is  $\sem{t_1}{I}$, otherwise $\sem{t_2}{I}$.
\end{definition}

The extension of the notion of update programs for auxiliary schemas with function symbols is now straightforward. An update program still has an update \mbox{formula $\ufwa{R}{\delta}$} (possibly using update terms) for every relation symbol $R \in \auxSchema$ and every
  abstract modification $\delta$. Furthermore, it has, for every
  abstract modification $\delta$ and  every function symbol $f \in \auxSchema$, an update term $\ut{f}{\delta}{\vec x}{\vec y}$. For a concrete modification $\delta(\vec a)$ it redefines $f$ for each tuple $\vec b$ by evaluating  $t^f_\delta(\vec a;\vec b)$ in the current state.

  \begin{definition}(\DynQF)
    \DynQF is the class of queries maintainable by quantifier-free update programs with (possibly) auxiliary functions. 
  \end{definition}

  We remark that our definition of $\DynQF$ is slightly stronger than the usual definition. Here we allow for using update terms in update formulas for relations whereas in  \cite{GeladeMS12} only terms are allowed. This strengthens the result we are aiming at. 

 Before continuing, we give a toy example designed to give an impression of the expressive power of $\DynQF$. For more examples we refer to  \cite{Hesse03} and \cite{GeladeMS12}.

\begin{example}
  Consider the unary graph query $\query(x)$ that returns all nodes $a$ of a given graph $G$ with maximal outdegree .

  We construct a unary $\DynQF$-program $\prog$ that maintains $\query$ in a unary relation denoted by the designated symbol $Q$.  The program uses two unary functions $\Succ$ and $\Pred$ that shall encode a successor and its corresponding predecessor relation on the domain. For simplicity, but without loss of generality, we therefore assume that the domain is of the form $\domain = \{0, \ldots, n-1\}$. For every state $\state$, the function $\Succ^\state$ is then the standard successor function \mbox{on $\domain$} (with $\Succ^\state(n-1) = n-1$), and $\Pred^\state$ is the standard predecessor function (with $\Pred^\state(0) = 0$). Both functions are initialized accordingly. In the following we refer to \emph{numbers} and mean the position of elements \mbox{in $\Succ$}. The program has a constants that represents the numbers $0$ and $1$.
  
  The program $\prog$ maintains two unary functions $\numEdges$ and $\numNodes$. The function $\numEdges$ counts, for every node $a$, the number of outgoing edges of $a$; more precisely $\numEdges(a) = b$ if and only if $b$ is the number of outgoing edges \mbox{of $a$}. The function $\numNodes$ counts, for every number $a$, the number of nodes with $a$ outgoing edges; more precisely $\numNodes(a) = b$ if and only if $b$ is the number of nodes with $a$ outgoing edges. A constant $\Max$ shall always point to the number $i$ such that $i$ is the maximal number of outgoing edges from some node in the current graph.
  
  When inserting an outgoing edge $(u,v)$ for a node $u$ that already has $a$ outgoing edges, the counter $\numEdges$ of $u$ is incremented from $a$ to $a+1$ and all other edge-counters remain unchanged. The counter $\numNodes$ of $a$ is decremented, the counter of $a+1$ is incremented, and all other node-counters remain unchanged. The number $\Max$ increases if, before the insertion, $u$ was a node with maximal number of outgoing edges. This yields the following update terms:
  \begin{align*}
    \ut{\numEdges}{\ins\;E}{u,v}{x} \df & \itewa\Big(\neg E(u,v) \wedge x=u,\Succ(\numEdges(x)),\numEdges(x)\Big) \\
    \ut{\numNodes}{\ins\;E}{u,v}{x} \df & \itewa\Big(\neg E(u,v) \wedge x = \numEdges(u), \Pred(\numNodes(x)), \\
      & \quad \itewa\big(\neg E(u,v) \wedge x = \Succ(\numEdges(u)), \Succ(\numNodes(x)), \\
      & \quad \quad \numNodes(x)  \big)\Big) \\
    \utw{\Max}{\ins\;E}{u,v} \df &  \itewa\Big(\Max = \numEdges(u) \wedge \neg E(u,v), \Succ(u),\Max \Big) 
  \end{align*}

  The update formula for the designated query symbol $Q$ is as follows:
   \begin{align*}
    \uf{Q}{\ins\;E}{u,v}{x} \df  \ut{\numEdges}{\ins\;E}{u,v}{x} = \utw{\Max}{\ins\;E}{u,v}
   \end{align*}
   
  The update terms for deletions are very similar:
  \begin{align*}
    \ut{\numEdges}{\del\;E}{u,v}{x} \df & \itewa\Big(E(u,v) \wedge x=u,\Pred(\numEdges(x)),\numEdges(x)\Big) \\
    \ut{\numNodes}{\del\;E}{u,v}{x} \df & \itewa\Big(E(u,v) \wedge x = \numEdges(u), \Pred(\numNodes(x)), \\
      &  \quad  \itewa\big(E(u,v) \wedge x = \Pred(\numEdges(u)), \Succ(\numNodes(x)), \\
      &  \quad \quad\numNodes(x)  \big)\Big) \\
    \utw{\Max}{\del\;E}{u,v} \df &  \itewa\Big(\Max = \numEdges(u) \wedge E(u,v) \wedge  \numNodes(\Max) = 1, \\
    &  \quad \Pred(\Max),\Max \Big) 
  \end{align*}
    
  The update formula for the designated query symbol $Q$ under deletion is as follows:
   \begin{align*}
    \uf{Q}{\del\;E}{u,v}{x} \df  \ut{\numEdges}{\del\;E}{u,v}{x} = \utw{\Max}{\del\;E}{u,v}    
   \end{align*}
    \qed
\end{example}

We aim at the following theorem.  
  
\begin{theorem} \label{theorem:DynCQcontainsDynQF}
  \DynQF is contained in \DynCQ.
\end{theorem}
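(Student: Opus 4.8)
The plan is to establish the stronger inclusion $\DynQF \subseteq \DynUCQ$; since $\DynUCQ = \DynCQ$ by Theorem~\ref{theorem:dynucqequivalences}, this immediately yields the theorem. So, given a $\DynQF$-program $\prog$ — with quantifier-free update formulas and update terms built from variables, function application and the $\itewa$-construct over an auxiliary schema containing relation \emph{and} function symbols — I would transform it into an equivalent $\DynUCQ$-program in two stages, the \emph{order} of which is essential.

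\emph{Stage 1 (eliminate negation while still quantifier-free).} Using the complement technique of Lemma~\ref{lemma:negationfree}, I maintain for every auxiliary relation symbol $R$ a fresh symbol $\Rh$ whose interpretation is always the complement of $R$, together with an inequality relation (available via the arbitrary initialization mapping). Because every update formula and every $\itewa$-condition is quantifier-free, negating it keeps it quantifier-free; hence I obtain an equivalent program in which every update formula and every condition $\phi$ is \emph{negation-free}, with each literal $\neg R(\vec t)$ replaced by $\Rh(\vec t)$ and each $\neg(t_1=t_2)$ replaced by an inequality atom. The crucial payoff is that, since the conditions $\phi$ are now negation-free over symbols for which complements are maintained, the guard $\neg\phi$ needed for the else-branch of an $\itewa$ also admits a negation-free normal form, obtained by exchanging $R \leftrightarrow \Rh$ and $\wedge \leftrightarrow \vee$. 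This step is exactly what becomes impossible once existential quantifiers are present (cf.\ the footnotes to Lemmata~\ref{lemma:negationfree} and~\ref{lemma:disjunctionfree}), which is why it must be done first.

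\emph{Stage 2 (turn functions into graphs).} I replace every $k$-ary auxiliary function $f$ by a $(k+1)$-ary relation $R_f$, maintaining the invariant that $R_f$ is the graph of a total single-valued function. A function term is then evaluated by a query that follows its syntactic structure: a variable $x$ contributes the equality $y=x$; a term $f(t_1,\dots,t_k)$ contributes $\exists \vec z\,(\mathrm{graph}_{t_1}(\cdots,z_1)\wedge\cdots\wedge R_f(z_1,\dots,z_k,y))$; and an $\ite{\phi}{t_1}{t_2}$ contributes $(\phi' \wedge \mathrm{graph}_{t_1}(\cdots,y)) \vee (\overline{\phi'} \wedge \mathrm{graph}_{t_2}(\cdots,y))$, where $\phi'$ and $\overline{\phi'}$ are the negation-free forms of $\phi$ and $\neg\phi$ supplied by Stage~1. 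Each disjunct is a positive existential conjunction, so the graph of every term — and hence the update formula for each $R_f$ read off from the update term $\ut{f}{\delta}{\vec x}{\vec y}$ — is a union of conjunctive queries. The identical flattening turns every \emph{relational} update formula, whose atoms may themselves contain function terms and $\itewa$-constructs, into a $\UCQ$. The functionality/totality invariant is preserved because the new value is defined by the graph of the (deterministic) update term, and the needed initializations are unproblematic since the initialization mapping is arbitrary.

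Combining the two stages produces an equivalent program all of whose update formulas are negation-free unions of conjunctive queries, i.e.\ a $\DynUCQ$-program, and Theorem~\ref{theorem:dynucqequivalences} converts this into the desired $\DynCQ$-program. (If one wishes the query relation's update formula to be a single atom, Lemma~\ref{lemma:shifting} may be applied beforehand, though it is not required here.) The main obstacle, and the reason the order of the stages is forced, is precisely the else-branch of the $\itewa$-construct: its guard $\neg\phi$ can only be rendered without negation while the program is still quantifier-free, so negation has to be removed in Stage~1 before existential quantifiers are introduced by the function-flattening of Stage~2.
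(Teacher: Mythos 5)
Your proposal is correct and follows essentially the same route as the paper's proof: reduce to $\DynUCQ$ via Theorem~\ref{theorem:dynucqequivalences}, first maintain complement relations $\Rh$ (and the complement of equality) while the program is still quantifier-free as in Lemma~\ref{lemma:negationfree}, and then replace each auxiliary function $f$ by its graph relation $R_f$, translating terms and formulas inductively into positive existential formulas, with the else-branch of each $\ite{}{}{}$ handled by the negation-free dual of its condition. Your observation that the two stages cannot be interchanged matches the paper's own remarks (its preparation step and the footnotes to Lemmata~\ref{lemma:negationfree} and~\ref{lemma:disjunctionfree}).
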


 Thanks to Theorem \ref{theorem:dynucqequivalences} it suffices to show that \DynQF is contained in \DynUCQ. The idea of the proof of Theorem \ref{theorem:DynCQcontainsDynQF} is to simulate auxiliary functions by auxiliary relations with the help of existential quantifiers in a relatively straightforward way. However, some care is necessary in order to remove $\ite{}{}{}$-conditions and negations. We highlight  the idea of the  proof by the following example.
  
  \begin{example}
    Consider a $\DynQF$-program $\prog$ that contains the following update term $\utwa{f}{\ins E}$ for a unary function $f$ and update formula $\ufwa{R}{\ins E}$ for a unary \mbox{relation $R$}:
      \begin{align*}
        \ut{f}{\ins E}{u,v}{x} &\df f(\ite{R(x)}{f(x)}{u}) \\
        \uf{R}{\ins E}{u,v}{x} &\df \neg R(x) \wedge S(f(x), \ite{\neg R(\ite{S(u,v)}{u}{x})}{f(x)}{u})
      \end{align*}
    As a first step towards the construction of an equivalent $\DynCQ$-program, we remove negations by maintaining for every relation $T$ its complement in an auxiliary relation $\Th$, for example:
      \begin{align*}
        \uf{R}{\ins E}{u,v}{x} &\df \Rh(x) \wedge S(f(x), \ite{\Rh(\ite{S(u,v)}{u}{x})}{f(x)}{u}) \\
        \uf{\Rh}{\ins E}{u,v}{x} &\df R(x) \vee \Sh(f(x), \ite{\Rh(\ite{S(u,v)}{u}{x})}{f(x)}{u})
      \end{align*}
      
    The crucial step in the construction of an equivalent $\DynCQ$-program is to simulate the function $f$ by a binary relation $R_f$ with the intention that $R_f$ stores all tuples of the form $(a, f(a))$. Then appearances of $f$ as well as of $\ite{}{}{}$ can be removed. The complement relations obtained in the preprocessing step above are also needed in order to remove $\ite{}{}{}$-conditions.
    
    The update formula $\uf{R_f}{\ins E}{u,v}{x, y}$ for $R_f$ is obtained by 'outsourcing' the computation of the $\ite{}{}{}$-value:
        $$\uf{R_f}{\ins E}{u,v}{x, y} \df \exists z \Big(R_f(z, y) \wedge \big(\underbrace{(R(x) \wedge R_f(x, z)) \vee (\Rh(x) \wedge z = u)}_{z = \ite{R(x)}{f(x)}{u}}\big) \Big)$$
    An update formula for $R$ that uses $R_f$ instead of $f$ is obtained similarly:
      \begin{align*}
          \uf{R}{\ins E}{u,v}{x} &\df \Rh(x) \wedge \exists z_1 \exists z_2 \exists z_3 \Big(S(z_1, z_2) \wedge R_f(x, z_1) & \\
          & \wedge \big((\Rh(z_3) \wedge R_f(x, z_2)) \vee (R(z_3) \wedge z_2 = u)\big) &  \rbrace \;\;  \ite{\Rh(\ldots)}{\cdot}{\cdot}\\
          & \wedge \big((S(u,v) \wedge z_3 = u) \vee (\Sh(u,v) \wedge z_3 = x)\big)\Big) &  \rbrace \;\; \ite{S(\ldots)}{\cdot}{\cdot}
      \end{align*}
    Observe that only relation symbols from the original $\DynQF$-program are needed in negated form.
  \end{example}

  \begin{proofof}{Theorem \ref{theorem:DynCQcontainsDynQF}}
      Let $\prog = (P, \init, Q)$ be a \DynQF-program over schema $\schema = \relSchema \cup \funSchema$. We assume, without loss of generality, that $\calP$ is in negation normal form. Further we assume, as in Lemma \ref{lemma:negationfree}, that the input relations have update formulas as well.
      
      We prove that there is a $\DynUCQ$-program $\prog''$  equivalent to $\prog$. Then, by Theorem \ref{theorem:dynucqequivalences}, there is an equivalent $\DynCQ$-program. 
      
      As a  preparation step we construct, from $\prog$, a $\DynQF$-program $\prog'$ over schema $\schema' \df \relSchema \cup \relSchemah  \cup \{\eqh\} \cup \funSchema$ where $\relSchemah$ has,  for every $R \in \schema$,  a relation symbol $\Rh$ intended to contain the complement of $R$ and $\eqh$ contains the complement of the relation $=$. This can be achieved as in the proof of \mbox{Lemma \ref{lemma:negationfree}}.
      
      From $\prog'$ we construct a \DynUCQ-program $\prog''$ over schema $\schema'' \df \relSchema \cup \relSchemah \cup \schema_F$ where $\schema_F$ contains a $(k+1)$-ary relation symbol $R_f$ for every $k$-ary function symbol $f \in \funSchema$. The intention is that $R_f$ simulates $f$ in the sense that $(\vec a, b) \in R_f^{\state'}$ if and only if $f^{\state''}(\vec a) = b$ in states $\state'$ and $\state''$ reached in $\prog'$ and $\prog''$ by the same modification sequence. The initialization of $R_f$ can be obtained easily from the initialization of $f$.
      
     We say that two states $\state'$ and $\state''$ over $\schema'$ and $\schema''$ \emph{correspond}, if (1) the condition $(\vec a, b) \in R_f^{\state'}$ if and only if $f^{\state''}(\vec a) = b$ is satisfied, and (2) $R^{\state'} = R^{\state''}$ for all $R \in \relSchema \cup \relSchemah$. 
      
      We explain next how to update relations from $\schema_F$. To this end, we will define $\CQ$-formulas $\varphi_t(\vec x, z)$ and $\varphi_\phi(\vec x)$ over $\schema''$, for every update term $t(\vec x)$ and every update formula $\phi(\vec x)$ over $\schema'$,  such that the following conditions are satisfied for all corresponding states $\state'$, $\state''$, all tuples $\vec a$ and all elements $b$:
      \begin{itemize}
        \item $\state'', (\vec a, b) \models \varphi_t(\vec x, z)$ if and only if $t^{\state'}(\vec a) = b$, and
        \item $\state'', \vec a \models  \varphi_\phi(\vec x)$ if and only if $\state', \vec a \models \phi(\vec x)$
      \end{itemize}
      
      Then the update formulas in $\prog''$ after a modification $\delta$ can be defined as follows. For every $R_f \in \schema_F$, define the update formula as $\uf{R_f}{\delta}{\vec u}{\vec x, y} \df \varphi_t(\vec u, \vec x, y)$ where $t$ is the update term for $f \in \funSchema$ in $\prog'$.  For every $R \in \relSchema \cup \relSchemah$ define the update formula as $\ufwa{R}{\delta} \df \varphi_\phi$ where $\phi$ is the update formula of $R$ in $\prog'$. An easy induction shows that $\prog'$ and $\prog''$ yield corresponding states when the same modification sequence is applied. This proves the claim.
      
      It remains to define the $\CQ$-formulas $\varphi_t(\vec x, z)$ and $\varphi_\phi(\vec x)$ for every update term $t(\vec x)$ and every formula $\phi(\vec x)$. Those formulas are defined inductively as follows:
      \begin{enumerate}
        \item If $t(\vec x) = y$ for some variable $y$ occurring in $\vec x$, then $$\varphi_t(\vec x, z) \df y = z.$$
        \item If $t(\vec x) = f(t_1(\vec x_1), \ldots, t_k(\vec x_k))$ with $\vec x_i \subseteq \vec x$, then $$\varphi_t(\vec x, z) \df \exists z_1 \ldots \exists z_k \Big(R_f(z_1, \ldots, z_k, z) \wedge \bigwedge_{i} \varphi_{t_i}(\vec x_i, z_i) \Big).$$
        \item If $t(\vec x) = \ite{\phi(\vec y)}{t_1(\vec x_1)}{t_2(\vec x_2)}$ with $\vec y, \vec x_1, \vec x_2 \subseteq \vec x$, quantifier-free update formula $\phi$ and update terms $t_1$, $t_2$,  then 
          $$\varphi_t(\vec x, z) \df \big(\varphi_{\phi}(\vec y) \wedge \varphi_{t_1}(\vec x_1, z)\big) \vee \big(\varphi_{\neg \phi}(\vec y) \wedge \varphi_{t_2}(\vec x_2, z)\big).$$
          
        \item If $\phi(\vec x)$ contains the maximal\footnote{Here, a term $t_i$ is maximal if it is not contained in another update term.} update terms $t_1(\vec x_1), \ldots, t_k(\vec x_k)$ then let
        $$\varphi_\phi(\vec x) \df \exists z_1 \ldots \exists z_k \Big(\phi' \wedge \bigwedge_{i} \varphi_{t_i}(\vec x_i, z_i)\Big)$$
      where $\phi'$ is obtained from $\phi$ by replacing $t_i$ by $z_i$, transforming the resulting formula into negation normal form and then replacing every literal of the form $\neg R(s_1, \ldots, s_l)$ by $\Rh(s_1, \ldots, s_l)$.
      \end{enumerate}
      
      Observe that the formula $\phi$ in (d) contains only relation symbols from $\relSchema \cup \relSchemah$, and therefore no relation symbols from $\funSchema$ need to be replaced in $\phi'$. The correctness of this construction can be proved inductively.
      
  \end{proofof}

  \section{$\Delta$-semantics} \label{section:deltasemantics}
      So far we considered a semantics where the new version of the auxiliary relations is redefined, after each modification, from scratch by formulas that are evaluated on the structure with the current auxiliary relations. We refer to this as \emph{absolute semantics} in the following.

  However, in the context of view maintenance, one usually expects only few auxiliary tuples to change after a modification. Therefore it is common to express the new version of the auxiliary relations in terms of the current relations and some ``Delta'', that is,  a (small) relation $R^+$ of tuples to be inserted into $R$ and a (small) relation $R^-$ of tuples to be removed from $R$ (with $R^+ \cap R^- = \emptyset$). The updated auxiliary relation $R'$ is then defined by
    $$R' \df (R \cup R^+) \setminus R^-.$$
  We refer to this semantics as \emph{$\Delta$-semantics}. This is the semantics usually considered in view maintenance. As already stated in the introduction, absolute and  $\Delta$-semantics can only be different if the underlying update language is not closed under Boolean operations.

  Next we formalize $\Delta$-semantics via $\Delta$-update programs which provide formulas defining the relations $R^+$ and $R^-$, for every auxiliary relation $R$.

\begin{definition}($\Delta$-Update program)\label{def:updateprog}
  A \emph{$\Delta$-update program} \prog over dynamic schema \mbox{$(\inpSchema, \auxSchema)$} 
  is a set of first-order formulas (called \textit{$\Delta$-update formulas} in the following) that contains,  for every $R \in \auxSchema$ and every
  $\delta \in \{\ins_S, \del_S\}$ with $S \in \inpSchema$, two formulas  $\uf{R^+}{\delta}{\vec u}{\vec x}$ and $\uf{R^-}{\delta}{\vec u}{\vec x}$ over the schema $\schema$ where $\vec u $ and $S$ have the same arity,  $\vec x$ and $R$ have the same arity, and  $\ufwa{R^+}{\delta}\land \ufwa{R^-}{\delta}$ is unsatisfiable.
\end{definition}

The \emph{semantics of $\Delta$-update programs} is as follows. For a modification $\delta=\delta(\vec a)$ and program state $\state=(\domain, \inp,\aux)$ we denote by $P_\delta(\state)$ the state $(\domain, \delta(\inp), \aux')$, where the relations $R'$ of $\aux'$ are defined by
$$R'\df \Big(R \cup \big\{\vec b \mid \state \models \uf{R^+}{\delta}{\vec a}{\vec b}\big\}\Big) \setminus \big\{\vec b \mid \state \models \uf{R^-}{\delta}{\vec a}{\vec b}\big\}.$$

The effect of a modification sequence on a state, dynamic $\Delta$-programs and so on are defined like their counterparts in absolute semantics except that $\Delta$-update programs are used instead of update programs.

\begin{definition}(\dDynC) \label{definition:dync}
  For a class $\calC$ of formulas, let $\dDynC$ be the class of all dynamic queries that  can be maintained by dynamic
  $\Delta$-programs with formulas from $\calC$ and arbitrary initialization mapping. 
\end{definition}

We note that the definitions above do \emph{not} require that $R^+\cap R=\emptyset$ or $R^-\subseteq R$, that is, $R^+$ might contain tuples that are already in $R$, and $R^-$ might contain tuples that are not in $R$. However, in all proofs below, we construct only $\Delta$-update formulas that guarantee these additional properties. As a consequence, for the considered fragments, the expressive power is independent of this difference.

The goal of this section is to prove the remaining results of Figure~\ref{figure:collapse}, that is, the collapse results depicted in the right part of the figure and the correspondences between absolute semantics and $\Delta$-semantics.

\shortOrLong{}{
  The main results of this section are the following characterizations of (extensions of) dynamic conjunctive queries with $\Delta$-semantics.
}

\begin{theorem}\label{theorem:ddynefoequivalences}
  Let $\query$ be a query. Then the following statements are equivalent:
  \begin{enumerate}
    \item $\query$ can be maintained in $\dDynUCQneg$.
    \item $\query$ can be maintained in $\dDynUCQ$.
    \item $\query$ can be maintained in $\dDynCQneg$.
    \item $\query$ can be maintained in $\dDynCQ$.
    \item $\query$ can be maintained in $\dDynEFO$.
    \item $\query$ can be maintained in $\dDynAFO$.
  \end{enumerate}
\end{theorem}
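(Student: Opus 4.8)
The plan is to prove the cycle of inclusions among statements (1)--(4) together with (5), and to treat the universal fragment (6) separately. Several inclusions are immediate from the syntactic containments $\CQ\subseteq\UCQ$, $\CQ\subseteq\CQneg$, $\CQneg\subseteq\UCQneg$ and $\UCQ\subseteq\UCQneg$, which give $\dDynCQ\subseteq\dDynUCQ\subseteq\dDynUCQneg$ and $\dDynCQ\subseteq\dDynCQneg\subseteq\dDynUCQneg$ by reusing the same $\Delta$-update program. Moreover, $\UCQneg$ and $\EFO$ define the same queries: every existential formula becomes a union of conjunctions of literals by putting its quantifier-free matrix into DNF and distributing $\exists$ over $\vee$. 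Hence statements (1) and (5) are equivalent by a per-formula rewriting of each $\Delta$-update formula. It therefore remains to close the cycle by showing $\dDynUCQneg\subseteq\dDynCQneg$ (removing unions) and $\dDynCQneg\subseteq\dDynCQ$ (removing negations), and finally to relate $\dDynAFO$.

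Removal of negation is the step where $\Delta$-semantics genuinely helps, and I would do it first. Given a $\dDynCQneg$-program, I introduce for every auxiliary relation $R$ a fresh relation $\hat R$ intended to store the complement $\overline{R}$, and I replace every negative literal $\neg S$ in every update formula by the positive literal $\hat S$. The point is that in $\Delta$-semantics the complement is maintained \emph{without negating any formula}: since $R'=(R\cup R^+)\setminus R^-$ with $R^+\cap R^-=\emptyset$, elementary set algebra gives $\overline{R'}=(\overline{R}\cup R^-)\setminus R^+$ (using $R^+\cap R^-=\emptyset$). Thus it suffices to let the insertion formula of $\hat R$ be the (substituted) deletion formula of $R$ and the deletion formula of $\hat R$ be the (substituted) insertion formula of $R$. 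Swapping insertions and deletions preserves the fragment: if $R^+,R^-$ are conjunctive queries, then so are all four of the resulting formulas, and their mutual disjointness is inherited from the original program. A straightforward induction on the modification sequence shows that $\hat R$ always equals $\overline R$ and that the new program maintains the same query. This is exactly the contrast with the absolute setting (cf.\ the footnotes in Lemma~\ref{lemma:negationfree} and the proof of Lemma~\ref{lemma:quantifierswitch}), where maintaining a complement forces one to negate the update formula, turning a $\CQneg$-formula into a universal disjunction. Unlike the absolute constructions, no special treatment of the query relation and no case distinction on the domain size is needed here, because no formula is ever negated.

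For the removal of unions I would adapt the disjunction-free normal form of Lemma~\ref{lemma:disjunctionfree}(a), which simulates a disjunction $\exists\vec y\,(C_1\vee\dots\vee C_k)$ by a single conjunction together with an existential guess over a precomputed ``blueprint'' relation $T$ (as in Example~\ref{example:disjunctionfree}), using copy relations to enforce non-emptiness on domains of size at least two. This is a purely syntactic rewriting of the matrix of each update formula plus auxiliary relations that are either static ($T$) or direct copies of existing relations; it therefore applies verbatim to $\Delta$-update formulas, the only change being that the rewriting is applied to each of the insertion and deletion formulas separately. The domain-of-size-one case is handled exactly as in Lemma~\ref{lemma:disjunctionfree} by combining a program correct for $|\domain|=1$ (obtained from the $0$-ary analysis behind Lemma~\ref{lemma:nullary}) with one correct for $|\domain|\ge 2$, delegating the case distinction to the initialization; since we work in the fragment $\CQneg$, the complement-based combination used there for case (a) is available. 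Together with the negation-removal step this yields $\dDynUCQneg\subseteq\dDynCQneg\subseteq\dDynCQ$, closing the cycle and establishing the equivalence of (1)--(5).

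It remains to incorporate the universal fragment, i.e.\ to show $\dDynAFO$ is equivalent to $\dDynEFO$, and this is the main obstacle. In the absolute setting the dual-quantifier collapse (Lemma~\ref{lemma:quantifierswitch}) is obtained by maintaining complement relations whose update formula is the \emph{negation} of the original one, so that an $\exists^*$-prefix flips to a $\forall^*$-prefix; after making the query relation's update atomic via Lemma~\ref{lemma:shifting} this comes essentially for free. This argument does \emph{not} transfer to $\Delta$-semantics: as the negation-removal step above shows, maintaining a complement in $\Delta$-semantics only swaps the insertion and deletion formulas and therefore \emph{preserves} the quantifier prefix, so flipping $\forall^*$ to $\exists^*$ cannot be achieved by complementation. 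I therefore expect the equivalence of (6) with the remaining statements to need a genuinely different argument — maintaining, for each universally quantified update condition, an auxiliary relation that records its truth incrementally, whose $\Delta$ can be described existentially because a single-tuple modification can only change the condition ``locally'', through the modified tuple, combined with the atomic-query reduction of Lemma~\ref{lemma:shifting}. This incremental treatment of universal quantification is the crux of the theorem; all other equivalences reduce to the two elementary rewritings above.
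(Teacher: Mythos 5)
Your handling of statements (1)--(5) is correct and is essentially the paper's own argument: maintaining complements by swapping the insertion and deletion formulas is exactly Lemma~\ref{lemma:deltanegationfree}, and your adaptation of the disjunction-removal construction to $\Delta$-semantics is Lemma~\ref{lemma:ddisjunctionfree} (with the one-element-domain case resting on the $\Delta$-version of the $0$-ary lemma, Lemma~\ref{lemma:dnullary}, which needs its own small argument: the paper builds both a conjunctive and a disjunctive program and uses one to define $R^+$ and the other to define $R^-$; also, the paper combines the two programs not via the case~(a) complementation trick you invoke, but by guarding all $\Delta$-update formulas with a $0$-ary flag $U$ so that the irrelevant program's relations simply never change). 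The genuine gap is statement (6): the theorem asserts that $\dDynAFO$ coincides with the other five classes, and your proposal does not prove this --- it ends with a conjecture that some ``incremental'' treatment of universal quantification should work. Moreover, that sketch is doubtful as stated: the locality intuition (``a single-tuple modification changes a universal condition only through the modified tuple'') applies to the \emph{input} relation, but the universal quantifiers in update formulas range over atoms of \emph{auxiliary} relations, and under $\Delta$-semantics a single modification may insert or delete arbitrarily many auxiliary tuples, since $R^+$ and $R^-$ are defined by arbitrary formulas of the fragment. So there is no reason the delta of a universally quantified condition admits an existential description.

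What you are missing is that no new technique is needed: the paper closes (5)$\Leftrightarrow$(6) by a detour through \emph{absolute} semantics, combining two cross-semantics transfer lemmas with the absolute-semantics collapse. First, $\dDynAFO\subseteq\DynAFO$ and $\dDynUCQneg\subseteq\DynUCQneg$ by the squirrel technique (Lemma~\ref{lemma:deltaefotoefo}): maintain, for each relation $R$ and each modification type $\delta$, a relation $R^-_\delta$ that precomputes which tuples would be deleted by $\delta(\vec b)$, so that in the naive translation $(R(\vec x)\lor\ufwa{R^+}{\delta})\land\neg\ufwa{R^-}{\delta}$ the single problematic negation becomes a negated \emph{atom} $\neg R^-_\delta(\vec u,\vec x)$ and the quantifier prefix is preserved. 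Second, $\DynEFO=\DynAFO$ by Theorem~\ref{theorem:dynefoequivalences}. Third, $\DynQFO\subseteq\dDynQFO$ for \emph{every} fixed prefix $\quant$ (Lemma~\ref{lemma:qfotodeltaqfo}): keeping even and odd copies of the auxiliary relations, each alternating between an empty and a live state, lets one define $R^+$ and $R^-$ without ever negating an update formula, so again the prefix is preserved. Chaining these gives $\dDynAFO\subseteq\DynAFO=\DynEFO\subseteq\dDynEFO$ and $\dDynEFO\subseteq\DynEFO=\DynAFO\subseteq\dDynAFO$. Your (correct) observation that complementation under $\Delta$-semantics only swaps $R^+$ and $R^-$ and hence preserves the quantifier prefix shows that the absolute-semantics proof of the quantifier switch does not transfer verbatim; it does not show that a genuinely new argument is required, and the round trip through absolute semantics is precisely how the paper avoids one.
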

\begin{theorem}\label{theorem:dynefoandddynefoequivalence}
  Let $\query$ be a query. Then the following statements are equivalent:
  \begin{enumerate}
    \item $\query$ can be maintained in $\DynUCQneg$.
    \item $\query$ can be maintained in $\dDynUCQneg$.
  \end{enumerate}
\end{theorem}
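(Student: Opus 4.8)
The plan is to prove the two inclusions $\DynUCQneg\subseteq\dDynUCQneg$ and $\dDynUCQneg\subseteq\DynUCQneg$ separately, freely switching between the provably equivalent endpoints $\DynEFO=\DynAFO=\DynCQneg$ (Theorem \ref{theorem:dynefoequivalences}) and $\dDynEFO=\dDynAFO=\dDynCQ$ (Theorem \ref{theorem:ddynefoequivalences}) whenever one fragment is more convenient than another. Throughout I would carry along, for every auxiliary relation $R$, an explicit complement relation $\hat R$ with the invariant $\hat R^\state=\overline{R^\state}$. The point of this bookkeeping is that under $\Delta$-semantics a complement is essentially free: if $R'=(R\cup R^+)\setminus R^-$ with $R^+\cap R=\emptyset$ and $R^-\subseteq R$, then $\overline{R'}=(\overline R\cup R^-)\setminus R^+$, so $\hat R$ is maintained by the very same $\Delta$-formulas with the roles of ``$+$'' and ``$-$'' exchanged, introducing neither a negation nor a quantifier. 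These complements let me replace every negated auxiliary literal $\neg S(\vec z)$ by the positive atom $\hat S(\vec z)$, which is what makes the negation-free fragments reachable (note that the naive negation-normal-form argument of Lemma \ref{lemma:negationfree} is unavailable here, as remarked there).

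For $\DynUCQneg\subseteq\dDynUCQneg$ I would start from an absolute $\DynEFO$-program and read off the intended $\Delta$-formulas $\phi^{R^+}_\delta\equiv\phi^R_\delta\wedge\neg R(\vec x)$ and $\phi^{R^-}_\delta\equiv R(\vec x)\wedge\neg\phi^R_\delta$. Using $\hat R$, the insertion formula $\phi^{R^+}_\delta\equiv\phi^R_\delta\wedge\hat R(\vec x)$ stays existential and negation-free, hence in $\CQ$ up to a union. The deletion formula is the obstruction: $\neg\phi^R_\delta$ is the negation of an existential condition and therefore universal, so after rewriting it negation-free via the complements it lands in $\AFO$ rather than in $\EFO$. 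To repair this I would maintain stored ``next-state'' relations $R_\delta$ and $\hat R_\delta$ in the spirit of the squirrel technique (as in Lemma \ref{lemma:shifting}), holding the value and the complement of $R$ after a single modification of type $\delta$; with these available the $\Delta$-formulas collapse to the purely positive, conjunctive comparisons $\phi^{R^+}_\delta\equiv R_\delta(\vec u,\vec x)\wedge\hat R(\vec x)$ and $\phi^{R^-}_\delta\equiv\hat R_\delta(\vec u,\vec x)\wedge R(\vec x)$, which are in $\CQ\subseteq\UCQneg$. The next-state relations themselves are then maintained by the same comparison idea, and the residual universal/existential mismatch in their $\Delta$-updates is removed by appealing to the collapse $\dDynAFO=\dDynEFO$ of Theorem \ref{theorem:ddynefoequivalences}, which in particular shows that universal $\Delta$-updates can always be traded for existential ones.

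For $\dDynUCQneg\subseteq\DynUCQneg$ I would flatten the $\Delta$-program into absolute semantics through the identity $R'(\vec x)\equiv\bigl(R(\vec x)\vee\phi^{R^+}_\delta\bigr)\wedge\neg\phi^{R^-}_\delta$. Again the term $\neg\phi^{R^-}_\delta$ turns an existential $\Delta$-formula into a universal one, so the naive flattening produces an $\exists^*\forall^*$-formula, i.e.\ it would only place the query in fragment~(1) rather than in the desired fragment~(2). To stay inside fragment~(2) I would first apply Lemma \ref{lemma:shifting} so that the update formula of the query symbol is a single atom, maintain the complements $\hat R$ (now using the absolute quantifier-switch construction behind Lemma \ref{lemma:quantifierswitch}) in order to rewrite $\neg\phi^{R^-}_\delta$ negation-free, and maintain each $R$ and $\hat R$ by formulas of a single quantifier type, so that the whole program becomes an $\DynAFO$- (equivalently $\DynEFO$-) program; the identity $\DynAFO=\DynUCQneg$ from Theorem \ref{theorem:dynefoequivalences} then finishes the inclusion.

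The main obstacle, common to both directions, is the inherent asymmetry between tuple insertions and tuple deletions under $\Delta$-semantics: the set $R^+$ of newly added tuples is naturally existential, whereas the set $R^-$ of removed tuples is governed by a \emph{non}-membership condition and is therefore naturally universal, and dually the $\setminus R^-$ in the flattening forces a quantifier alternation. The crux of the argument is to neutralise this asymmetry --- via the free complement bookkeeping of $\Delta$-semantics, the squirrel pre-computation of next states, and the fragment-collapse results --- well enough to land precisely in fragment~(2) ($\EFO$, equivalently $\UCQneg$) instead of merely in fragment~(1) ($\forall/\exists\,\FO$).
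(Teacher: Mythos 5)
Both directions of your proposal contain a genuine gap, and in both cases it is the step that is supposed to neutralise the $\exists/\forall$ asymmetry that is not actually available. For the inclusion $\DynUCQneg\subseteq\dDynUCQneg$ you remove the ``residual universal/existential mismatch'' in the $\Delta$-updates of your next-state relations by appealing to $\dDynAFO=\dDynEFO$ (Theorem \ref{theorem:ddynefoequivalences}). This is circular: in the paper that collapse is not proved intrinsically under $\Delta$-semantics, but is obtained by combining Lemma \ref{lemma:qfotodeltaqfo} and Lemma \ref{lemma:deltaefotoefo} with Theorem \ref{theorem:dynefoequivalences} --- that is, precisely by the two halves of the theorem you are proving. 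Nor can it be obtained from the quantifier-switch trick of Lemma \ref{lemma:quantifierswitch}, since under $\Delta$-semantics complements come for free \emph{with the same} quantifier prefix, so that trick provides no dualisation mechanism. Without this appeal your construction regresses: the $\Delta$-update of $R_\delta$ must specify which tuples leave $R_\delta$, which is again the negation of an existential (composed) update formula, or else requires two-step next-state relations $R_{\delta_0\delta}$, then three-step ones, and so on. The paper's Lemma \ref{lemma:qfotodeltaqfo} avoids this with an idea your proposal misses: keep two copies $R_{\even}$, $R_{\odd}$ of each auxiliary relation, alternating between ``useful'' and ``empty''. The removal formula for the currently useful copy is then simply the quantifier-free formula $\neg\Odd\wedge R_{\even}(\vec x)$ (erase it wholesale), and the insertion formula for the other copy is the \emph{absolute} update formula with its atoms read over the useful copy, guarded by the parity bit; no update formula is ever negated, so the quantifier prefix is preserved.

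For the inclusion $\dDynUCQneg\subseteq\DynUCQneg$ your plan is to rewrite $\neg\phi^{R^-}_\delta$ negation-free via complements and then ``maintain each $R$ and $\hat{R}$ by formulas of a single quantifier type'', but no mechanism for this last step is given, and complements alone cannot supply one: rewriting $\neg\phi^{R^-}_\delta$ with complements yields a negation-free but \emph{universal} formula, while $\phi^{R^+}_\delta$ stays existential, so the flattening $(R(\vec x)\vee\phi^{R^+}_\delta)\wedge\neg\phi^{R^-}_\delta$ remains irreducibly $\exists/\forall$-mixed; the same happens for $\hat{R}$, and the quantifier-switch construction merely dualises the mixed prefix. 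What is needed here is the squirrel technique, applied not to next states of $R$ but to the deletion sets (the paper's Lemma \ref{lemma:deltaefotoefo}): maintain, for every possible modification $\delta(\vec u)$, a relation $R^-_\delta$ storing the tuples that would then be removed from $R$. The flattened update of $R$ becomes $(R(\vec x)\vee\phi^{R^+}_\delta)\wedge\neg R^-_\delta(\vec u,\vec x)$, where the negation sits on an \emph{atom}, hence the formula is existential; and $R^-_{\delta_1}$ itself is updated by substituting these existential update formulas into the formula $\phi^{R^-}_{\delta_1}$, which is existential and (after Lemma \ref{lemma:deltanegationfree}) negation-free, so the composition stays existential. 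In short, you invoke the squirrel technique in the first direction, where it regresses, and omit it in the second, where it is exactly the missing step: the toolbox is right, but the two key tools are deployed in the wrong places.
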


The technique used for removing unions from dynamic unions of conjunctive queries under $\Delta$-semantics can be used to obtain a $\dDynFOand$ normal form for $\dDynFO$-programs.

\begin{theorem}\label{theorem:ddynfoequivalences}
  Let $\query$ be a query. Then the following statements are equivalent:
  \begin{enumerate}
    \item $\query$ can be maintained in $\dDynFO$.
    \item $\query$ can be maintained in $\dDynFOand$.
  \end{enumerate}
\end{theorem}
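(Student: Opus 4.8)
The inclusion $\dDynFOand \subseteq \dDynFO$ is immediate from the definitions, so the whole work lies in establishing $\dDynFO \subseteq \dDynFOand$. The plan is to mimic, under $\Delta$-semantics, the chain of reductions that produced the absolute-semantics normal form of Theorem \ref{theorem:dynfoequivalences}: starting from an arbitrary $\dDynFO$-program $\prog$, I would (1) put every $\Delta$-update formula into prenex normal form, (2) eliminate negation, and (3) eliminate disjunction from the quantifier-free body, turning it into a single conjunction of atoms.

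First I would bring each $\Delta$-update formula $\uf{R^+}{\delta}{\vec u}{\vec x}$ and $\uf{R^-}{\delta}{\vec u}{\vec x}$ of $\prog$ into prenex normal form $\quant \vec y\,\psi$ with quantifier-free $\psi$. Next, I would remove negations by the $\Delta$-analog of Lemma \ref{lemma:negationfree}, namely Lemma \ref{lemma:deltanegationfree}: maintain, for every auxiliary relation $R$, a companion relation $\Rh$ holding its complement, which is cheap under $\Delta$-semantics since the increments of $\Rh$ are the decrements of $R$ and vice versa, and replace each negative literal $\neg S$ by the positive atom $\Sh$. After this step every $\psi$ is a positive Boolean combination of atoms, which I may write in disjunctive normal form $\bigvee_i C_i$ with each $C_i$ a conjunction of atoms; hence each $\Delta$-update formula has the shape $\quant \vec y\,\bigvee_i C_i$.

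It then remains to remove the disjunction $\bigvee_i C_i$, and here I would reuse the blueprint construction from Example \ref{example:disjunctionfree} and Lemma \ref{lemma:disjunctionfree}(c), applied simultaneously to the $R^+$- and the $R^-$-formulas. Introduce the auxiliary relations $\Rh_1,\Rh_2$ (a padded and an unpadded copy of $R$) that guarantee non-emptiness on domains of size at least two, together with a never-changing blueprint relation $T$; then $\quant \vec y\,\bigvee_i C_i$ is replaced by a prenex formula $\quant \vec y\,\exists z'_1 \exists \vec z_1 \cdots \bigl(\bigwedge_i \Ch_i \wedge T(\cdots)\bigr)$ whose quantifier-free part is a single conjunction of atoms, exactly as in the cited proof. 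The relations $\Rh_1,\Rh_2$ are maintained under $\Delta$-semantics by conjunctive $\Delta$-update formulas derived from those of $R$, while $T$ is initialized once and never changed, so its $\Delta$-update formulas are trivially conjunctions (indeed unsatisfiable). The update formulas of the designated query relation are made atomic via Lemma \ref{lemma:shifting}. Finally, the restriction to domains of size at least two is lifted as in Lemma \ref{lemma:disjunctionfree}: a separate $\Delta$-program for single-element domains (the nullary case, cf.\ Lemma \ref{lemma:nullary}) is combined with the program above, delegating the case distinction to the initialization mapping so that the combined query relation is still maintained by a conjunctive $\Delta$-update formula. The equivalences of Theorem \ref{theorem:ddynefoequivalences} then package the result as a genuine $\dDynFOand$-program.

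The main obstacle I anticipate is verifying that the blueprint/emptiness trick, which in absolute semantics recomputes each relation from scratch, transfers faithfully to $\Delta$-semantics, where only the increments $R^+$ and $R^-$ are specified. Concretely, I must check that (i) the constraint that $\ufwa{R^+}{\delta}\wedge\ufwa{R^-}{\delta}$ be unsatisfiable is preserved by the transformation, (ii) the new auxiliary relations $\Rh_1,\Rh_2,T$ and the complement relations $\Rh$ can all themselves be maintained by conjunctive prenex $\Delta$-update formulas, and (iii) the domain-size case distinction interacts correctly with the incremental semantics of the query relation. These are precisely the points that make Lemma \ref{lemma:disjunctionfree} delicate; the only genuinely new content here is re-deriving them for increments rather than for absolute redefinitions.
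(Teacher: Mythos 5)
Your decomposition is the same as the paper's: Theorem \ref{theorem:ddynfoequivalences} is obtained there as part (b) of Lemma \ref{lemma:ddisjunctionfree}, which likewise removes negation with Lemma \ref{lemma:deltanegationfree}, reuses the blueprint construction of Lemma \ref{lemma:disjunctionfree} essentially verbatim on the $\Delta$-update formulas for domains of size at least two, and treats one-element domains by a separate program combined with the first. The gap is exactly the step you defer as obstacle (iii), and it is not a routine verification: the absolute-semantics combination you invoke would fail under $\Delta$-semantics. In absolute semantics the inapplicable subprogram's relations are initialized full and \emph{stay} full because negation-free update formulas re-derive every tuple at each step; hence $\ufwa{Q''}{\delta}$ is identically true and $\ufwa{Q'}{\delta} \land \ufwa{Q''}{\delta}$ reduces to $\ufwa{Q'}{\delta}$. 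Under $\Delta$-semantics the same reasoning backfires: a negation-free conjunction evaluated over full relations is true, so the \emph{deletion} formulas $\ufwa{R^-}{\delta}$ of the inapplicable subprogram fire and empty its relations after the first modification. So delegating the case distinction to the initialization mapping alone cannot work; the paper needs a genuinely new device, namely a $0$-ary relation $U$ recording the domain-size case, conjoined onto every $\Delta$-update formula of both subprograms ($\land\, U$ for one, $\land\, \neg U$ for the other), so that the relations of the inapplicable subprogram are frozen at their initial values forever.

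Two smaller points. First, the one-element-domain subprogram must itself be a conjunctive $\Delta$-program, and Lemma \ref{lemma:nullary} does not yield one for free: defining $\ufwa{R^-}{\delta}$ from $\ufwa{R}{\delta}$ requires negating it, and the negation of a conjunction is a disjunction. The paper proves the separate Lemma \ref{lemma:dnullary}, which maintains every query simultaneously by a conjunctive and by a disjunctive quantifier-free program and uses the negated disjunctive formulas for the deletions. Second, your closing appeal to Theorem \ref{theorem:ddynefoequivalences} is misplaced: that theorem concerns only the existential fragments and is itself derived from part (a) of the very lemma at issue; your construction has to output the $\dDynFOand$-program directly, which it does once the two repairs above are in place.
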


We state some basic facts about dynamic programs with $\Delta$-semantics before proving those theorems. The following lemma establishes the obvious fact that the absolute semantics and $\Delta$-semantics coincide in expressive power for dynamic classes closed under boolean operations. We observe that the proof does not work for (extensions of) conjunctive queries. Later we will see how to extend the result to conjunctive queries.

\begin{lemma} \label{lemma:absdeltaequivalence}
  Let $\class$ be some fragment of first-order logic closed under the boolean operations $\{\lor, \land, \neg\}$. Then for every query $\query$ the following are equivalent:
  \begin{enumerate}
    \item There is a $\DynC$-program that maintains $\query$.
    \item There is a $\dDynC$-program that maintains $\query$.
  \end{enumerate}  
\end{lemma}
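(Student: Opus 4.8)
The plan is to prove both directions by giving direct syntactic translations between absolute and $\Delta$-update formulas, exploiting closure of $\class$ under $\{\lor, \land, \neg\}$ at exactly the two points where a Boolean combination is needed.

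First I would prove $(1) \Rightarrow (2)$. Given a $\DynC$-program with update formulas $\uf{R}{\delta}{\vec u}{\vec x}$, I define a $\Delta$-program over the same schema and with the same initialization mapping by setting
$$\uf{R^+}{\delta}{\vec u}{\vec x} \df \uf{R}{\delta}{\vec u}{\vec x} \land \neg R(\vec x), \qquad \uf{R^-}{\delta}{\vec u}{\vec x} \df \neg \uf{R}{\delta}{\vec u}{\vec x} \land R(\vec x).$$
Both formulas lie in $\class$ by closure under the Boolean operations, and they are trivially mutually unsatisfiable (one entails $\uf{R}{\delta}{\vec u}{\vec x}$, the other its negation), so this is a syntactically valid $\Delta$-program; moreover it satisfies $R^+ \cap R = \emptyset$ and $R^- \subseteq R$, matching the remark preceding the lemma. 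A one-line case analysis on whether $\vec b \in R$ and whether $\uf{R}{\delta}{\vec u}{\vec b}$ holds shows that $(R \cup R^+) \setminus R^- = \{\vec b \mid \state \models \uf{R}{\delta}{\vec u}{\vec b}\}$, i.e.\ each $\Delta$-step reproduces exactly the relation computed by the corresponding absolute step.

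For $(2) \Rightarrow (1)$, given a $\Delta$-program with formulas $\uf{R^+}{\delta}{\vec u}{\vec x}$ and $\uf{R^-}{\delta}{\vec u}{\vec x}$, I define absolute update formulas
$$\uf{R}{\delta}{\vec u}{\vec x} \df \big(R(\vec x) \lor \uf{R^+}{\delta}{\vec u}{\vec x}\big) \land \neg \uf{R^-}{\delta}{\vec u}{\vec x},$$
again in $\class$ by Boolean closure, which simply spells out the definition $R' = (R \cup R^+) \setminus R^-$. Keeping the same initialization mapping, the resulting absolute step produces exactly the relation the $\Delta$-step produces.

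In both directions correctness over whole modification sequences then follows by a routine induction on the length of the sequence: the initialization mappings agree, and the invariant is that after applying any modification sequence the two programs reach states with identical auxiliary databases, which is preserved by a single step thanks to the per-step identities established above. Since the designated query symbol is among the auxiliary relations, the maintained queries coincide. I do not expect a genuine obstacle here; the only points requiring care are checking that the two constructed $\Delta$-formulas are mutually unsatisfiable (so the definition of $\Delta$-update program is met) and noting exactly where Boolean closure is invoked --- which is precisely why, as the surrounding text emphasizes, the argument breaks down for fragments such as (extensions of) conjunctive queries that lack full Boolean closure, and why the correspondence for those fragments must be established separately later.
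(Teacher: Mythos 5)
Your proof is correct and follows essentially the same route as the paper's: both directions use exactly the same syntactic translations (setting $\uf{R^+}{\delta}{\vec u}{\vec x} \df \uf{R}{\delta}{\vec u}{\vec x} \land \neg R(\vec x)$, $\uf{R^-}{\delta}{\vec u}{\vec x} \df \neg\uf{R}{\delta}{\vec u}{\vec x} \land R(\vec x)$ one way, and $\uf{R}{\delta}{\vec u}{\vec x} \df \big(R(\vec x) \lor \uf{R^+}{\delta}{\vec u}{\vec x}\big) \land \neg\uf{R^-}{\delta}{\vec u}{\vec x}$ the other). You merely spell out details the paper leaves implicit, such as the mutual-unsatisfiability check and the induction over modification sequences, which is fine.
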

\begin{proof}
    From an $\DynC$-update formula $\ufwa{R}{\delta}$, the $\dDynC$-update formulas are defined as follows:
    \begin{align*}
      \uf{R^+}{\delta}{\vec u}{\vec x}& \df \uf{R}{\delta}{\vec u}{\vec x} \land \neg R(\vec x) \\
      \uf{R^-}{\delta}{\vec u}{\vec x}& \df \neg \uf{R}{\delta}{\vec u}{\vec x} \land R(\vec x)
    \end{align*}
    From a $\dDynC$-update formulas $\ufwa{R^+}{\delta}$ and  $\ufwa{R^+}{\delta}$, an $\DynC$-update formula is obtained via
    $$\uf{R}{\delta}{\vec u}{\vec x} \df \big(R(\vec x) \lor \uf{R^+}{\delta}{\vec u}{\vec x}\big) \wedge \neg \uf{R^-}{\delta}{\vec u}{\vec x}.$$
\end{proof}

Removing negations in dynamic programs with $\Delta$-semantics is straightforward using the replacement technique, since the complement $\Rh$ of an auxiliary relation $R$ can be maintained by exchanging the formulas $\ufwa{R^+}{\delta}$ and $\ufwa{R^-}{\delta}$. Observe that in contrast to absolute semantics this works for arbitrary query classes, even if they are not closed under complementation, and in particular for (extensions of) conjunctive queries.

\begin{lemma}\label{lemma:deltanegationfree}
  Let $\class$ be some fragment of first-order logic. If a query $\query$  can be maintained in $\dDynC$ then $\query$  can be maintained in negation-free $\dDynC$. 
\end{lemma}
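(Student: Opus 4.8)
The plan is to use the \emph{replacement technique}: for every relation symbol we introduce a companion symbol holding its complement, and we exploit that under $\Delta$-semantics complementation is achieved by simply \emph{exchanging} the insertion- and deletion-deltas. Let $\prog = (P, \init, Q)$ be a $\dDynC$-program over schema $\schema = \inpSchema \cup \auxSchema$, which we may assume to have all negations applied to atoms. I would build an equivalent program over the schema $\schema \cup \symneg\schema \cup \{\eqh, \Zero\}$, where $\symneg\schema$ contains a fresh symbol $\Rh$ of the same arity for every $R \in \schema$, the symbol $\eqh$ is intended to hold the complement of equality, and $\Zero$ (one symbol per required arity) is an always-empty relation supplied by the arbitrary initialization mapping. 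The invariant to be maintained is that, in every reachable state $\state$, the relation $\Rh^\state$ is the complement of $R^\state$, that $\eqh^\state$ is the complement of equality, and that $\Zero^\state = \emptyset$.

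The key computation is the following. If $R' = (R \cup R^+)\setminus R^-$ with $R^+ \cap R^- = \emptyset$, then
\[
  \overline{R'} = (\overline{R} \cup R^-)\setminus R^+,
\]
since $(\overline{R} \cup R^-) \cap \overline{R^+} = (\overline{R} \cap \overline{R^+}) \cup (R^- \cap \overline{R^+}) = (\overline{R} \cap \overline{R^+}) \cup R^-$ equals $\overline{R'}$, the middle step using $R^- \subseteq \overline{R^+}$, a consequence of disjointness. Hence the complement $\Rh$ of an auxiliary relation $R$ is maintained by setting $\ufwa{\Rh^+}{\delta} := \ufwa{R^-}{\delta}$ and $\ufwa{\Rh^-}{\delta} := \ufwa{R^+}{\delta}$, i.e.\ by swapping the two deltas of $R$. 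For an input relation $E$, whose content changes directly with the modification, the complement $\Eh$ is maintained by the negation-free deltas that remove the modified tuple $\vec u$ on insertion and add it on deletion (via the atom $\vec x = \vec u$), the opposite delta being the empty set defined by $\Zero(\vec x)$; the relation $\eqh$ never changes and is maintained with both deltas equal to $\Zero(\vec x)$; and $\Zero$ itself is kept empty by $\ufwa{\Zero^+}{\delta} = \ufwa{\Zero^-}{\delta} := \Zero(\vec x)$. All of these formulas are negation-free and use only conjunctions of atoms and equalities, so they lie in the negation-free part of $\class$.

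This intermediate program maintains all complements correctly but its delta-formulas may still contain negation (those inherited or swapped from $P$). In the final step I would remove all remaining negations by replacing, throughout every delta-formula, each negative literal $\neg S(\vec z)$ by the positive atom $\Sh(\vec z)$ and each inequality $s \neq t$ by $\eqh(s,t)$, exactly as in the proof of Lemma \ref{lemma:negationfree}; this transformation preserves membership in $\class$ (it alters neither the quantifier pattern nor the arrangement of $\land$/$\lor$) and makes every formula negation-free. The query symbol $Q$ is maintained directly by its so-transformed deltas, so the answer is read off as in $\prog$. Correctness is then a routine induction on the length of the modification sequence, establishing simultaneously that $R^\state$ agrees with the corresponding relation of $\prog$ and that $\Rh^\state = \overline{R^\state}$: the inductive step for $R$ uses that, once the hats hold complements, the negation-free deltas evaluate exactly like the originals, and the step for $\Rh$ uses the displayed set identity. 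On every reachable state one moreover has $\Rh^+ \cap \Rh^- = R^- \cap R^+ = \emptyset$, so the disjointness requirement in the definition of $\Delta$-update programs is inherited.

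The main obstacle — indeed essentially the only non-cosmetic point — is the displayed complementation identity together with its reliance on $R^+ \cap R^- = \emptyset$; once this is in place, the swap of the two delta-formulas is precisely what makes the construction go through for fragments that are \emph{not} closed under negation (such as the conjunctive ones), in contrast to the absolute-semantics argument of Lemma \ref{lemma:negationfree}, where negating an existential update formula would leave the fragment. A minor technical nuisance is expressing the empty and never-changing deltas (for the input complements, for $\eqh$, and for $\Zero$ itself) inside a negation-free conjunctive fragment, which I handle uniformly through the always-empty relation $\Zero$ provided by the arbitrary initialization mapping.
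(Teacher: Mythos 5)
Your proposal is correct and takes essentially the same route as the paper's proof: maintain a complement relation $\Rh$ for every $R$ by \emph{swapping} the two delta formulas ($\ufwa{\Rh^+}{\delta} \df \ufwa{R^-}{\delta}$ and $\ufwa{\Rh^-}{\delta} \df \ufwa{R^+}{\delta}$), and then replace every negative literal $\neg S(\vec z)$ by $\Sh(\vec z)$. Your write-up is in fact more detailed than the paper's (which ``omits the obvious proof of correctness''): the complementation identity with its use of $R^+ \cap R^- = \emptyset$, the treatment of complements of input relations and of inequality via $\eqh$, and the empty-delta/disjointness bookkeeping via $\Zero$ are all points the paper glosses over but your argument makes explicit.
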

\begin{proof} 
  The idea is again to maintain the complements for auxiliary relations. Given a dynamic $\Delta$-program $\prog$ over schema $\schema$ we construct a dynamic $\Delta$-program $\prog'$ over schema $\schema \cup \schemah$ where $\schemah$ contains, for every $k$-ary relation symbol $R \in \schema$, a fresh $k$-ary relation symbol $\Rh$ with the intention that $\Rh$ always stores the complement of $R$. 

  The update formulas for $R \in \schema$ are as in $\prog$. For a relation symbol $R \in \schema$ let $\uf{R^+}{\delta}{\vec u}{\vec x}$ and  $\uf{R^-}{\delta}{\vec u}{\vec x}$ be the update formulas of $R$. Then the update formulas for $\Rh$ can be defined as follows:
  \begin{align*}
    \uf{\Rh^+}{\delta}{\vec u}{\vec x}& = \uf{R^-}{\delta}{\vec u}{\vec x}\\
    \uf{\Rh^-}{\delta}{\vec u}{\vec x}& = \uf{R^+}{\delta}{\vec u}{\vec x}
  \end{align*}

  From $\prog'$, a negation-free dynamic $\Delta$-program $\prog''$ can be constructed by replacing, for  all $R \in \schema$, all occurrences of $\neg R(\vec x)$ in update formulas of $\prog'$ by  $\Rh(\vec x)$. We omit the obvious proof of correctness.
\end{proof}

We now turn towards proving the main results of this section. We first prove Theorem \ref{theorem:dynefoandddynefoequivalence}. Afterwards we  use the connection between absolute and $\Delta$-semantics that it establishes as well as the adaption of Lemma \ref{lemma:disjunctionfree} to $\Delta$-semantics to prove the characterization of conjunctive queries with $\Delta$-semantics.

The only-if-direction of Theorem \ref{theorem:dynefoandddynefoequivalence} can be generalized to arbitrary quantifier prefixes. It is open whether the if-direction generalizes as well.

\begin{lemma} \label{lemma:qfotodeltaqfo}
 Let $\quant$ be an arbitrary quantifier prefix. If a query  can be maintained in $\DynQFO$ then it  can be maintained in $\dDynQFO$ as well.
\end{lemma}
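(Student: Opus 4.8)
The plan is to massage the given absolute program into a shape where the $\Delta$-update formulas can essentially be read off, exploiting that under $\Delta$-semantics negation comes for free (Lemma~\ref{lemma:deltanegationfree}). First I would apply Lemma~\ref{lemma:shifting} to the given $\DynQFO$-program $\prog$ to obtain an equivalent $\DynQFO$-program in which the update formula $\uf{Q}{\delta}{\vec u}{\vec x}$ for the designated query symbol $Q$ consists of a single atom $R_\delta(\vec u,\vec x)$. This is the crucial enabling step: for the query relation the canonical absolute-to-$\Delta$ translation of Lemma~\ref{lemma:absdeltaequivalence},
\[
  \uf{Q^+}{\delta}{\vec u}{\vec x}\df R_\delta(\vec u,\vec x)\land\neg Q(\vec x),\qquad
  \uf{Q^-}{\delta}{\vec u}{\vec x}\df \neg R_\delta(\vec u,\vec x)\land Q(\vec x),
\]
introduces negation only in front of the \emph{atoms} $R_\delta(\vec u,\vec x)$ and $Q(\vec x)$, so both formulas are quantifier-free and therefore lie in $\QFO$ for every prefix $\quant$. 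Hence the query relation is unproblematic once its update has been made atomic.

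For the remaining auxiliary relations I would start from the same translation. Writing an update formula as $\uf{R}{\delta}{\vec u}{\vec x}=\quant\vec y\,\psi$ with quantifier-free matrix $\psi$, the insertion set $\uf{R}{\delta}{\vec u}{\vec x}\land\neg R(\vec x)$ can be kept in $\QFO$: the literal $\neg R(\vec x)$ may be pulled inside the quantifier prefix, and, maintaining for every $R$ a companion relation $\Rh$ for its complement --- which under $\Delta$-semantics is obtained for free by swapping the insertion and deletion formulas, as in Lemma~\ref{lemma:deltanegationfree}, an operation that does not touch the prefix --- the literal $\neg R(\vec x)$ can moreover be replaced by the atom $\Rh(\vec x)$. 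Thus $\uf{R^+}{\delta}{\vec u}{\vec x}\df\quant\vec y\,(\psi\land\Rh(\vec x))$ is a genuine $\QFO$-formula.

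The deletion set, however, is $R(\vec x)\land\neg\uf{R}{\delta}{\vec u}{\vec x}$, and negating the quantified formula flips the prefix $\quant$ into its dual $\cquant$; this mismatch is the main obstacle and the precise reason the translation of Lemma~\ref{lemma:absdeltaequivalence} does not by itself stay inside $\dDynQFO$. To overcome it I would invoke the preprocessing/squirrel technique: for every $R$ and $\delta$ I would pre-store in a higher-arity relation $R_\delta$ the value that $R$ takes after the next modification of kind $\delta$, together with its complement $\Rh_\delta$; these are maintained by substituting the post-modification relations into the update formulas exactly as in Lemma~\ref{lemma:shifting}, which preserves the prefix $\quant$. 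Once the successor value is available as an atom, the deletion set collapses to the conjunction of atoms $R(\vec x)\land\Rh_\delta(\vec u,\vec x)$, so the offending negation becomes atomic and the formula returns to $\QFO$.

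The delicate point --- and the step I expect to require the most care --- is to maintain the squirrel relations themselves within $\dDynQFO$: their insertion sets are again harmless (prefix $\quant$, with the complement atoms absorbing the negated old value), whereas their deletion sets once more reference a negated $\quant$-formula, so a naive iteration would regress to squirrels of squirrels. It is precisely the combination of the atomic query update delivered by Lemma~\ref{lemma:shifting} with the complement bookkeeping that must be arranged so that the only negations ever needed are atomic ones over maintained relations; getting this interplay to close off without an unbounded hierarchy is the crux of the argument. Correctness of the resulting $\dDynQFO$-program is then established by the routine induction on the length of the modification sequence, showing that it reproduces the relations of $\prog$ in every reachable state.
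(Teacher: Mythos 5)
You have correctly isolated the obstacle, but you have not overcome it, and you say as much yourself. The insertion formulas and the (atomic, by Lemma~\ref{lemma:shifting}) query updates are indeed unproblematic; the difficulty is that the natural deletion formula $\uf{R^-}{\delta}{\vec u}{\vec x} \df R(\vec x)\land\neg\uf{R}{\delta}{\vec u}{\vec x}$ turns the prefix $\quant$ into $\cquant$. Your remedy --- squirrel relations $R_\delta$ storing the post-modification value of $R$, together with complement relations --- merely relocates the problem: the deletion formula for $R_\delta$ must again express the negation of a $\quant$-prefixed formula, and Lemma~\ref{lemma:deltanegationfree} cannot break this circle, since swapping $\ufwa{R^+}{\delta}$ and $\ufwa{R^-}{\delta}$ presupposes that \emph{both} $\Delta$-update formulas of $R_\delta$ are already available in $\QFO$; it transfers $\Delta$-update formulas from a relation to its complement but does not construct the missing deletion formula. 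Your own closing admission (``squirrels of squirrels'', closing off the hierarchy being ``the crux of the argument'') concedes exactly this point, so the proposal is an accurate diagnosis of the difficulty plus an unproven claim that the regress can be stopped --- a genuine gap, not a proof.

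The missing idea, which is how the paper's proof proceeds, is to arrange the program so that no deletion formula ever has to describe the \emph{semantic} difference between the old and the new value of a relation. For each auxiliary relation $R$ one keeps two copies $R_{\even}$ and $R_{\odd}$ and a parity bit $\Odd$: after an even number of modifications $R_{\even}$ holds the content of $R$ and $R_{\odd}$ is empty, and after an odd number the roles are exchanged. Upon a modification, the currently empty copy is filled via the absolute update formula evaluated over the live copy, e.g.\ $\uf{R^+_{\odd}}{\delta}{\vec u}{\vec x}\df\neg\Odd\land\ufsubstitute{\schema \rightarrow \schema_{\even}}{R}{\delta}{\vec u}{\vec x}$, which keeps the prefix $\quant$, while the stale copy is wiped wholesale, $\uf{R^-_{\even}}{\delta}{\vec u}{\vec x}\df\neg\Odd\land R_{\even}(\vec x)$ --- an atomic formula. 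Hence a negation of a quantified formula never arises, and the regress you ran into never starts. Lemma~\ref{lemma:shifting} is then needed only for the designated query relation $Q'$, which must be correct after every single step and therefore cannot alternate with an empty state; for it the canonical translation of Lemma~\ref{lemma:absdeltaequivalence} is applied, and this stays inside $\QFO$ precisely because the formula being negated there is a single atom.
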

\begin{proof}
  Let $\prog = (P, \init, Q)$ be a $\DynQFO$-program with schema $\schema$. By Lemma \ref{lemma:shifting} we can assume, without loss of generality, that the update formulas of $Q$ are atomic. We construct a dynamic $\dDynQFO$-program \mbox{$\prog' = (P', \init', Q')$}.

 The main challenge is to design update formulas of the kind $\ufwa{R^-}{\delta}$ without being able to complement the given update formulas because this would lead to $\cQFO$-formulas (additionally, the disjointness requirement for formulas $\ufwa{R^+}{\delta}$ needs to be ensured). 

 The basic idea is to  use two copies of the auxiliary relations, both alternating between empty and useful states, such that one copy is useful for even steps and the other one for odd steps. More precisely, for every auxiliary relation $R$ used by $\prog$, the program $\prog'$ uses two auxiliary relations $R_{\even}$ and $R_{\odd}$ with the intention that after an even sequence of modifications $R_{\even}$ stores the content of $R$ after the same sequence of modifications while $R_{\odd}$ is empty. After an odd sequence of modifications $R_{\even}$ is empty while $R_\odd$ stores the content of $R$. 

 Then, for an even modification, the relation $R^+_{\even}$ can be simply expressed as in absolute semantics (using ``odd'' relations) and  $R^-_{\even}$ is empty. For an odd modification $R^-_{\even}$ can be simply chosen as $R_{\even}$ and $R^+_{\even}$ is empty. Similarly for $R_{\odd}$.

 In the following we give a precise construction of $\prog'$ over schema $\schema_{\even} \cup \schema_{\odd} \cup \{\Odd, Q'\}$ where $\Odd$ is a boolean relation symbol, and $\schema_{\even}$ and $\schema_{\odd}$ contain, for every $k$-ary relation symbol $R \in \schema$, a $k$-ary relation symbol $R_{\even}$ and $R_{\odd}$, respectively. The relation $\Odd$ is used to store the parity of the  number of modifications performed so far.

  Let $\ufwa{R}{\delta}$ be the update formula of $R \in \schema$ for a modification $\delta$ in the dynamic program $\prog$. Denote by $\ufsubstitutewa{\schema \rightarrow \schema_{\even}}{R}{\delta}$ the formula obtained from $\ufwa{R}{\delta}$ by replacing every atom $S(\vec x)$ with $S \in \schema$ by $S_{\even}(\vec x)$. Analogously  for $\ufsubstitutewa{\schema \rightarrow \schema_{\odd}}{R}{\delta}$. Now, the update formulas for $R_{\odd}$ and $R_{\even}$ are as follows:
    \begin{align*}
    \uf{R^+_{\odd}}{\delta}{\vec u}{\vec x}& \df \neg \Odd \land \ufsubstitute{\schema \rightarrow \schema_{\even}}{R}{\delta}{\vec u}{\vec x} \\
    \uf{R^-_{\odd}}{\delta}{\vec u}{\vec x}& \df \Odd \land R_{\odd}(\vec x) \\
    \uf{R^+_{\even}}{\delta}{\vec u}{\vec x}& \df \Odd \land \ufsubstitute{\schema \rightarrow \schema_{\odd}}{R}{\delta}{\vec u}{\vec x} \\
    \uf{R^-_{\even}}{\delta}{\vec u}{\vec x}& \df \neg \Odd \land R_{\even}(\vec x)
  \end{align*}

  Observe that all those formulas can be easily converted into $\QFO$-formulas. The boolean auxiliary relation $\Odd$ can be updated straightforwardly.

  Now, since the update formulas of $Q$ in $\prog$ are quantifier-free, the relation $Q'$ can be updated with the following quantifier-free update formulas:
    \begin{align*}
      \uf{Q'^+}{\delta}{\vec u}{\vec x}& \df \uf{Q}{\delta}{\vec u}{\vec x} \land  \\ 
      & \neg \Big( \big (\Odd \land Q_\odd(\vec x) \big) \lor \big(\neg \Odd \land Q_\even(\vec x)\big) \Big) \\
      \uf{Q'^-}{\delta}{\vec u}{\vec x}& \df \neg \uf{Q}{\delta}{\vec u}{\vec x} \land \\
      &  \Big( \big (\Odd \land Q_\odd(\vec x) \big) \lor \big(\neg \Odd \land Q_\even(\vec x)\big) \Big)
    \end{align*}
  
  The initialization mapping of $P'$ is straightforward. Every $R_{\even} \in \schema_{\even}$ is initialized with $\init(R)$. All $R_{\odd} \in \schema_{\odd}$ are initialized with the empty relation. The relation $\Odd$ is initialized with $\bot$, and $Q'$ is initialized with $\init(Q)$. 
\end{proof}

\begin{lemma}\label{lemma:deltaefotoefo}
 \begin{enumerate}
    \item If a query can be maintained in $\dDynUCQneg$ then it  can be maintained in $\DynUCQneg$ as well.
    \item If a query can be maintained in $\dDynAFO$ then it  can be maintained in $\DynAFO$ as well.
 \end{enumerate} 
\end{lemma}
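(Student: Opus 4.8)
The plan is to convert a given $\Delta$-program into an equivalent absolute program by a single construction that works uniformly for $\UCQneg$ and for $\AFO$. The difficulty, and the reason the naive translation of \mbox{Lemma \ref{lemma:absdeltaequivalence}} fails, is the deletion part of $R'=(R\cup R^+)\setminus R^-$: recomputing $R'$ from scratch needs the complement $\neg\uf{R^-}{\delta}{\vec u}{\vec x}$, and since $\uf{R^-}{\delta}{\vec u}{\vec x}$ is genuinely quantified ($\exists^*$ for (a), $\forall^*$ for (b)), its negation carries the \emph{dual} quantifier and leaves the fragment. The central idea is to never complement a quantified formula but to \emph{materialise the deltas} instead: for every auxiliary relation $R$ and every modification type $\delta\in\{\ins_S,\del_S\}$ I would add two fresh auxiliary relations $A^R_\delta$ and $D^R_\delta$ of arity $\arity(S)+\arity(R)$, maintaining the invariant that, in every state and for every modification tuple $\vec u$, the slices $A^R_\delta(\vec u,\cdot)$ and $D^R_\delta(\vec u,\cdot)$ are exactly the sets defined by $\uf{R^+}{\delta}{\vec u}{\vec x}$ and $\uf{R^-}{\delta}{\vec u}{\vec x}$ in that state.

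Given these relations, the absolute update of $R$ becomes \emph{quantifier-free},
$$\uf{R}{\delta}{\vec u}{\vec x}\df\big(R(\vec x)\lor A^R_\delta(\vec u,\vec x)\big)\land\neg D^R_\delta(\vec u,\vec x),$$
which is a $\UCQneg$-formula once put in disjunctive normal form and, being quantifier-free, is also an $\AFO$-formula; in particular this applies to the designated query symbol, so no extra appeal to \mbox{Lemma \ref{lemma:shifting}} is needed. The heart of the construction is to keep $A^R_\delta$ and $D^R_\delta$ up to date after an actual modification $\delta'(\vec u')$. For every relation symbol $S$ let $\chi^S_{\delta'}(\vec u',\vec t)\df\uf{S}{\delta'}{\vec u'}{\vec t}$ be the quantifier-free formula just described (for input relations $\chi^S_{\delta'}$ is the obvious quantifier-free next-state description, e.g.\ $E(\vec t)\lor\vec t=\vec u'$). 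I would then define the update formula of $D^R_\delta$ for $\delta'(\vec u')$ as
$$\uf{R^-}{\delta}{\vec u}{\vec x}\big[\,S(\vec t)\,/\,\chi^S_{\delta'}(\vec u',\vec t)\,\big],$$
that is, by replacing every atom of the $\Delta$-update formula by the quantifier-free description of the corresponding relation after $\delta'(\vec u')$ (negated atoms $\neg S(\vec t)$ become $\neg\chi^S_{\delta'}(\vec u',\vec t)$), and analogously for $A^R_\delta$. Evaluated on the old state this computes $\uf{R^-}{\delta}{\vec u}{\vec x}$ on the state reached after $\delta'(\vec u')$, which is precisely the invariant; correctness follows by induction on the modification sequence, the initialization mapping seeding $A^R_\delta,D^R_\delta$ with the delta sets on the initial database.

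The key point that makes this land in the right fragment is that only \emph{atoms} are replaced by \emph{quantifier-free} formulas, so the quantifier prefix of $\uf{R^-}{\delta}{\vec u}{\vec x}$ is preserved: the update formula for $D^R_\delta$ stays in $\UCQneg$ for (a) and in $\AFO$ for (b), and all genuine relation updates are quantifier-free. The step I expect to be the main obstacle is exactly this deletion handling, and the construction isolates why it is specific to the two named fragments: materialising $D^R_\delta$ turns the forbidden complement into the single negated atom $\neg D^R_\delta(\vec u,\vec x)$, which is permitted in $\UCQneg$ (negated literals are allowed) and harmless in $\AFO$ (the relation update is quantifier-free), whereas it would not be available in the negation-free classes $\DynUCQ$/$\DynCQ$ --- matching the remark that a generalization to arbitrary fragments is open. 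Statement (b) needs no separate argument: the same relations and formulas work verbatim, the $\forall^*$-prefix being preserved under the atom substitution.
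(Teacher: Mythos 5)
Your construction is correct, and at its core it is the paper's own idea --- the squirrel technique: materialise delta sets as extra auxiliary relations so that the forbidden complement $\neg\uf{R^-}{\delta}{\vec u}{\vec x}$ collapses to a single negated atom --- but the execution is genuinely different, in a way that matters. The paper materialises only the \emph{negative} deltas $R^-_\delta$ and keeps the positive part inline, so its update formula for $R$ is the \emph{existential} formula $(R(\vec x)\lor\uf{R^+}{\delta}{\vec u}{\vec x})\land\neg R^-_\delta(\vec u,\vec x)$; when it then refreshes $R^-_{\delta_1}$ by substituting these update formulas for the atoms of $\ufwa{R^-}{\delta_1}$, it is substituting \emph{quantified} formulas, which is only sound because it first applies Lemma \ref{lemma:deltanegationfree} to make the $\Delta$-formulas negation-free (substitution under a negation would dualise the prefix) and because nested $\exists^*$ (resp.\ $\forall^*$) prefixes collapse after prenexing. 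You instead materialise \emph{both} deltas $A^R_\delta$ and $D^R_\delta$, so every original relation (including $Q$) gets a quantifier-free update, and your substitution step only ever inserts the quantifier-free formulas $\chi^S_{\delta'}$; hence negated atoms in the $\Delta$-formulas are harmless (no preprocessing needed) and the quantifier prefix of $\uf{R^\pm}{\delta}$ is preserved verbatim. The paper's version buys economy (half as many new relations, reuse of an established lemma); yours buys self-containedness, a single argument covering (a) and (b), and --- a point you actually undersell --- prefix-agnosticism: since only quantifier-free formulas are substituted, the identical construction shows that every $\dDynQFO$-maintainable query is $\DynQFO$-maintainable for an \emph{arbitrary} prefix $\quant$, including alternating ones, which is precisely the generalization of the if-direction of Theorem \ref{theorem:dynefoandddynefoequivalence} that the paper states as open after Lemma \ref{lemma:qfotodeltaqfo} (the paper's substitution of quantified formulas is what breaks there, since nesting $\exists\forall$ inside $\exists\forall$ does not collapse). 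One correction to your closing remark: the openness you cite does not concern negation-free fragments; what your construction genuinely cannot handle is $\DynUCQ$/$\DynCQ$, where the literal $\neg D^R_\delta(\vec u,\vec x)$ is unavailable, whereas the paper's open question concerns arbitrary quantifier prefixes (with negation allowed in the matrix), where your argument goes through.
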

We note that the first statement could equally be expressed in terms of $\dDynEFO$ and $\DynEFO$.

\begin{proof}
  We only prove (a), the proof of (b) is analogous. Let $\prog = (P, \init, Q)$ be a dynamic $\dDynUCQneg$-program over \mbox{schema $\schema$}. By Lemma \ref{lemma:deltanegationfree} we can assume, without loss of generality, that the update formulas of $\prog$ are negation-free. For ease of presentation we assume that the input schema contains a single binary relation symbol $E$.

  We construct an equivalent $\DynUCQneg$-program $\prog'$ using the following idea. Consider some update formulas $\uf{R^+}{\delta}{\vec u}{\vec x}$ and $\uf{R^-}{\delta}{\vec u}{\vec x}$ of a relation $R \in \schema$ for a modification $\delta$ in $\prog$. The na\"{\i}ve translation into a $\DynFO$-update formula $\uf{R}{\delta}{\vec u}{\vec x}$ yields the formula 
  $$\uf{R}{\delta}{\vec u}{\vec x} = (R(\vec x) \lor \uf{R^+}{\delta}{\vec u}{\vec x}) \wedge \neg \uf{R^-}{\delta}{\vec u}{\vec x}$$
  which is possibly non-\UCQneg due to $\neg \uf{R^-}{\delta}{\vec u}{\vec x}$. Therefore, $\prog'$ maintains a relation $R^-_\delta$ that contains all tuples $(\vec a, \vec b)$ such that $\vec a$ would be removed from $R$ after applying the modification $\delta(\vec b)$. Those relations are maintained using the squirrel technique.
  The dynamic program $\prog'$ is over schema $\schema \cup \schema_\Delta$ where $\schema_\Delta$ contains a $(k+2)$-ary relation symbol $R^-_\delta \in \schema$ for every $k$-ary relation symbol $R \in \schema$ and every modification $\delta \in \{\ins, \del\}$ of the input relation $E$.

  The update formula for a relation symbol $R \in \schema$ is 
  $$\uf{R}{\delta}{\vec u}{\vec x} \df (R(\vec x) \lor \uf{R^+}{\delta}{\vec u}{\vec x}) \wedge \neg R^-_\delta(\vec u, \vec x).$$  
   This formula can be translated into an existential formula in a straightforward manner. 

  For updating a relation $R^-_{\delta_1}$ after a modification $\delta_0$, the update formula $\ufwa{R^-}{\delta_1}$ for $R^-$ is used. However, since $R^-_{\delta_1}$ shall store tuples that have to be deleted after applying $\delta_1$, the formula $\ufwa{R^-}{\delta_1}$ has to be adapted to use the content of relation symbols $S \in \schema$ after modification $\delta_0$ (instead, as usual, the content from before the modification). For this purpose relation symbols $S \in \schema$ in $\ufwa{R^-}{\delta_1}$ need to be replaced by their update formulas as defined above. 

  The update formula for $R^-_{\delta_1}$ is 
    $$\uf{R^-_{\delta_1}}{\delta_0}{\vec u_0}{\vec u_1, \vec x} \df \ufsubstitute{\schema \rightarrow \phi^\tau}{R^-_{\delta_1}}{\delta_0}{\vec u_0}{\vec u_1, \vec x}$$
  where $\ufsubstitute{\schema \rightarrow \phi^\tau}{R^-_{\delta_1}}{\delta_0}{\vec u_0}{\vec u_1, \vec x}$ is obtained from $\uf{R^-}{\delta_1}{\vec u}{\vec x}$ by replacing every atom $S(\vec z)$ by $\uf{S}{\delta_0}{\vec u_0}{\vec z}$, as constructed above. Since by our initial assumption, $\ufwa{R^-}{\delta_1}$ itself is an existential formula without negation and all update formulas $\ufwa{S}{\delta_0}$ for $S \in \schema$ are existential, the formula $\ufwa{R^-_{\delta_1}}{\delta_0}$ can be easily converted into an existential formula as well. 
\end{proof}

The following example illustrates the construction of Lemma \ref{lemma:deltaefotoefo}.
\begin{example}\label{example:deltaefotoefo}
  Consider the following negation-free $\Delta$-update formulas for a relation symbol $R$:
    \begin{align*}
      \uf{R^+}{\ins}{u}{x}& = \exists y \big(R(y)  \wedge S(u,x)\big) \\
      \uf{R^-}{\ins}{u}{x}& = \exists y \Big(U(x) \vee \big(R(y) \wedge S(y,u)\big)\Big) \\
      \uf{R^+}{\del}{u}{x}& = \exists y U(y) \\
      \uf{R^-}{\del}{u}{x}& = \exists y \exists z \big(S(x,z) \wedge S(y,u)\big)
    \end{align*}
    
  Then the construction from the previous Lemma \ref{lemma:deltaefotoefo} yields the following update formulas for $R$ and $R^-_{\delta_1}$ which can be easily translated into \UCQneg-formulas:
    \begin{align*}
      \uf{R}{\ins}{u}{x} &= (R(x) \vee \uf{R^+}{\ins}{u}{x}) \wedge \neg R^-_{\ins}(u, x) \\
      \uf{R}{\del}{u}{x} &= (R(x) \vee \uf{R^+}{\del}{u}{x}) \wedge \neg R^-_{\del}(u, x) \\
      \uf{R^-_\ins}{\ins}{u_0}{u_1, x} &= \exists y \Big(\uf{U}{\ins}{u_0}{x}
                   \vee \big(\uf{R}{\ins}{u_0}{y} \wedge \uf{S}{\ins}{u_0}{y, u_1}\big)\Big) \\
      \uf{R^-_\ins}{\del}{u_0}{u_1, x} &= \exists y \Big(\uf{U}{\del}{u_0}{x}\Big) 
                   \vee \big(\uf{R}{\del}{u_0}{y} \wedge \uf{S}{\del}{u_0}{y, u_1}\big) \\
      \uf{R^-_\del}{\ins}{u_0}{u_1, x} &= \exists y \exists z \big(\uf{S}{\ins}{u_0}{x,z} \wedge \uf{S}{\ins}{u_0}{y,u_1}\big) \\
      \uf{R^-_\del}{\del}{u_0}{u_1, x} &= \exists y \exists z \big(\uf{S}{\del}{u_0}{x,z} \wedge \uf{S}{\del}{u_0}{y,u_1}\big)
    \end{align*}
\end{example}

Lemmas \ref{lemma:qfotodeltaqfo} and \ref{lemma:deltaefotoefo} together yield Theorem \ref{theorem:dynefoandddynefoequivalence}. We now finally prove Theorem \ref{theorem:ddynefoequivalences}. For this we need the following adaption of Lemma \ref{lemma:disjunctionfree} to $\Delta$-semantics.

\begin{lemma}\label{lemma:ddisjunctionfree}
  \begin{enumerate}
    \item For every $\dDynUCQneg$-program there is an equivalent $\dDynCQneg$-program.    \item For every $\dDynFO$-program there is an equivalent $\dDynFOand$-program.
  \end{enumerate}
\end{lemma}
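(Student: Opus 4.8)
The plan is to adapt the disjunction-elimination of Lemma~\ref{lemma:disjunctionfree} to the $\Delta$-setting. The point is that under $\Delta$-semantics each auxiliary relation $R$ is governed by two ordinary formulas $\ufwa{R^+}{\delta}$ and $\ufwa{R^-}{\delta}$, each evaluated on the \emph{current} state, so the trick of Example~\ref{example:disjunctionfree} --- replacing $\exists\vec y\,\bigvee_i C_i$ by a single existentially quantified conjunction that guesses witnesses for the satisfied disjunct and consults a static ``blueprint'' relation $T$ --- can be applied to each delta-formula independently. For part (a) I would first invoke Lemma~\ref{lemma:deltanegationfree} so that $\prog$ is negation-free and all delta-formulas are unions of conjunctions of positive atoms; the construction then yields a negation-free $\dDynCQ$-program, which is in particular a $\dDynCQneg$-program. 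For part (b), Lemma~\ref{lemma:deltanegationfree} likewise reduces to negation-free first-order delta-formulas, to which the same construction applies, the only change being that the innermost existential witness-prefix is preceded by the original general quantifier prefix $\exists\vec y_1\forall\vec y_2\cdots$, exactly as in Lemma~\ref{lemma:disjunctionfree}(c).

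Next I would set up the padded copies. For every relation symbol $R$ of the (now negation-free) program the new program maintains a relation $\Rh_1$ obeying the defining equation~(\ref{eq:disjunctionfree}): on the ``$c$-slice'' $\Rh_1$ is a copy of $R$, and on every other slice it is the full relation, so that $\Rh_1$ is always non-empty when $|\domain|\ge 2$. A marker relation fixing $c$ and the blueprint relations $T$ are static, maintained by the (trivially conjunctive) delta-formulas $\bot$. The feature that makes the $\Delta$-version convenient is that the full off-$c$ part of $\Rh_1$ never changes, so the tuples entering or leaving $\Rh_1$ lie on the $c$-slice and mirror exactly the delta-sets of $R$; hence $\ufwa{\Rh_1^+}{\delta}$ and $\ufwa{\Rh_1^-}{\delta}$ are obtained from $\ufwa{R^+}{\delta}$ and $\ufwa{R^-}{\delta}$ by conjoining a marker atom fixing the extra component to $c$, and remain unions of conjunctions. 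I would then apply the Example~\ref{example:disjunctionfree} rewriting uniformly to every delta-formula of the new program (those of the $R$, those of the copies $\Rh_1$, and the static ones), replacing a positive literal over $S$ by the padded copy $\Sh_1$ and letting $T$ select the genuine $c$-slice for the chosen disjunct. Crucially this rewriting refers back only to the $\Rh_1$'s and the $T$'s, so there is no regress: the copies are never themselves copied.

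Finally, as in Lemma~\ref{lemma:disjunctionfree}, the padding argument needs $|\domain|\ge 2$, so the single-element case must be treated separately: on a one-element domain every relation is essentially $0$-ary and the program is propositional, so a $\Delta$-variant of Lemma~\ref{lemma:nullary} produces an equivalent disjunction-free $\Delta$-program, and the two constructions are combined by delegating the case distinction on $|\domain|$ to the initialization mapping, just as in the proof of Lemma~\ref{lemma:disjunctionfree}. Correctness follows by the usual induction on the length of the modification sequence, verifying that $\Rh_1$ always satisfies~(\ref{eq:disjunctionfree}) and that each rewritten delta-formula defines the same tuple set as the original.

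I expect the main obstacle to be the interplay between the witness-guessing and possibly empty relations: the whole reason for the padded copies and blueprint relations is that a conjunction cannot find a witness inside an empty relation, and the delicate part is checking that padding together with $T$ always forces the correct disjunct while the copy delta-formulas stay disjunction-free, together with the clean handling of the $|\domain|=1$ boundary and its combination with the main construction.
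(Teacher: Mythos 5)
Your main construction for domains of size at least two --- padded copies $\Rh_1$ satisfying Equation (\ref{eq:disjunctionfree}), static blueprint relations, the observation that under $\Delta$-semantics the off-$c$ slice of a copy is never touched so that the copies' delta-formulas are just the original ones with a marker conjunct, and a $\Delta$-variant of Lemma \ref{lemma:nullary} for one-element domains --- is exactly the paper's proof (the paper's Lemma \ref{lemma:dnullary} plays the role of your $\Delta$-variant of Lemma \ref{lemma:nullary}). Pre-applying Lemma \ref{lemma:deltanegationfree} so that only positive literals, and hence only the copies $\Rh_1$, are needed is a harmless and arguably cleaner variant; the paper instead ports the absolute-semantics construction verbatim, negations and $\Rh_2$-copies included.

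The genuine gap is the final step, which you dispose of as combining $\prog'$ and $\prog''$ ``just as in the proof of Lemma \ref{lemma:disjunctionfree}.'' That combination (for cases (b) and (c) there) initializes the relations of whichever program is wrong for the given domain size to full relations and sets $\ufwa{Q'''}{\delta} \df \ufwa{Q'}{\delta} \land \ufwa{Q''}{\delta}$; its correctness rests on the invariant that negation-free update formulas map full relations to full relations, so that the wrong program's query formula is identically true. Under $\Delta$-semantics this invariant fails outright: the new value of a relation is $(R \cup R^+)\setminus R^-$, and a negation-free formula for $R^-$ (say, a single positive atom) fires all the more readily when the relations are full, so the wrong program's relations can shrink after a single modification, its query delta-formulas then need not evaluate to true, and the conjunctive combination starts suppressing genuine changes of $Q'''$. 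This is precisely why the paper does not reuse the combination unchanged but adds a new ingredient: a $0$-ary gate relation $U$, fixed by the initialization according to the domain size and never modified, which is conjoined to every delta-formula of $\prog'$, while $\neg U$ is conjoined to every delta-formula of $\prog''$. The gate freezes the relations of the wrong program at their initial values, and only then does the selection made by the initialization stay valid over the whole run. (In your negation-free setting one would use two complementary positive gate atoms instead of $U$ and $\neg U$; note also that your ``delta-formulas $\bot$'' for the static blueprint relations need the same kind of device, since $\bot$ is not itself a conjunction of atoms.) You did flag the $|\domain|=1$ boundary and its combination with the main construction as delicate, but as written your proposal would instantiate it with a construction that is incorrect; the gate relation is the missing idea.
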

\begin{proof}
  The proof uses the idea from the corresponding Lemma \ref{lemma:disjunctionfree} for absolute semantics. We prove (a) only. The construction for (b) is exactly the same. 

  Let $\prog$ be a $\dDynUCQneg$-program. As in Lemma \ref{lemma:disjunctionfree} we construct two programs $\prog'$ and $\prog''$ equivalent to $\prog$ for domains of size at least two and domains of size one, respectively. The construction of $\prog'$ is exactly the same as the construction for absolute semantics. For the construction of $\prog''$,  Lemma \ref{lemma:dnullary} (see below) is used.

  A $\dDynCQneg$-program $\prog'''$ is obtained from $\prog'$ and $\prog''$ by using a modification of the construction used for the cases (b) and (c) in Lemma \ref{lemma:disjunctionfree}. 

  In order to delegate the case distinction to the initialization mapping, we use an additional $0$-ary relation symbol $U$ to ensure that interpretations of relations $R'' \in \schema''$ never change for domains of size a least two and, analogously, interpretations of relations $R' \in \schema'$ never change for domains of size one. 

  To achieve this, $U$ is interpreted by true if and only if the domain is of size at least two and the update formulas of $\prog'$ and $\prog''$ are slightly modified as follows.

  Update formulas $\ufwa{R'^+}{\delta}$ and $\ufwa{R'^-}{\delta}$ of a relation symbol $R' \in \schema'$ in program $\prog'$ are replaced in $\prog'''$ by $\ufwa{R'^+}{\delta} \land U$ and $\ufwa{R'^-}{\delta} \land U$. Hence the interpretation of $R'$ changes only for domains of size at least two.

  Similarly,  update formulas $\ufwa{R''^+}{\delta}$ and $\ufwa{R''^-}{\delta}$ of a relation symbol $R'' \in \schema''$ in program $\prog''$ are replaced in $\prog'''$ by $\ufwa{R''^+}{\delta} \land \neg U$ and $\ufwa{R''^-}{\delta} \land \neg U$. Hence the interpretation of $R''$ changes only for domains of size one.

  The initialization of relation symbols from $\schema' \cup \schema''\cup \{Q'''\}$ is as in \mbox{Lemma \ref{lemma:disjunctionfree}}, and $U$ is initialized as true if and only if $|\domain| = 1$.
\end{proof}

\begin{lemma} \label{lemma:dnullary}
  Every query on a $0$-ary database can be maintained by a $\dDynAND$-program.
\end{lemma}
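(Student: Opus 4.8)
The plan is to follow the construction of Lemma~\ref{lemma:nullary} and split each absolute update formula into an $R^+$/$R^-$ pair. Recall from that proof that, for an input schema with $0$-ary symbols $A_1,\ldots,A_k$, one can maintain \emph{all} $m=2^{2^k}$ queries $\calQ_1,\ldots,\calQ_m$ simultaneously by a single program with auxiliary relations $R_1,\ldots,R_m$ (one $R_i$ for each $\calQ_i$), where each absolute update formula $\ufwa{R_j}{\delta}$ is a \emph{conjunction of positive $0$-ary atoms}. Two features of that setting are crucial here: the family of maintained queries is closed under complementation, so for every $j$ there is an index $\bar{j}$ with $\calQ_{\bar{j}}=\neg\calQ_j$; and, in every reachable state, the invariant $R_{\bar{j}}=\neg R_j$ holds (each $R_i$ holding the current value of $\calQ_i$).

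First I would define the $\Delta$-update formulas so that each relation $R_j$ performs exactly the transition dictated by its new value. Since $R_j$ is $0$-ary, the empty tuple must be added precisely when $\calQ_j$ switches from false to true, and removed precisely when it switches from true to false. Using $R_{\bar{j}}$ in place of $\neg R_j$ (the idea underlying Lemma~\ref{lemma:deltanegationfree}) together with the absolute update formulas of Lemma~\ref{lemma:nullary}, I would set
\begin{align*}
  \ufwa{R_j^+}{\delta} &\df R_{\bar{j}} \land \ufwa{R_j}{\delta}, &
  \ufwa{R_j^-}{\delta} &\df R_j \land \ufwa{R_{\bar{j}}}{\delta}.
\end{align*}
Both are conjunctions of positive atoms, hence $\PropCQ$-formulas, so the resulting program is a $\dDynAND$-program; no disjunction or negation is introduced. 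The clause relations $R_C$ referenced inside $\ufwa{R_j}{\delta}$ are themselves among the $R_i$ and are maintained by the same scheme (the index $\bar{j}$ likewise yields a maintained relation), and the initialization sets each $R_i$ to the value of $\calQ_i$ on the initial database.

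Correctness is then a routine induction on the length of the modification sequence. Assuming $R_i=\calQ_i$ and $R_{\bar{i}}=\neg R_i$ in the current reachable state, a four-way case analysis on the old value $R_j$ and the new value $\ufwa{R_j}{\delta}$ shows that $(R_j\cup R_j^+)\setminus R_j^-$ equals the post-modification value of $\calQ_j$: when $R_j$ is false and the new value is true, $R_j^+$ fires and $R_j^-$ does not; when $R_j$ is true and the new value is false, $R_j^-$ fires and $R_j^+$ does not; and in the two ``no change'' cases both $R_j^+$ and $R_j^-$ evaluate to false. This simultaneously re-establishes the invariant $R_{\bar{j}}=\neg R_j$ for the next step.

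I expect the only genuine subtlety to be the disjointness requirement on $\ufwa{R_j^+}{\delta}$ and $\ufwa{R_j^-}{\delta}$ from the definition of $\Delta$-update programs: since both formulas are positive conjunctions, they cannot be rendered contradictory by syntax alone. The point to stress is that disjointness holds on all reachable states, because $\ufwa{R_j^+}{\delta}$ forces $R_{\bar{j}}$ (equivalently $\neg R_j$) while $\ufwa{R_j^-}{\delta}$ forces $R_j$, and the maintained invariant $R_{\bar{j}}=\neg R_j$ makes these mutually exclusive --- exactly the way the complement relations guarantee consistency in Lemma~\ref{lemma:deltanegationfree}. The same invariant also yields the additional properties $R_j^+\cap R_j=\emptyset$ and $R_j^-\subseteq R_j$ noted after the definition.
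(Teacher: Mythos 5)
Your proof is correct, and it reaches the goal by a genuinely different route than the paper. The paper also starts from the program of Lemma~\ref{lemma:nullary} and the generic split $\ufwa{R_i^+}{\delta} \df \ufwa{R_i}{\delta} \land \neg R_i$, $\ufwa{R_i^-}{\delta} \df \neg \ufwa{R_i}{\delta} \land R_i$, but it handles the offending negation $\neg \ufwa{R_i}{\delta}$ by building a \emph{second} program $\prog^\lor$ that maintains the same queries with update formulas that are \emph{disjunctions} of atoms; negating those gives, via De Morgan, conjunctions of negated atoms, so both $\Delta$-formulas become quantifier-free conjunctions of literals, and a final appeal to Lemma~\ref{lemma:deltanegationfree} removes the negations. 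You avoid both the dual program and the negation-removal pass by exploiting that the maintained family $\calQ_1, \ldots, \calQ_m$ consists of \emph{all} queries and is hence closed under complement: the conjunctive update formula $\ufwa{R_{\bar{j}}}{\delta}$ of the complement query is already a positive $\PropCQ$-formula expressing that the new value of $\calQ_j$ is false, which is exactly the information the paper extracts from $\neg \ufwa{R_i^\lor}{\delta}$. Your construction is thus more economical and self-contained; the paper's is more modular, since the De Morgan trick and Lemma~\ref{lemma:deltanegationfree} are generic tools that do not depend on the fact that all $2^{2^k}$ queries are maintained simultaneously. Concerning the disjointness requirement, which you rightly flag: your pair of formulas is disjoint only on reachable states rather than unsatisfiable as the letter of Definition~($\Delta$-Update program) demands; note, however, that the paper ends up in the same position, because its intermediate formulas (which do contain the unsatisfiable conjunct $R_i \land \neg R_i$) lose this property once Lemma~\ref{lemma:deltanegationfree} replaces $\neg R_i$ by the complement relation $\Rh_i$ --- so your explicit reachable-state argument is, if anything, more careful than the paper's silent treatment of the same point.
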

\begin{proof}
  Let $\inpSchema$ be an input schema with $0$-ary relation symbols $A_1, \ldots, A_k$. Further let $\calQ_1, \ldots, \calQ_{m}$ be an enumeration of all $m = 2^{2^k}$ many queries on $\inpSchema$. As in Lemma \ref{lemma:nullary} we show that all of them can be maintained by one $\dDynAND$-program $\prog$ with auxiliary schema $\auxSchema = \{R_1, \ldots, R_{m}\}$ maintaining $\calQ_i$ in $R_i$, for every $i \in\{1,\ldots,m\}$.

  Our goal is to re-use the program constructed in Lemma \ref{lemma:nullary} and the translation 
  \begin{align*}
    \uf{R_i^+}{\delta}{\vec u}{\vec x}& = \uf{R_i}{\delta}{\vec u}{\vec x} \land \neg R_i(\vec x) \\
    \uf{R_i^-}{\delta}{\vec u}{\vec x}& = \neg \uf{R_i}{\delta}{\vec u}{\vec x} \land R_i(\vec x)
  \end{align*}
  Yet $\neg \ufwa{R_i}{\delta}$ does not yield a $\DynAND$-formulas immediately.

  The idea to solve this issue is to use two dynamic programs $\prog^\land$ and $\prog^\lor$ that both maintain all queries $\query_i$ in their auxiliary relations. The program $\prog^\land$ will be the program from Lemma \ref{lemma:nullary} whereas $\prog^\lor$ will be a \DynProp-program whose update formulas are disjunctions of atoms. Then the update formulas of $R_i$ in $\prog^\lor$ will be used for defining $\ufwa{R_i^-}{\delta}$.

  We make this more precise. By Lemma \ref{lemma:nullary} there is a $\DynAnd$-program $\prog^\land$ over schema $\schema^\land = \{R_1^\land, \ldots, R_m^\land\}$  that maintains $\query_i$ in $R_i^\land$ with conjunctive quantifier-free update formulas. Analogously a dynamic program $\prog^\lor$ over schema $\schema^\lor = \{R_1^\lor, \ldots, R_m^\lor\}$ can be constructed that maintains $\query_i$ in $R_i^\lor$ with disjunctive quantifier-free update formulas.

  Then the update formulas for $R_i$ in $\prog$ are constructed as
  \begin{align*}
      \ufwa{R^+_i}{\delta}& = \ufsubstitutewa{\schema^\land \rightarrow \schema}{R_i^\land}{\delta} \land \neg R(\vec x) \\
      \ufwa{R^-_i}{\delta}& = \neg \ufsubstitutewa{\schema^\lor \rightarrow \schema}{R^\lor_i}{\delta} \land R(\vec x)
  \end{align*}

  where $\ufsubstitutewa{\schema^\land \rightarrow \schema}{R_i^\land}{\delta}$ is obtained from $\ufwa{R_i^\land}{\delta}$ by replacing symbols $S^\land \in \schema^\land$ by $S \in \schema$, and $\ufsubstitutewa{\schema^\lor \rightarrow \schema}{R_i^\lor}{\delta}$ is obtained from $\ufwa{R_i^\lor}{\delta}$ by replacing symbols $S^\lor \in \schema^\lor$ by $S \in \schema$.

  Those update formulas can be easily written as conjunctions. Negations can be removed by Lemma \ref{lemma:deltanegationfree}.
\end{proof}

\begin{proofof}{Theorem  \ref{theorem:ddynefoequivalences}}
  The equivalence of (a) and (b) as well as of (c) and (d) follows from Lemma \ref{lemma:deltanegationfree}. Statements (a) and (c) are equivalent by Lemma \ref{lemma:ddisjunctionfree}. Further, (a) and (e) are equivalent by definition. 
  The equivalence of (e) and (f) follows immediately by combining Lemmas \ref{lemma:qfotodeltaqfo} and \ref{lemma:deltaefotoefo} with Theorem~\ref{theorem:dynefoequivalences}.
\end{proofof}

  \section{A dynamic characterization of first-order logic}\label{section:dyncharact}
    In this section  we characterize first-order queries as the class of queries maintainable by non-recursive $\UCQneg$-programs and, equivalently, by  non-recursive $\DynQFO[\exists^1]$-pro\-grams. Here $\QFO[\exists^1]$ is the class of queries expressible by first-order formulas in prenex normal form  with at most one existential quantifier and no universal quantifiers, and ``non-recursive'' is explained next. 

A dynamic program is \textit{non-recursive} if it has an acyclic dependency graph (as a directed graph). For every class $\calC$,  \emph{non-recursive} $\DynC$ refers to the set of queries that can be maintained by non-recursive $\DynC$-programs.

The objective of this section is to prove the following theorem.
\begin{theorem}\label{theorem:non-recursive}
  For every query $\query$ the following statements are equivalent
  \begin{enumerate}[(a)]
    \item \label{theorem:non-recursivea} $\query$ can be expressed in $\FO$.
    \item \label{theorem:non-recursiveb} $\query$  can be maintained in non-recursive $\DynFO$.
    \item \label{theorem:non-recursivec}$\query$  can be maintained in non-recursive $\DynQFO[\exists^1]$.
    \item \label{theorem:non-recursived}$\query$  can be maintained in non-recursive $\DynQFO[\forall^1]$.
  \end{enumerate}
\end{theorem}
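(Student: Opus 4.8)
The plan is to establish the cycle \ref{theorem:non-recursivea} $\Rightarrow$ \ref{theorem:non-recursivec} $\Rightarrow$ \ref{theorem:non-recursiveb} $\Rightarrow$ \ref{theorem:non-recursivea} and to obtain \ref{theorem:non-recursived} as a dual of \ref{theorem:non-recursivec}. Two arrows are immediate: since every $\QFO[\exists^1]$- (respectively $\QFO[\forall^1]$-) formula is an \FO-formula, every non-recursive \DynQFO[\exists^1]- or \DynQFO[\forall^1]-program is in particular a non-recursive \DynFO-program, giving \ref{theorem:non-recursivec} $\Rightarrow$ \ref{theorem:non-recursiveb} and \ref{theorem:non-recursived} $\Rightarrow$ \ref{theorem:non-recursiveb}. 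The substance lies in the construction for \ref{theorem:non-recursivea} $\Rightarrow$ \ref{theorem:non-recursivec} and in the converse \ref{theorem:non-recursiveb} $\Rightarrow$ \ref{theorem:non-recursivea}.

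For \ref{theorem:non-recursivea} $\Rightarrow$ \ref{theorem:non-recursivec}, fix an \FO-formula for \query, built from atoms using $\neg,\land,\lor,\exists$, and let $H$ be the height of its parse tree. For each subformula $\psi(\vec x_\psi)$ of \query and each finite sequence $\sigma$ of modification types with $|\sigma|+h(\psi)\le H$, where $h(\psi)$ denotes the height of $\psi$, I would introduce a \emph{look-ahead} relation $R_\psi^\sigma$ intended to store, in every state reached on a database \db modified by a sequence $\alpha$, all pairs $(\vec a_\sigma,\vec b)$ such that $\psi(\vec b)$ holds in the database obtained from $\alpha(\db)$ by additionally performing the future modifications $\sigma$ with parameters $\vec a_\sigma$; the query symbol is $Q=R_{\query}^{()}$. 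Following the squirrel technique of Lemma \ref{lemma:shifting}, the update of $R_\psi^\sigma$ under a modification $\delta_0(\vec a_0)$ is expressed through the look-ahead relations of the \emph{immediate subformulas} of $\psi$ with the sequence lengthened by $\delta_0$: for $\psi=\exists y\,\psi_j$ one sets $\uf{R_\psi^\sigma}{\delta_0}{\vec a_0}{\vec a_\sigma,\vec b}\df \exists y\, R_{\psi_j}^{\delta_0\sigma}(\vec a_0,\vec a_\sigma,\vec b,y)$; for $\psi=\psi_j\land\psi_k$ (and dually for $\lor$) a boolean combination of $R_{\psi_j}^{\delta_0\sigma}$ and $R_{\psi_k}^{\delta_0\sigma}$; for $\psi=\neg\psi_j$ the negated atom $\neg R_{\psi_j}^{\delta_0\sigma}$; and for an atom the quantifier-free combination describing its truth after $\delta_0$ and $\sigma$ directly over the current input relations. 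Each update formula thus uses at most one existential and no universal quantifier. Since every referenced relation belongs to a strictly smaller subformula, the dependency graph is acyclic, so the program is non-recursive; and since passing from $\psi$ to an immediate subformula decreases $h$ by one while lengthening $\sigma$ by one, the quantity $|\sigma|+h(\psi)$ is invariant, so the bound $|\sigma|+h(\psi)\le H$ keeps the relation set finite and closed under the construction. The relations are initialized — using the admissible (indeed \FO-definable) initialization — to the value of $\psi$ after the future sequence $\sigma$ on the initial database, and a routine induction on the length of the modification sequence shows each $R_\psi^\sigma$ keeps its intended meaning, so $Q$ maintains \query.

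For \ref{theorem:non-recursiveb} $\Rightarrow$ \ref{theorem:non-recursivea}, let \prog be a non-recursive \DynFO-program maintaining \query and let $r$ be the length of the longest path in its dependency graph, so its relations split into layers $L_0$ (the input relations), $L_1,\ldots,L_r$, where the update formula of a relation in $L_t$ refers only to layers $<t$. Unrolling the updates $r$ times shows that after any modification sequence of length $\ell\ge r$ the value of $Q$ is a fixed \FO-function $F$ of the input database $r$ steps in the past together with the last $r$ modifications, and in particular no longer depends on the initial auxiliary data. By correctness $F$ therefore computes \query on every database arising as the result of $r$ modifications applied to another database. To convert this into an \FO-definition on an arbitrary $\db'$, one realizes $\db'$ by $r$ harmless modifications: guessing $r$ distinct non-edges and deleting them (a no-op) exhibits $\db'$ both as the final and as the $r$-steps-past database, so $\query(\db')(\vec c)$ holds iff there exist such $r$ deletions with $F$ accepting $\vec c$; the symmetric variant deleting/inserting present edges, plus the finitely many tiny domains treated by explicit \FO-formulas, covers all cases. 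As $F$ and the relativization of the input to ``$\db'$ minus the guessed tuples'' are \FO, this yields an \FO-formula for \query.

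Finally, \ref{theorem:non-recursivea} $\Rightarrow$ \ref{theorem:non-recursived} is obtained by dualizing the construction: maintaining the complements $\widehat{R_\psi^\sigma}$ of the look-ahead relations turns each single-existential update into a single-universal one, since $\neg\exists y\,\varphi$ is $\forall y\,\neg\varphi$, exactly as in the quantifier-switching construction of Lemma \ref{lemma:quantifierswitch}, which preserves acyclicity and hence non-recursiveness; the designated symbol is recovered as the complement of $\widehat{R_{\query}^{()}}$. I expect the main obstacle to be \ref{theorem:non-recursivea} $\Rightarrow$ \ref{theorem:non-recursivec}: the delicate point is to update a look-ahead relation using only look-ahead relations of strictly simpler subformulas — never the just-updated relation itself and never a longer look-ahead of the same subformula — so that the program stays simultaneously non-recursive, single-quantifier, and finite. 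The invariant $|\sigma|+h(\psi)\le H$ is precisely what makes all three hold at once. Extracting a clean \FO-formula in \ref{theorem:non-recursiveb} $\Rightarrow$ \ref{theorem:non-recursivea}, i.e.\ realizing an arbitrary structure by a constant-length sequence of no-op modifications, is the only other point requiring care, but is routine.
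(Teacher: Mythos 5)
Your proposal follows essentially the same route as the paper: your (a)$\Rightarrow$(c) is the paper's Lemma~\ref{lemma:foimpliesdynefo} (the same look-ahead/squirrel relations indexed by future modification patterns; the paper organizes them along the existential prefix form of the formula rather than the parse tree, but its invariant ``pattern length plus remaining prefix length stays constant'' is exactly your $|\sigma|+h(\psi)\le H$), your (b)$\Rightarrow$(a) is the paper's Lemma~\ref{lemma:nonrecdynfotofo} (unroll the non-recursive program $r$ times, then realize the current database via a no-op modification sequence), and (d) is obtained, as in the paper, from the quantifier-switching construction of Lemma~\ref{lemma:quantifierswitch} after observing that it preserves non-recursiveness.

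Two steps need repair. First, in (b)$\Rightarrow$(a) you realize $\db'$ by guessing $r$ \emph{distinct} non-edges (resp.\ enough distinct edges), and you dispose of the leftover case --- domains too small to contain them --- ``by explicit \FO-formulas''. As written this is circular: an explicit \FO-formula for the query on tiny domains exists only if the query is isomorphism-invariant there, and that is part of what is being proved; nothing you establish beforehand excludes that a non-recursive \DynFO-program with arbitrary initialization maintains a non-generic query on those domains. The paper avoids the case split entirely by repeating a \emph{single} tuple: delete/re-insert one existing edge $m$ times, or insert/delete one non-existing tuple $m$ times; on any non-empty domain at least one of the two kinds of tuple exists, so no tiny-domain case ever arises. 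Your argument is repaired the same way --- simply drop the distinctness requirement. Second, in (d) the phrase ``the designated symbol is recovered as the complement of $\hat{R}_{\query}^{()}$'' does not literally work: if the outermost connective of the formula is $\exists$, the update formula of $\hat{R}_{\query}^{()}$ is a $\forall^1$-formula, and negating it yields an $\exists^1$-formula, not a $\forall^1$-one. The construction of Lemma~\ref{lemma:quantifierswitch} that you invoke resolves exactly this point by first making the update formulas of the query symbol atomic (Lemma~\ref{lemma:shifting}), so that the negation stays quantifier-free; you need that extra atomization step (or, alternatively, run your (a)$\Rightarrow$(c) construction on the universal prefix form of the formula, as the paper's footnote suggests).
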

With respect to the number of quantifiers in update formulas this result is optimal because the first-order definable alternating reachability query on graphs of bounded diameter cannot be maintained with quantifier-free update formulas \cite{GeladeMS12}. Theorem~\ref{theorem:non-recursive} should be compared with the result of \cite{GeladeMS12} that all $\EFO$ queries can be maintained in \DynQF.

Combining Theorem \ref{theorem:non-recursive} with Theorem \ref{theorem:dynefoequivalences} immediately yields the following corollary.
\begin{corollary}
  Every first-order query can be maintained in $\DynCQneg$.
\end{corollary}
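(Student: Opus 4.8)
The plan is simply to chain the two cited theorems. First I would fix an arbitrary first-order query $\query$, so that statement (\ref{theorem:non-recursivea}) of Theorem \ref{theorem:non-recursive} holds for $\query$. Applying the equivalence (\ref{theorem:non-recursivea}) $\Leftrightarrow$ (\ref{theorem:non-recursivec}) of that theorem, I obtain that $\query$ is maintainable by a non-recursive $\DynQFO[\exists^1]$-program.

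The one point requiring a moment's care is a syntactic class inclusion. A $\QFO[\exists^1]$-formula is a first-order formula in prenex normal form with a single existential quantifier and no universal quantifiers, hence in particular an existential first-order formula. Since the quantifier prefix $\exists^1$ is a substring of $\exists^*$, the substring remark preceding Lemma \ref{lemma:shifting} gives $\QFO[\exists^1] \subseteq \QFO[\exists^*] = \EFO$. Dropping the non-recursiveness restriction only enlarges the class of admissible programs, so every non-recursive $\DynQFO[\exists^1]$-program is a fortiori a $\DynEFO$-program; thus $\query$ is maintainable in $\DynEFO$.

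Finally I would invoke Theorem \ref{theorem:dynefoequivalences}, whose statements (3) and (2) assert the equivalence of $\DynEFO$-maintainability and $\DynCQneg$-maintainability. Combining this with the previous step yields that $\query$ can be maintained in $\DynCQneg$, as claimed. There is essentially no obstacle here: once both theorems are available, the corollary follows mechanically by the chain $\FO \to \text{non-recursive } \DynQFO[\exists^1] \to \DynEFO \to \DynCQneg$, the only non-formal ingredient being the inclusion $\QFO[\exists^1] \subseteq \EFO$ together with the observation that discarding the non-recursiveness constraint preserves maintainability.

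\qed
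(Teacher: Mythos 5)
Your proposal is correct and is precisely the paper's own argument: the paper derives the corollary by combining Theorem \ref{theorem:non-recursive} (every $\FO$ query is maintainable by a non-recursive $\DynQFO[\exists^1]$-program, hence in particular by a $\DynEFO$-program) with Theorem \ref{theorem:dynefoequivalences} ($\DynEFO$-maintainability is equivalent to $\DynCQneg$-maintainability). Your two bridging observations --- that $\QFO[\exists^1] \subseteq \EFO$ and that dropping non-recursiveness preserves maintainability --- are exactly the implicit steps behind the paper's word ``immediately.''
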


The rest of this section is devoted to the proof of Theorem \ref{theorem:non-recursive}, more precisely to the equivalence of statements (\ref{theorem:non-recursivea})-(\ref{theorem:non-recursivec}). The equivalence with (\ref{theorem:non-recursived}) follows from Theorem \ref{theorem:dynefoequivalences} and the fact that its proof does not introduce recursion when applied to a non-recursive program.\footnote{Alternatively, the proof of (\ref{theorem:non-recursivea})$\Rightarrow$(\ref{theorem:non-recursivec}) can be easily adapted to show (\ref{theorem:non-recursivea})$\Rightarrow$(\ref{theorem:non-recursived})} It is obvious that (\ref{theorem:non-recursivec}) implies (\ref{theorem:non-recursiveb}). For ease of presentation, we prove the remaining directions (\ref{theorem:non-recursivea})$\Rightarrow$(\ref{theorem:non-recursivec}) and (\ref{theorem:non-recursiveb})$\Rightarrow$(\ref{theorem:non-recursivea}) for the input schema $\inpSchema = \{E\}$ where $E$ is a binary relation symbol. The proofs can be 
easily adapted to general (relational) signatures. 

The proof for \mbox{(\ref{theorem:non-recursivea})$\Rightarrow$(\ref{theorem:non-recursivec})} makes use of the following normal form for \FO.
A formula $\varphi$ is in \emph{existential prefix form} if it has a prefix over $((\neg \exists) | \exists))^*$ and no quantifier occurs after this prefix (e.g.~$\exists x \neg \exists y \neg \big( E(x,x) \rightarrow E(x,y) \big)$ is in existential prefix form with prefix $\exists \neg \exists$). A formula in prefix normal form can be easily translated into existential prefix form by duality of universal and existential quantifiers.  The \emph{prefix length} of a formula in existential normal form is the number of existential and $\neg$-symbols in the maximal prefix ending with $\exists$.

The following example outlines the idea of the construction for the proof \mbox{of (\ref{theorem:non-recursivea})$\Rightarrow$(\ref{theorem:non-recursivec})}. 

\begin{example}
  Consider the query $\query$ defined by 
    \begin{align*}
      \varphi & = \exists x \forall y \big( E(x,x) \rightarrow E(x,y) \big) \\
        & \equiv \exists x \neg \exists y \neg \big( E(x,x) \rightarrow E(x,y) \big)
    \end{align*}
  We construct a non-recursive dynamic $\DynQFO[\exists^1]$-pro\-gram $\prog$ that maintains $\query$ under deletions only (for simplicity). The construction of $\prog$ applies the squirrel technique from Subsection \ref{section:collapse}. It uses a separate auxiliary relation $R_\psi$ for each subformula $\psi$ obtained from $\varphi$ by stripping off a ``quantifier prefix'' from the existential prefix form of $\varphi$. The relation $R_\psi$ reflects the possible states after a sequence of changes whose length equals the number of stripped off $\neg$- and $\exists$-symbols.

In order to update the query relation after the deletion of an edge, we maintain an auxiliary ternary relation\footnote{For simplicity we write $R_1$ instead of $R_{\psi_1}$.} $R_1$ that contains the result of the query \linebreak[4] \mbox{$\psi_1 \df \neg \exists y \neg \big( E(x,x) \rightarrow E(x,y) \big)$} for every choice $a_1$ for $x$ and every (possibly deleted) edge $\vec e_1$, that is $(a_1, \vec e_1) \in R_1$ if and only if
    $$(V, E \setminus \{\vec e_1\}, \{x \mapsto a_1\}) \models \neg \exists y \neg \big(E(x,x) \rightarrow E(x,y)\big).$$
  Then we can define $\uf{Q}{\del}{\vec v_1}{} \df \exists x R_1(x, \vec v_1)$
and it only remains to find a way to update  the relation $R_1$. To this end, we maintain a further relation $R_2$ that contains the result of $\psi_2 \df \exists y \neg \big( E(x,x) \rightarrow E(x,y) \big)$ for every choice $a_1$ for $x$ and all (possibly deleted) edges $\vec e_1, \vec e_2$, that is $(a_1, \vec e_1, \vec e_2) \in R_2$ if and only if
    $$(V, E \setminus \{\vec e_1, \vec e_2\}, \{x \mapsto a_1\}) \models \exists y \neg \big( E(x,x) \rightarrow E(x,y)\big).$$
  Then $\uf{R_1}{\del}{\vec v_1}{x, \vec v_2} \df \neg R_2(x, \vec v_1, \vec v_2)$ and it remains to update the relation $R_2$. Therefore we maintain a relation $R_3$ that contains the result of $\psi_3 = \neg \big( E(x,x) \rightarrow E(x,y) \big)$ for every choice $a_1, a_2$ for $x, y$ and all (possibly deleted) edges $\vec e_1, \vec e_2, \vec e_3$. Then $$\uf{R_2}{\del}{\vec v_1}{x, \vec v_2, \vec v_3} \df \exists y R_3(x, y, \vec v_1, \vec v_2, \vec v_3)$$
  and it remains to update relation $R_3$ via
      $$\uf{R_3}{\del}{\vec v_1}{x,y, \vec v_2, \vec v_3, \vec v_4} \df \neg \big(E'(x,x, \vec v_1, \ldots, \vec v_4) \rightarrow E'(x,y, \vec v_1, \ldots, \vec v_4)\big)$$     
  where $E'$ is the edge relation obtained from $E$ by deleting $\vec v_1$, $\vec v_2$, $\vec v_3$ and $\vec v_4$, that is  $E'(x,y, \vec v_1, \ldots, \vec v_4)$ can be replaced by
    $$ E(x,y) \wedge (x,y) \neq \vec v_1 \wedge \ldots \wedge (x,y) \neq \vec v_4.$$
This completes the description of the program $\prog$ for $\varphi$ which is easily seen to be 
  non-recursive.
\end{example}

    The following definition will be useful in both remaining proofs. For every first-order formula $\varphi$ with $k$ free variables and every sequence $\delta = \delta_1 \ldots \delta_{j}$ over $\{\ins, \del\}$ let $\varphi^E_{\delta_1 \ldots \delta_{j}}$ be a $(k+2j)$-ary formula such that for every graph $G = (V, E)$, every $\vec a \in V^k$ and every instantiation $\alpha = \delta_1(\vec e_1) \ldots \delta_2(\vec e_j)$ of $\delta$ with tuples $\vec e_1, \ldots, \vec e_j \in V^2$:
    $$\text{$\alpha(G) \models \varphi$ if and only if $G \models \varphi_{\delta_1 \ldots \delta_{j}}(\vec a, \vec e_1, \ldots, \vec e_j)$}.$$

  It is straightforward to construct $\varphi^E_{\delta_1 \ldots \delta_{j}}$.    It should be noted that $\varphi^E_{\delta_1 \ldots \delta_{j}}$ can be constructed such that its quantifier-prefix is the same as for $\varphi$. In particular, if
  $\varphi$ is quantifier-free then $\varphi^E_{\delta_1 \ldots \delta_{j}}$ can be constructed quantifier-free as well. For example, if $\delta = \ins \; \del$ and $\varphi(\vec x) = \neg E(\vec x)$ then 
        $$\varphi^E_{\ins \; \del}(\vec x, \vec u_1, \vec u_2) = \neg \Big(\big(E(\vec x) \vee \vec x = \vec u_1\big) \wedge \neg (\vec x = \vec u_2)\Big).$$

  \begin{lemma}\label{lemma:foimpliesdynefo}
      If a query is definable in $\FO$, then it  can be maintained in non-recursive $\DynQFO[\exists^1]$.
  \end{lemma}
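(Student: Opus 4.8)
The plan is to carry out, uniformly for insertions and deletions, the squirrel-style construction illustrated in the preceding example. First I would bring the query formula $\varphi$ into existential prefix form and write its maximal prefix as a sequence of $p$ individual symbols, each either $\exists$ or $\neg$, where $p$ is the prefix length. Stripping these symbols one at a time yields subformulas $\psi_0=\varphi,\psi_1,\ldots,\psi_p$ in which $\psi_{i-1}$ arises from $\psi_i$ by prepending one prefix symbol (either $\exists z$ or $\neg$) and $\psi_p$ is the quantifier-free matrix. Let $k_i$ denote the number of free variables of $\psi_i$.

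For each level $i\in\{0,\ldots,p\}$ and each type sequence $\vec\delta=(\delta_1,\ldots,\delta_i)\in\{\ins,\del\}^i$ I introduce an auxiliary relation $R_{i,\vec\delta}$ of arity $k_i+2i$, with $R_{0,()}$ being the query relation $Q$, and I maintain the invariant that in every state with current graph $H$,
$$(\vec a,\vec e_1,\ldots,\vec e_i)\in R_{i,\vec\delta}\iff \delta_i(\vec e_i)\cdots\delta_1(\vec e_1)(H)\models\psi_i(\vec a).$$
These relations are seeded by the initialization mapping, which sets $R_{i,\vec\delta}$ to the first-order definable relation $(\psi_i)^E_{\vec\delta}$ evaluated on the input graph; since arbitrary initialization mappings are permitted by Definition \ref{definition:dync}, this causes no difficulty.

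The update formulas are read off the shape of $\psi_i$. After a modification $\delta_0(\vec v_0)$, for $0\le i<p$ with $\psi_i=\exists y\,\psi_{i+1}$ I use the single-existential formula
$$\uf{R_{i,\vec\delta}}{\delta_0}{\vec v_0}{\vec a,\vec e_1,\ldots,\vec e_i}\df \exists y\,R_{i+1,(\delta_0,\delta_1,\ldots,\delta_i)}(\vec a,y,\vec v_0,\vec e_1,\ldots,\vec e_i),$$
while for $\psi_i=\neg\psi_{i+1}$ I use the quantifier-free $\neg R_{i+1,(\delta_0,\delta_1,\ldots,\delta_i)}(\vec a,\vec v_0,\vec e_1,\ldots,\vec e_i)$; the key point is that prepending $\delta_0$ to the type sequence corresponds exactly to evaluating $\psi_{i+1}$ on $\delta_i(\vec e_i)\cdots\delta_1(\vec e_1)\delta_0(\vec v_0)(H)$, which is the graph on which the new value of $R_{i,\vec\delta}$ must be read. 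For the bottom level, $R_{p,\vec\delta}$ is updated by $(\psi_p)^E_{\delta_0\delta_1\cdots\delta_p}(\vec a,\vec v_0,\vec e_1,\ldots,\vec e_p)$, which is quantifier-free because $\psi_p$ is. Thus every update formula lies in $\QFO[\exists^1]$; and since a level-$i$ relation is updated using only level-$(i+1)$ relations (and level $p$ only the input $E$), the dependency graph is acyclic, so the resulting program is non-recursive $\DynQFO[\exists^1]$.

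Correctness is then a routine induction on the length of the modification sequence: the base case is the initialization, and the inductive step verifies the displayed invariant for each of the three update-formula shapes using the invariant at the previous state. The main thing to get right — and essentially the only delicate point — is the bookkeeping: one must check that the stored type sequence $\vec\delta$, the prepended modification $\delta_0$, and the order of the edge arguments all line up so that the relation on the right-hand side indeed stores $\psi_{i+1}$ evaluated on exactly $\delta_i(\vec e_i)\cdots\delta_1(\vec e_1)\delta_0(\vec v_0)(H)$. Once the indexing convention is fixed this is immediate; in particular, handling the $\neg$-symbols of the prefix by a single negated-atom update (rather than by separately maintaining complement relations) is precisely what keeps the update formulas quantifier-free in those steps. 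As the paper notes for its other proofs, the argument is stated for $\inpSchema=\{E\}$ but adapts verbatim to arbitrary relational input schemas.
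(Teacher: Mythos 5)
Your proposal is correct and is essentially the paper's own proof: the paper likewise strips the existential prefix symbol by symbol and maintains, for each resulting subformula and each modification-type sequence of matching length, a relation $R^{\psi}_{\delta_0\ldots\delta_j}$ storing the precomputed results, with an $\exists^1$-atom update for a stripped $\exists$, a negated-atom update for a stripped $\neg$, and the quantifier-free formula $\varphi^E_{\delta_0\ldots\delta_j}$ at the matrix level. The only difference is presentational: the paper packages the construction as an induction on prefix length over the generalized statement for $\varphi_{\delta_1\ldots\delta_j}$, whereas you define the whole relation family at once and verify the invariant by induction on the modification sequence.
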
 
  \begin{proof}

    Inductively over the length of the prefix of a formula $\varphi$ in existential prefix form, we prove that, for every finite sequence $\delta_1 \ldots \delta_{j}$, the query defined by $\varphi_{\delta_1 \ldots \delta_{j}}$ is maintainable in non-recursive $\DynQFO[\exists^1]$. The claim follows by setting $j = 0$. We construct dynamic programs where the result of the query defined by $\varphi_{\delta_1 \ldots \delta_j}$ is stored in the relation $R^\varphi_{\delta_1 \ldots \delta_{j}}$. 

    For a formula $\varphi$ with a prefix of length 0 (i.e.~a quantifier-free formula), we define  
      $$\uf{R^\varphi_{\delta_1 \ldots \delta_{j}}}{\delta_{0}}{\vec v_0}{\vec y, \vec v_1, \ldots, \vec v_j} \df \varphi^E_{\delta_{0} \ldots \delta_{j}}(\vec y, \vec v_0, \ldots, \vec v_j)$$
    where $\varphi^E_{\delta_{0} \ldots \delta_{j}}$ is as defined above  (in the quantifier-free case).

    For the induction step, let $\varphi$ be a formula of prefix length $i$. By induction hypothesis, every query defined by $\psi_{{\delta_1} \ldots \delta_{j}}$ where $\psi$ has prefix length $i-1$, can be maintained in non-recursive $\DynQFO[\exists^1]$, for every sequence $\delta_1 \ldots \delta_{j}$ of modifications. 

    We distinguish the two cases $\varphi(\vec y) = \exists x \psi(x, \vec y)$ and $\varphi(\vec y) = \neg \gamma(\vec y)$. If $\varphi(\vec y) = \exists x \psi(x, \vec y)$ then the dynamic program for $\varphi$ and $\delta_1 \ldots \delta_{j}$ has auxiliary relations $R^\psi_{\delta_0 \ldots \delta_{j}}$ for $\delta_0 \in \{\ins, \del\}$ containing the result of the query $\psi_{{\delta_0} \ldots \delta_{j}}$. Further, 
      $$\uf{R^\varphi_{\delta_1 \ldots \delta_{j}}}{\delta_{0}}{\vec v_0}{\vec y, \vec v_1, \ldots, \vec v_j} \df \exists x R^\psi_{\delta_0 \ldots \delta_{j}}(x, \vec y, \vec v_0, \ldots, \vec v_{j}).$$
    If $\varphi(\vec y) = \neg \gamma(\vec y)$ then the dynamic program for $\varphi$ and $\delta_1 \ldots \delta_{j}$ has auxiliary relations $R^\gamma_{\delta_0 \ldots \delta_{j}}$ for $\delta_0 \in \{\ins, \del\}$ containing the result of the query $\gamma_{{\delta_0} \ldots \delta_{j}}$. Further,
      $$\uf{R^\varphi_{\delta_1 \ldots \delta_{j}}}{\delta_{0}}{\vec v_0}{\vec y, \vec v_1, \ldots, \vec v_j} \df \neg R^\gamma_{\delta_0 \ldots \delta_{j}}(\vec y, \vec v_0, \ldots, \vec v_{j}).$$
    This yields a non-recursive $\QFO[\exists^1]$-program, for every $\varphi_{\delta_1 \ldots \delta_{j}}$. 
  \end{proof}

  We now turn towards proving the implication (\ref{theorem:non-recursiveb})$\Rightarrow$(\ref{theorem:non-recursivea}) in Theorem~\ref{theorem:non-recursive}. The following notion will be useful. A \emph{topological sorting} of a graph $(V, E)$ is a sequence $v_1, \ldots, v_n$ such that every vertex from $V$ occurs exactly once and   $i > j$ for all edges $(v_i, v_j) \in E$. Every acyclic graph has a topological sorting. In particular, if $R_1, \ldots, R_m$ is a topological sorting of the dependency graph of a non-recursive dynamic program $\prog = (P, \init, Q)$ then update formulas for $R_1$ do only contain relation symbols from $\inpSchema$. Further we can assume, without loss of generality, that $R_m = Q$. We say that $R_i$ is on the $i$th level of the dependency graph.

  \begin{lemma} \label{lemma:nonrecdynfotofo}
    If a query  can be maintained in non-recursive $\DynFO$, then it can be expressed in $\FO$.
  \end{lemma}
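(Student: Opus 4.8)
The plan is to exploit non-recursiveness to show that, after sufficiently many modifications, the content of every auxiliary relation — and in particular of the query relation $Q$ — is determined by the \emph{current} input database together with only the \emph{last few} modification tuples, independently of the initialization mapping and of all earlier history. Since $\prog = (P,\init,Q)$ is non-recursive, I would fix a topological sorting $R_1, \ldots, R_m = Q$ of its dependency graph and argue by induction on the level $i$, using the $\varphi^E_{\delta_1 \ldots \delta_{j}}$ construction from above to apply modifications forward inside $\FO$.

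Concretely, for each level $i$ and each type-sequence $\sigma = \delta_1 \cdots \delta_i \in \{\ins,\del\}^i$ I would build an $\FO$-formula $\chi^i_\sigma(\vec y; \vec u_1, \ldots, \vec u_i)$ over $\inpSchema$ with the following property: whenever a state $\state$ is reached from a state with input database $H$ by applying $\delta_1(\vec e_1)\cdots\delta_i(\vec e_i)$, then $\vec a \in R_i^\state$ iff $H \models \chi^i_\sigma(\vec a; \vec e_1, \ldots, \vec e_i)$. For $i=1$ the update formula $\uf{R_1}{\delta_1}{\vec u_1}{\vec y}$ already contains only symbols of $\inpSchema$, so it \emph{is} the desired formula (evaluated on $H$). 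For $i>1$ I would take $\uf{R_i}{\delta_i}{\vec u_i}{\vec y}$ — which, by the topological order, refers only to $R_1, \ldots, R_{i-1}$ and to $E$ — and replace each atom $R_j(\vec w)$ by $\chi^{j}_{\delta_1\cdots\delta_{i-1}}(\vec w; \vec u_1, \ldots, \vec u_{i-1})$ (induction hypothesis, for the state reached after the first $i-1$ modifications, whose input is $(\delta_1\cdots\delta_{i-1})(H)$), and each input atom $E(\vec w)$ by the quantifier-free formula obtained from $E(\vec w)$ via the $\varphi^E_{\delta_1\cdots\delta_{i-1}}$ construction, which expresses membership in $E$ after applying the first $i-1$ modifications to $H$. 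The quantifiers of the update formula are left untouched, as the domain is fixed. The key point — and this is precisely where acyclicity is used — is that the recursion bottoms out at level $1$, which consults only the input; hence the initial auxiliary data is never read, so the resulting formula is correct no matter what $\init$ produced.

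Setting $i = m$ gives $\vec a \in Q^\state$ iff $H \models \chi^m_\sigma(\vec a; \vec e_1, \ldots, \vec e_m)$, where $H$ is the input $m$ steps before $\state$. To read off the static query on an arbitrary database $\db^*$, I would apply a \emph{stuttering} modification sequence of length $m$ that leaves the input equal to $\db^*$ at every step: if $\db^*$ has a non-edge $\vec e$, delete $\vec e$ $m$ times; if $E^{\db^*}$ is full, insert any edge $\vec e$ $m$ times. In either case the input stays $\db^*$ throughout, so the anchor satisfies $H = \db^*$ and the whole sequence $\alpha$ satisfies $\alpha(\db^*)=\db^*$; by correctness of $\prog$ we get $Q^\state = \query(\db^*)$. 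All admissible choices of $\vec e$ yield the same answer, so the query is defined by
\[
  \Psi(\vec y) \;:=\; \exists \vec e\big(\neg E(\vec e) \wedge \chi^m_{\del^m}(\vec y; \vec e, \ldots, \vec e)\big)\ \vee\ \big(\forall \vec e\, E(\vec e)\ \wedge\ \exists \vec e\, \chi^m_{\ins^m}(\vec y; \vec e, \ldots, \vec e)\big),
\]
an $\FO$-formula over $\inpSchema$ (the single degenerate case $|D|=0$, where the query takes a fixed value, can be hard-coded via a constant guarded by $\neg\exists x\,(x=x)$).

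I expect the main obstacle to be the bookkeeping in the inductive step: correctly tracking which input database each subformula is anchored to, and verifying that substituting the lower-level formulas $\chi^{j}_{\delta_1\cdots\delta_{i-1}}$ together with the edge-modification gadget reproduces exactly the semantics of $\uf{R_i}{\delta_i}{}{}$ on the intermediate state. The conceptually delicate — though technically short — step is the stuttering argument at the end, which discharges the extra modification-tuple parameters $\vec e_1,\ldots,\vec e_m$ by forcing a history that returns to $\db^*$ while remaining $\FO$-describable; this is what converts the "bounded-history" normal form into a genuine static $\FO$ definition.
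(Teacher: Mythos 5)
Your proposal follows essentially the same route as the paper's proof: unfold the update formulas along a topological sorting $R_1,\ldots,R_m=Q$ of the dependency graph into $\FO$-formulas over the input schema that precompute each auxiliary relation after a modification sequence, then discharge the modification-tuple parameters by a stuttering sequence that keeps the input database fixed. Your endgame differs only cosmetically from the paper's: you use pure no-op modifications (deleting a non-edge $m$ times, or inserting an already present edge $m$ times when the graph is complete), whereas the paper deletes and re-inserts an existing edge (or inserts and deletes a non-edge) $m$ times; both are legitimate, since a dynamic program must be correct on arbitrary modification sequences, no-ops included.

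There is, however, one genuine repair needed in your induction. You define $\chi^{i}_{\sigma}$ only for sequences $\sigma$ of length exactly $i$, but your inductive step for level $i$ substitutes $\chi^{j}_{\delta_1\cdots\delta_{i-1}}(\vec w;\vec u_1,\ldots,\vec u_{i-1})$ for every atom $R_j(\vec w)$ with $j<i$ --- a formula whose subscript has length $i-1$, which is undefined in your scheme whenever $j<i-1$. This is the generic situation, not a corner case: a topological sorting only guarantees that the update formulas of $R_i$ mention relations of \emph{strictly smaller} level, not of level exactly $i-1$, so already an update formula for $R_3$ that mentions $R_1$ breaks the step. The fix is precisely the paper's formulation: define, by induction on the level $i$, formulas $\chi^{i}_{\delta_1\cdots\delta_j}$ for \emph{every} length $j\ge i$. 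The same substitution construction works: the step for level $i$ and length $j$ needs level-$i'$ formulas ($i'<i$) at length $j-1$, and indeed $j-1\ge i-1\ge i'$; level-$1$ formulas for lengths $j>1$ are obtained by replacing each $E$-atom $E(\vec w)$ with $\varphi^E_{\delta_1\cdots\delta_{j-1}}(\vec w,\vec u_1,\ldots,\vec u_{j-1})$. With this strengthened induction hypothesis the rest of your argument --- including the correct anchoring of $E$-atoms to the state \emph{before} the current modification, and the stuttering argument that forces $H=\db^*$ --- goes through.
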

  \begin{proof}
      Consider a non-recursive dynamic $\DynFO$-program $\prog = (P, \init, Q)$ over input schema $\{E\}$. 
      
      We start with some intuition. Let $R$ be an auxiliary relation of $\prog$ which is (for simplicity) on the first layer of the topological sorting of the dependency graph of $\prog$. That is, the update formulas $\ufwa{R}{\ins E}$ and $\ufwa{R}{\del E}$ of $R$ depend on the input relations only. There is no a priori upper bound on the complexity of the initialization process for $R$. However, after one modification step the relation is redefined via one of the first-order update formulas $\ufwa{R}{\ins E}$ or $\ufwa{R}{\del E}$ which only use atoms over the input relations. Similarly, the auxiliary relations on higher levels of the dependency graph depend in a first-order fashion from the input structure after a constant number of modification steps. This is exploited in the proof.

     More technically, the proof idea is as follows. For every modification pattern $\delta=\delta_1 \ldots \delta_j$ and every auxiliary relation $R$, a first-order formula $\varphi^{R}_\delta$ is constructed that ``precomputes'' the state of $R$ for every possible modification sequence with the pattern $\delta$. Thanks to non-recursiveness, once $\delta$ is longer than the number of auxiliary relations, the formula $\varphi^{R}_\delta$ can only use relations from the input schema. That is, it is just a first-order formula over $\inpSchema$. We get the desired first-order formula for $Q$ by choosing in $\varphi^{R}_\delta$ a sufficiently long modification sequence $\delta$ (by repeatedly inserting and deleting the same tuple).

      We make this more precise now. Let $\query$ be a query which  can be maintained by a non-recursive $\DynFO$-program $\prog = (P, \init, Q)$ over schema $\schema = \inpSchema \cup \auxSchema$. We assume for simplicity that $\inpSchema=\{E\}$, for a binary symbol $E$. We let $R_0 \df E$ and assume that the auxiliary relations $R_1, \ldots, R_m$ are enumerated with respect to a topological sorting of the dependency graph of $\prog$ with $R_m = Q$.

    We define inductively, by $i$, for every sequence $\delta_1\ldots \delta_j$ with
  $j \geq i$, first-order formulas $\varphi^{R_i}_{\delta_1 \ldots \delta_j}(\vec y, \vec x_1, \ldots, \vec x_j)$ over schema $\inpSchema = \{E\}$ such that $\varphi^{R_i}_{\delta_1 \ldots \delta_j}$ defines $R_i$ after modifications $\delta_1(\vec x_1) \ldots \delta_j(\vec x_j)$. More precisely $\varphi^{R_i}_{\delta_1 \ldots \delta_j}$ will be defined such that for every state $\calS = (V, E^\calS, \aux^\calS)$ of $\prog$ and every sequence $\delta = \delta_1(\vec a_1)\ldots \delta_j(\vec a_j)$ of modifications the following holds:
      \begin{equation} \label{eq:formula}
        \updateRelation{\prog}{\delta}{\calS}{R_i} = \{\vec b \mid (V, E) \models \varphi^{R_{i}}_{\delta_1 \ldots \delta_{j}}(\vec b, \vec a_1, \ldots, \vec a_j)\}     
      \end{equation}

  Here $\updateRelation{\prog}{\delta}{\calS}{R_i}$ denotes the relation stored in $R_i$ in state $\updateState{\prog}{\delta}{\calS}$. For $R_0 = E$ the formula $\varphi^{E}_{\delta_1 \ldots \delta_j}$ is as defined before the previous lemma. For $R_i$ with $i \geq 1$ the formula $\varphi^{R_i}_{\delta_1 \ldots \delta_j}(\vec y, \vec x_1, \ldots, \vec x_j)$ is obtained from the update formula $\uf{R_i}{\delta_j}{\vec x_j}{\vec y}$ of $R_i$  by substituting all occurrences of $R_{i'}(\vec z)$ by $\varphi^{R_{i'}}_{\delta_{1} \ldots \delta_{j-1}}(\vec x_{1}, \ldots, \vec x_{j-1}, \vec z)$ for all $i' < i$. Using induction over $i$, one can prove that the formulas $\varphi^{R_i}_{\delta_1 \ldots \delta_j}$ satisfy Equation \ref{eq:formula}.
  As $\prog$ is non-recursive, each formula $\varphi^{R_i}_{\delta_1 \ldots \delta_j}$ with $j\ge i$ is over schema $\{E\}$.

    The first-order formula $\varphi$ for $\query$ over schema $\inpSchema = \{E\}$ can be constructed as follows. The formula ``guesses'' a tuple $\vec a \in E$, deletes and inserts it $m$ times and applies  $\varphi^{R_m}_{(\del \; \ins)^m}$ to the result (which is identical to the current graph), or (for the case that $E$ is empty) it guesses  a tuple $\vec a \not\in E$, inserts and deletes it $m$ times and applies  $\varphi^{R_m}_{( \ins\;\del)^m}$ to the result.

  More precisely, $\varphi$ for  $\query$ is defined by
      \begin{multline*}
        \varphi(\vec y) \df \exists \vec x \big( (E(\vec x) \wedge \varphi^{R_m}_{(\del \; \ins)^m}(\vec y, \underbrace{\vec x,\vec x, \ldots, \vec x}_{2m-\text{times}})) \\
        \vee  (\neg E(\vec x) \wedge \varphi^{R_m}_{(\ins \; \del)^m}(\underbrace{\vec y, \vec x,\vec x, \ldots, \vec x}_{2m-\text{times}}) )\big).
      \end{multline*}\end{proof}

  \section{Discussion and Future Work}\label{section:conclusion}
     
In this work, we studied dynamic conjunctive queries. We have shown that, contrary to the static setting, many fragments collapse in the dynamic world. Further we proved that $\DynCQ$ captures $\DynQF$ which implies that $\DynCQ$ is strictly larger than $\DynProp$. Moreover a close connection between absolute semantics and $\Delta$-semantics for conjunctive que\-ries has been established. These results were summarized in \mbox{Figure \ref{figure:collapse}}. Finally, it has been shown that dynamic conjunctive queries with negations capture (static) first-order logic.

All results are for arbitrary initialization mappings. However, they also hold in the setting with first-order definable initialization mappings. They do not carry over when the initialization mapping and updates have to be definable in the same class.

Although we have a good picture of the relationship of the various fragments now, it remains open whether the remaining classes $\DynCQ$, $\DynCQneg$ and $\DynFO$ can be separated or collapsed.

In addition to untangling the remaining variations of conjunctive queries, the dynamic quantifier hierarchy and quantifier alternation hierarchy, respectively, deserve a closer look. Lemma \ref{lemma:quantifierswitch} shows that in the dynamic setting the $\Sigma_i$- and $\Pi_i$-fragment of first-order logic coincide. Whether there is a strict $\Sigma_i$-hierarchy remains open. Furthermore, the equivalence of $\EFO$ with absolute and $\Delta$-semantics does not immediately translate to fragments of $\FO$ with alternating quantifiers (although one of the direction does, see \mbox{Lemma \ref{lemma:qfotodeltaqfo}}).

Capturing first-order logic by dynamic conjunctive queries with negations does not immediately yield performance gains (since a first-order query with $k$ quantifiers is translated to a dynamic $\DynCQneg$-program of arity at least $k$). In future work we plan to study whether the work that has been started here can be used to improve the performance of query maintenance.

 \bibliographystyle{plain}
 \bibliography{bibliography}

\end{document}